\newtheorem{theorem}{Theorem}
\newtheorem{lemma}[theorem]{Lemma}
\newtheorem{corollary}[theorem]{Corollary}
\newtheorem{remark}[theorem]{Remark}
\newtheorem{definition}[theorem]{Definition}
\newtheorem{example}[theorem]{Example}
\newcommand{\Pa}{\mathcal{P}}
\DeclareMathOperator*{\argmax}{arg\,max}
\newtheorem{observation}[theorem]{Observation}
\title{Network Investment Games with Wardrop Followers}
\author{Daniel Schmand\thanks{Goethe University Frankfurt, Germany. \texttt{schmand@em.uni-frankfurt.de}.}
 \and Alexander Skopalik\thanks{University of Twente, Netherlands. \texttt{a.skopalik@utwente.nl}.}
\and Marc Schr\"{o}der\thanks{RWTH Aachen University, Germany.\texttt{marc.schroeder@oms.rwth-aachen.de}.}}
\begin{document}
\maketitle

\begin{abstract}
We study a two-sided network investment game consisting of two sets of players, called providers and users. The game is set in two stages. In the first stage, providers aim to maximize their profit by investing in bandwidth of cloud computing services. The investments of the providers yield a set of usable services for the users. In the second stage, each user wants to process a task and therefore selects a bundle of services so as to minimize the total processing time. We assume the total processing time to be separable over the chosen services and the processing time of each service to depend on the utilization of the service and the installed bandwidth. We provide insights on how competition between providers affects the total costs of the users and show that every game on a series-parallel graph can be reduced to an equivalent single edge game when analyzing the set of subgame perfect Nash equilibria.
\end{abstract}

\section{Introduction}
With the  increasing availability  of  a fast and reliable internet connection, the demand for and the importance of cloud-computing services is heavily growing. That is why large IT-companies such as  Amazon, Google or Microsoft have been investing massively in the improvement of their cloud services in recent years. New high-speed cables are being placed, high performance hardware has been installed, new software has been developed, and machine numbers have been scaled up  \cite{hecht2016bottle}. Interestingly, prices for these services are typically charged per usage time (see for example, Amazon Web Services \cite{amazon}, Microsoft Azure \cite{azure} and Google Cloud \cite{google}). This might a-priory lead to a situation, where a non-optimal infrastructure takes longer to process the users' tasks, and thus even leads to a higher income for the corresponding provider. This paper addresses this issue by means of a theoretical model and provides some results on how competition between providers helps to prevent this behavior of providers.

We study the problem by means of a two-stage game, which we call a network investment game. In the first stage, a set of providers invests in bandwidths of services that are used by users in the second stage to process their task. These services could be different types of on-demand cloud computing, storage, data-base, security or media services. We assume that these services are either complements or substitutes to each other. This implies that the set of feasible bundles of services for the users can be represented by means of source-sink paths in a series-parallel network. Two services that are complementary can be modeled as two edges that are connected in series, whereas two services that are substitutes to each other can be modeled by means of two parallel edges. A similar idea has been used by Correa et al.\ in \cite{correa2014sensitivity}. We call this underlying graph the technology graph. However, before a user can use a particular service, there must be a provider that has invested in bandwidth for that particular service. In other words, the investments of the providers induce a subgraph, the set of usable services, of the original technology graph that can be used by the users to process their task. Each user selects a bundle of services that is needed to perform the user's task at minimum total processing time. The total processing time of a service depends on the quantity of users that chooses the particular service and the installed bandwidth of the providers. We assume that each user has a given willingness to pay for taking part in the game and is only present if the total costs do not exceed this value. Given that users impose externalities on each other, we assume that users in the second stage choose their bundle of services according to a Wardrop equilibrium \cite{wardrop1952some} in the corresponding graph of usable services. 

Given that prices for services are typically charged per usage time, we use the users' costs as a proxy for the total revenue of the providers investing in a service. This total revenue is shared proportionally among those providers investing in the service. Proportional sharing is used in many different settings, e.g.,\ in cooperative game theory \cite{Moulin1987} and in the context of congestion games \cite{milchtaich1996congestion}. We assume that the providers' aim is to maximize profit, which depends on the amount of users using their services and the total amount invested. Intuitively, little investment in a service implies a high processing time and thus a high price paid by users, but only a small fraction of users will choose the service. Too much investment might lead to a large fraction of users, but only for a low price.

\subsection{Contribution}
We are interested in how competition between selfish providers affects the choices made by the users in the network investment game. More formally, we study the existence and uniqueness of subgame perfect Nash equilibria in which in the first stage, providers maximize profit and in the second stage, users choose a Wardrop equilibrium given the strategies of the providers in the first stage. We then consider the performance of equilibria in terms of the price of anarchy \cite{koutsoupias1999worst}. The price of anarchy measures the difference in performance between the worst subgame perfect Nash equilibrium and the social optimum. In fact, for all instances we consider, the performance of all subgame perfect Nash equilibria is the same and we are able to provide an exact characterization of this performance.

Our main result, Theorem \ref{thm:sho}, provides a simplification of the structure of equilibrium investments for series-parallel networks. We focus on series-parallel networks as this represents bundles of services consisting of complements and substitutes. We show that we can restrict attention to investments in which each provider invests in shortest paths (with respect to the number of services). This means that providers have no incentive to diversify their portfolio and invest in complements, nor to invest in services that can be replaced by less services. The result has some interesting implications. When being interested in either a subgame perfect Nash equilibrium, the social optimum, or combining both in the price of anarchy, it is without loss of generality to assume that the network consists of a single edge. This greatly simplifies the remaining analysis.

We then study the quality of subgame perfect Nash equilibria for different classes of users.
 In Section \ref{sec:fixed_res_value}, we assume that all users have a fixed common willingness to pay. 
We precisely characterize the society's surplus and the providers' profits both in a corresponding optimal solution and in every subgame perfect Nash equilibrium. It turns out that the price of anarchy with respect to society's surplus can be bounded by a constant, whereas the ratio with respect to providers' profits can grow linearly with the number of providers. In Section \ref{sec:elastic_demand}, we redo the same analysis for users that have different willingnesses to pay and obtain similar results as for the fixed reservation value case.

\subsection{Related work}
Our model is closely related to several different classes of recently studied models. There is a close connection with network design games, as introduced by Anshelevich et al.~\cite{anshelevich2003near} and extended by Anshelevich et al.~\cite{anshelevich2008price}. In these games, selfish agents want to connect a personal source to sink by means of a path, and they share the costs of an edge in case multiple agents use that edge in their path. In network investment games, providers do not share the cost of an edge, but rather the revenue from the users using that edge. As a result, we obtain that providers want to invest in paths in equilibrium, an assumption that is not made a-priori. 

In network investment games, we assume that users in the second stage behave as in a non-atomic network congestion game. The idea behind congestion games is that selfish users try to minimize their cost in an already existing network. The best-known solution concept for these games is the Wardrop equilibrium~\cite{wardrop1952some}. Beckmann et al.~\cite{beckmann1956studies} and Dafermos and Sparrow~\cite{dafermos1969traffic} proved some structural results of this equilibrium concept. Roughgarden and Tardos~\cite{roughgarden2002bad} were the first to obtain a bound on the price of anarchy for non-atomic routing games.

Recently, starting with Acemoglu and Ozdaglar~\cite{acemoglu2007competition}, several authors considered a two-stage game in which edge owners compete for users by means of tolling. In those games, the users in the second stage also choose a Wardrop equilibrium. Acemoglu and Ozdaglar~\cite{acemoglu2007competition} only considered parallel networks, which was then generalized to parallel-serial networks by Acemoglu and Ozdaglar~\cite{acemoglu2007competitionn}. Ozdaglar~\cite{ozdaglar2008price} and Hayrapetyan et al.~\cite{hayrapetyan2007network} allowed for elastic users. Johari et al.~\cite{johari2010investment} studied an extension that includes entry and investments decisions. Recently, Correa et al.~\cite{correa2018network} and Harks et al.~\cite{HARKS2019} considered the game in which a central authority is allowed to impose price caps. In the former paper, caps are allowed to be different on different edges, in the latter the cap is uniform. Almost all of the above models, with the exception of Acemoglu and Ozdaglar~\cite{acemoglu2007competitionn}, impose the simplifying assumption that the topology of the network is restricted to be parallel. We allow for arbitrary series-parallel graphs and do not impose any restriction on the strategy space of the providers, that is, every provider is allowed to invest in every edge of the network.

A slightly different, but closely related model is the game analyzed by Correa et al.~\cite{correa2014sensitivity}. Here each edge in a series-parallel network represents a producer that competes by selecting a markup that then determines a price function. The main difference to our work is that producers are identified by an edge and thus do not have the possibility to compete on other edges.

Leader-follower models as the previous ones are also known as Stackelberg games \cite{von1934marktform}. Recently, lots of attention has been placed on Stackelberg pricing games. Starting with Labb\'{e} et al.~\cite{labbe1998bilevel}, who considered the problem in which a leader sets prices on a subset of edges and the follower chooses a shortest path, several authors generalized to different combinatorial problems for the follower. For example, Cardinal et al.~\cite{cardinal2011stackelberg} studied the Stackelberg minimum spanning tree problem, and Bil\`{o} el al.~\cite{bilo2008computational} and later Cabello~\cite{cabellostackelberg} investigated the Stackelberg shortest path tree problem. Balcan et al.~\cite{balcan2008item} and Briest et al.~\cite{briest2012stackelberg} considered the power of single-price strategies. B\"{o}hnlein et al.~\cite{bohnlein2017revenue} extended the analysis of this simple algorithm beyond the combinatorial setting.

Lastly, there is a close connection to some of the traditional models of competition in economics, like Cournot competition \cite{cournot1838recherches}. We refer to Harks and Timmermans~\cite{harks2018computing} for some other recent work connecting atomic splittable congestion games to multimarket Cournot oligopolies.
\section{Preliminaries}

We consider a two-sided market in which users purchase bundles of services.
The set of feasible bundles is specified by a technology graph $G_{st}=(V,E)$, which is a directed series-parallel network with source $s \in V$ and sink $t \in V$. Each edge corresponds to a service.
Every $s$-$t$ path in $G_{st}$ describes a feasible bundle of services. Note that serial edges in $G_{st}$ correspond to complements, whereas parallel edges correspond to substitutes.

We model such a market as a two-stage game. In the first stage, a set of providers chooses to invest into services represented by edges. In the second stage, the users select bundles of services or paths in $G_{st}$ given the investments of the providers in the first stage.

The amount of investment affects the speed of the services. We assume for simplicity that prices for each service are charged automatically on a per-time basis. That is, the price function of an edge directly depends on the chosen investments into the corresponding service in the first stage and the current load induced by the users in the second stage. These two quantities highly influence the time that is needed to finish a job on a cloud computing device. Our choice of the price functions and the model of network investment games are motivated by time-dependent prices that are implemented in Amazon Web Services \cite{amazon}, Microsoft Azure \cite{azure} and Google Cloud \cite{google} for using on-demand virtual machines.

 We assume for simplicity that each user is of infinitesimal size which allows us to model the second stage as a non-atomic routing game. However it is important to note that the overall demand is not fixed, but depends on the total price the users have to pay for the services. The demand thus depends on the investment of the providers.

More formally, a \textit{network investment game} is given by a tuple $\mathcal{G}=(G_{st},N,u)$, where $G_{st}=(V,E)$ is a directed series-parallel network with source $s \in V$ and sink $t \in V$, $N=\{1,\ldots,n\}$ is the set of providers, and $u:\mathbb{R}_+\rightarrow\mathbb{R}_+$ is the reservation function. Reservation functions model the users' willingness to pay and impose elastic demand. This concept has already been used in the work of Beckmann, McGuire, and Winsten~\cite{beckmann1956studies}. Intuitively, there is a total demand of $x$ in the game if the total price for bundles of services under demand $x$ is not larger than $u(x)$. We assume that $u$ is non-increasing.



\textbf{The Provider's Strategies.}
Each provider $i \in N$ chooses an investment $b_{i,e}\geq 0$ for every edge $e \in E$. Let $b_i=(b_{i,e})_{e \in E}$ be the investments of provider $i$. Given an investment matrix $b=(b_i)_{i\in N}$, the total investment on edge $e$ is $b_e=\sum_{i \in N} b_{i,e}$. Given $b\in\mathbb{R}_+^{N\times E}$, we define the set of \emph{relevant edges} by $E^+(G_{st}) = \{e \in E(G_{st}) \mid b_e > 0\}$.
The price $c_e$ for a user using an edge $e$, i.e., a service, is an increasing function of the amount of users $f_e$ using edge $e$ and defined by
\[c_e(f_e,b_e) = \begin{cases} \frac{f_e}{b_e} & e\in E^+(G_{st}),\\
\infty & \text{otherwise.}\end{cases}\]
Note that this cost function is a very simple model for prices that depend on the total load of a service and the total investment. Here, we use the simplifying assumption that the time that is needed to complete a service request scales linearly with the load on the machines and inverse linearly with the investment.

\textbf{The User's Actions.}
Let $\Pa(G_{st})$ denote the set of $s$-$t$ paths, and let $\Pa^*(G_{st})$ denote the set of $s$-$t$ paths with minimum number of edges in $G_{st}$. We model the behavior of the users by a flow in $G_{st}$. The flow is defined to be a non-negative vector $f(G_{st})=(f_P)_{P\in\Pa(G_{st})}$. Note that we stick to the path-definition of flows, because we assume strong flow conservation. For a flow $f(G_{st})$ and an edge $e\in E(G_{st})$, let $f_e=\sum_{P \ni e}f_P$ denote the amount of flow on edge $e$. Let $|f(G_{st})|=\sum_{P\in \Pa(G_{st})} f_P$ be the total amount of users that are routed through the network under $f(G_{st})$.

For a given investment matrix $b$, we call some flow $f(G_{st})$ a \textit{Wardrop equilibrium} if for all $P,P' \in \Pa(G_{st})$ with $f_P>0$, we have \[\sum_{e\in P}c_e(f_e,b_e)\leq\sum_{e\in P'}c_e(f_e,b_e)\text{ and }\sum_{e\in P}c_e(f_e,b_e)<\infty.\]

In a network investment game, we assume that both the providers and the users act selfishly. For simplicity, we assume that users are infinitesimal small and thus, a Wardop flow arises in the second stage. We will remark below that, for a given investment matrix and a given non-increasing reservation function, the arising Wardrop flow is uniquely defined, which is necessary for our model to be well-defined. Given the investment matrix $b$, we denote the corresponding Wardrop flow by $f(G_{st},b)$. If $G_{st}$ and/or $b$ is clear from the context, we often omit the dependency on $G_{st}$ and/or $b$ in order to improve readability. If $s$ and $t$ are clear from the context, we sometimes write $G$ instead of $G_{st}$.

\textbf{The Provider's Profits.}
The profit of provider $i \in N$ depends on the associated Wardrop flow $f$, the investment matrix $b$ and the actual cost for the users $c_e(f_e,b_e)$. We define it by 
 \[\pi_i(b)= \sum_{e\in E^+}\frac{b_{i,e}}{b_e}\cdot f_e \cdot c_e(f_e,b_e)- \sum_{e \in E^+} b_{i,e} =  \sum_{e \in E^+} b_{i,e} \cdot\frac{f_e^2}{b_e^2} - \sum_{e\in E^+}b_{i,e}.\]
where $\frac{b_{i,e}}{b_e}$ is the proportional share of provider $i \in N$ and $\sum_{e \in E^+} b_{i,e}$ her investment costs. This profit function is motivated by that fact that each user pays a load-dependent price for using a service of the provider, which is proportionally divided among those providers investing in the service, and the provider has some cost for installing bandwidth for the services. We consider the very simple model that investment costs are linear and the profit is given by the payment of the users minus the investment costs.

We call an investment matrix $b \in \mathbb{R}_+^{N \times E}$ a \textit{subgame perfect Nash equilibrium} if for all $i\in N$ and all $b'_i \in \mathbb{R}_+^E$, 
\[\pi_i(b)\geq\pi_i(b'_i,b_{-i}).\]
Given $b\in\mathbb{R}_+^{N\times E}$, we call $b'_i$ a \textit{better response} for provider $i\in N$ if $\pi_i(b'_i,b_{-i})>\pi_i(b)$. In a subgame perfect Nash equilibrium, no provider has a better response. We call a better response $b'_i$ \textit{demand preserving} if $|f(b'_i,b_{-i})|=|f(b)|$.

\textbf{The Social Welfare}, or social surplus, is traditionally defined as the sum of the providers' surplus and the user's surplus. In our model, similarly as in Hayrapetyan et al.~\cite{hayrapetyan2007network}, this boils down to,
\[ SW(b) = \sum_{i \in N}{\pi_i(b)} + \int_0^{|f|}u(x)\;dx - \sum_{e \in E^+}{f_e c_e(f_e,b_e)} = \int_0^{|f|}u(x)\;dx-\sum_{e\in E}b_e.\]
Note that payments do not appear in the social welfare, as they are transfers from users to providers. An investment matrix $b^*$ is called a \textit{social optimum} if $b^*$ maximizes the social welfare.

For a given instance $\mathcal{G}$, let $\mathcal{E}(\mathcal{G})$ denote the set of subgame perfect Nash equilibria. Notice that $\mathcal{E}(\mathcal{G})$ might be empty. A  natural question is to quantify the loss in social welfare due to competition. Typically, researchers provide upper bounds for the loss in efficiency. We are able to precisely bound this inefficiency for any given instance. To this end, we define the \textit{price of anarchy} and \textit{price of stability} for a given instance as 
\[PoA(\mathcal{G})=\frac{\sup\limits_{b^*\in\mathbb{R}_+^N} SW(b^*)}{\inf\limits_{b\in \mathcal{E}(\mathcal{G})}SW(b)},\text{ and }
PoS(\mathcal{G})=\frac{\sup\limits_{b^*\in\mathbb{R}_+^N} SW(b^*)}{\sup\limits_{b\in \mathcal{E}(\mathcal{G})}SW(b)}.\]
We assume that $0/0=1$ and $PoA(\mathcal{G})=PoS(\mathcal{G})=\infty$ if $\mathcal{E}(\mathcal{G})$ is empty.

A second natural question is to quantify the loss in total profit due to competition. We call this the \textit{providers' price of anarchy} and \textit{providers' price of stability}, and define it by
\[
PPoA(\mathcal{G})=\frac{\sup\limits_{b^*\in\mathbb{R}_+^N} \sum_{i\in N}\pi_i(b^*)}{\inf\limits_{b\in \mathcal{E}(\mathcal{G})}\sum_{i\in N}\pi_i(b)},\text{ and }
PPoS(\mathcal{G})=\frac{\sup\limits_{b^*\in\mathbb{R}_+^N} \sum_{i\in N}\pi_i(b^*)}{\sup\limits_{b\in \mathcal{E}(\mathcal{G})}\sum_{i\in N}\pi_i(b)}.
\]
As before, we assume that $0/0=1$ and $PPoA(\mathcal{G})=PPoS(\mathcal{G})=\infty$ if $\mathcal{E}(\mathcal{G})$ is empty.

We start with some simple observations on Wardrop equilibria for this model. To calculate the flow that models the users, we distinguish between two cases. Either there exists a $s$-$t$ path $P\in \Pa$ such that $b_e > 0$ for all $e \in P$, or not. Put differently, either there is a path $P$ such that for all $e \in P$ we have $e \in E^{+}(G_{st})$, or all paths have investment $0$ for at least one edge. If there is no $s$-$t$ path with strictly positive investment on every edge, the Wardrop equilibrium is the empty flow with $|f| = 0$. If there is some $P \in \Pa$ with $b_e > 0$ on all edges $e \in P$, we obtain the following results. 

\begin{remark}[\cite{beckmann1956studies,dafermos1969traffic}] 
If $f$ is a Wardrop equilibrium, then there is a constant $c_b\geq 0$ such that $\sum_{e\in P}c_e(f_e,b_e)=c_b$ whenever $P\in \Pa$ with $f_e >0$ for all $e\in P$, and $\sum_{e\in P}c_e(f_e,b_e)\geq c_b$ otherwise.
\label{remark:allFlowsSameCost}
\end{remark}

\begin{remark}[\cite{beckmann1956studies,dafermos1969traffic}]  
For two Wardrop flows $f, f'$ with $|f| = |f'|$ in a network with strictly increasing cost functions on every edge, we have that $f_e = f'_e$ for all $e \in E$.
\label{remark:edgeFlowUnique}
\end{remark}

Note that, by Remarks \ref{remark:allFlowsSameCost} and \ref{remark:edgeFlowUnique}, we have that the cost per user in the network investment game $c_b$, as defined in Remark \ref{remark:allFlowsSameCost}, only depends on the total demand $|f|$ for all Wardrop flows $f$ and a fixed $b$. Let us focus on this dependence and write $c_b(|f|)$ instead of $c_b$. It turns out that $c_b(|f|)$ is in fact strictly increasing in $|f|$. The following proposition is due to Lin, Roughgarden, and Tardos~\cite{lin2004stronger}. However, in their variant of the proposition they only assume non-decreasing cost functions and show that $c_b(|f|)$ is non-decreasing in $|f|$. Since we have strictly increasing cost functions, we modify their proof in the appendix such that we get a slightly stronger statement.

\begin{restatable}[\cite{lin2004stronger}]{proposition}{CBStrictlyIncreasing}
$c_b(|f|)$ is strictly increasing in $|f|$.
\label{prop:costStrictlyIncreasing}
\end{restatable}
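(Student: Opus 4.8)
The plan is to fix the investment matrix $b$, pick two total demands $D < D'$ with associated Wardrop flows $f$ and $f'$ of values $|f| = D$ and $|f'| = D'$, and to prove $c_b(D') > c_b(D)$ by comparing the two equilibria through the standard variational quantity
\[
\Delta := \sum_{e \in E^+} (f'_e - f_e)\bigl(c_e(f'_e,b_e) - c_e(f_e,b_e)\bigr).
\]
Since every $c_e$ is strictly increasing in its first argument on $E^+$, each summand is nonnegative, so $\Delta \ge 0$, and $\Delta > 0$ as soon as $f$ and $f'$ disagree on some relevant edge. I would first record that all positive flow lives on $E^+$ (any flow on a zero-investment edge incurs infinite cost and cannot occur in a Wardrop equilibrium), so every sum below ranges over $E^+$ only and stays finite; in particular, a path carrying positive flow under either equilibrium avoids all zero-investment edges.

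Next I would bound $\Delta$ from above by $(D'-D)(c_b(D') - c_b(D))$. Writing each edge-sum as a path-sum via $f_e = \sum_{P \ni e} f_P$ and invoking Remark~\ref{remark:allFlowsSameCost}, the two ``diagonal'' terms evaluate exactly, $\sum_{e \in E^+} f'_e c_e(f'_e,b_e) = D'\,c_b(D')$ and $\sum_{e \in E^+} f_e c_e(f_e,b_e) = D\,c_b(D)$, because a path carrying positive flow has cost equal to the equilibrium cost of its own flow. For the two cross terms the same remark yields only the one-sided estimates $\sum_{e \in E^+} f'_e c_e(f_e,b_e) \ge D'\,c_b(D)$ and $\sum_{e \in E^+} f_e c_e(f'_e,b_e) \ge D\,c_b(D')$, since under either equilibrium every path has cost at least that flow's equilibrium cost. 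Feeding the two equalities (entering $\Delta$ with a positive sign) and the two lower bounds (entering with a negative sign, hence flipping to upper bounds) into the expansion of $\Delta$ telescopes to
\[
0 \le \Delta \le D'c_b(D') - D'c_b(D) - Dc_b(D') + Dc_b(D) = (D'-D)\bigl(c_b(D') - c_b(D)\bigr).
\]

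Finally I would close the argument by strictness. Since $|f| = D \ne D' = |f'|$, the edge-flow vectors cannot coincide on all of $E^+$ (equal edge flows would force equal net outflow at $s$, hence equal demand), so there is an edge where $f_e \ne f'_e$; strict monotonicity of $c_e$ makes that summand strictly positive and thus $\Delta > 0$. Combined with the display and $D'-D > 0$, this forces $c_b(D') > c_b(D)$, as claimed. I expect the only nonroutine point to be the treatment of unused paths and zero-investment edges: one must verify that restricting every sum to $E^+$ keeps all quantities finite while the ``cost at least the equilibrium cost'' estimate of Remark~\ref{remark:allFlowsSameCost} still applies to the paths used by the other equilibrium. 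This is precisely the step where the \emph{strict} increase, rather than the merely non-decreasing statement of Lin, Roughgarden, and Tardos, is obtained, by upgrading the sign of the differing summand from $\ge 0$ to $>0$. (An alternative route is induction on the series-parallel decomposition, base case a single edge where $c_b(D)=D/b_e$, but the comparison argument above needs no topological assumption and so seems cleaner.)
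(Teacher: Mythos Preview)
Your argument is correct. The variational quantity $\Delta$ is handled properly: the diagonal terms collapse to $D'c_b(D')$ and $Dc_b(D)$ via the Wardrop definition (paths with positive path-flow have cost exactly the equilibrium cost), the cross terms are bounded below using the ``every path costs at least $c_b$'' half of Remark~\ref{remark:allFlowsSameCost}, and strictness is recovered from the fact that different total demands force different edge flows on $E^+$. The bookkeeping about restricting to $E^+$ is fine because any flow-carrying path under either equilibrium lies entirely in $E^+$.

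Your route is genuinely different from the paper's. The paper argues combinatorially: it builds the set $S$ of vertices reachable from $s$ along edges with $f_e<f'_e$, uses the directed series-parallel structure to rule out edges from $V\setminus S$ back into $S$, and concludes that $t\in S$, i.e., there is an $s$--$t$ path $P$ with $f_e<f'_e$ on every edge. Since then $f'_e>0$ on all of $P$, Remark~\ref{remark:allFlowsSameCost} gives cost$(P,f')=c_b(D')$, while cost$(P,f)\ge c_b(D)$ and strict monotonicity of the $c_e$ forces cost$(P,f')>\text{cost}(P,f)$, hence $c_b(D')>c_b(D)$. Your variational-inequality argument avoids the cut construction entirely and, as you note, does not use the series-parallel topology; it is the standard monotone-operator proof and applies to arbitrary single-commodity networks. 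The paper's proof, in exchange, is more constructive: it exhibits a specific witnessing path along which the equilibrium cost strictly rises.
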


We note that we can extend Proposition \ref{prop:costStrictlyIncreasing} and show that for network investment games, $c_b(|f|)$ is even linear in $|f|$. The proof uses the series-parallel structure of the graph and can be found in the appendix.

\begin{restatable}{proposition}{fLinearFunction}
Given a vector of investment levels $b$, $c_b(|f|)$ is a linear function of $|f|$.
\label{prop:linearFunction}
\end{restatable}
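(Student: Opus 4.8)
The plan is to prove the statement by structural induction on the series-parallel decomposition of $G_{st}$. For every series-parallel subnetwork $H$ (with its own terminals) let $\gamma_H(x)$ denote the common equilibrium cost $c_b$ incurred when a total demand $x$ is routed through $H$ according to a Wardrop equilibrium; by Remarks~\ref{remark:allFlowsSameCost} and~\ref{remark:edgeFlowUnique} this is a well-defined single-valued function of $x$. I would establish the stronger claim that each $\gamma_H$ is \emph{homogeneous linear}, i.e.\ $\gamma_H(x)=\alpha_H\, x$ for some slope $\alpha_H\in(0,\infty]$, where $\alpha_H=\infty$ encodes the case that $H$ contains no path with strictly positive investment on every edge (so that positive flow is infinitely costly). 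Since $c_b(|f|)=\gamma_{G_{st}}(|f|)$, linearity of $\gamma_{G_{st}}$ is exactly the proposition.

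The base case is a single edge $e$, where $\gamma_e(x)=c_e(x,b_e)=x/b_e$ for $e\in E^+$ (slope $1/b_e$) and $\gamma_e(x)=\infty$ for $x>0$ otherwise; both are of the claimed form. For the series composition $H=H_1\!\to\!H_2$, strong flow conservation forces the entire demand $x$ through both parts, and by Remark~\ref{remark:allFlowsSameCost} the path cost is additive across the two parts, so $\gamma_H(x)=\gamma_{H_1}(x)+\gamma_{H_2}(x)=(\alpha_{H_1}+\alpha_{H_2})\,x$, which is again linear under the convention $a+\infty=\infty$.

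The parallel composition $H = H_1 \parallel H_2$ is the crux. Here a demand $x$ splits as $x=x_1+x_2$, with $x_i$ the flow entering $H_i$. The key technical step is to argue that the restriction of the equilibrium flow to $H_i$ is itself a Wardrop equilibrium of $H_i$ at demand $x_i$, so that the inductive hypothesis yields cost $\gamma_{H_i}(x_i)=\alpha_{H_i}x_i$ on the used paths of $H_i$; the equilibrium condition of Remark~\ref{remark:allFlowsSameCost} across the two branches then reads $\alpha_{H_1}x_1=\alpha_{H_2}x_2=c_b$ whenever both branches carry flow. Moreover, every usable branch must carry positive flow for any $x>0$: a usable branch carrying zero flow would have cost $\gamma_{H_i}(0)=0$, strictly below the positive equilibrium cost, contradicting the ``$\geq c_b$'' part of Remark~\ref{remark:allFlowsSameCost}. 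Solving $x_1+x_2=x$ together with $\alpha_{H_1}x_1=\alpha_{H_2}x_2$ gives $c_b = x/(\alpha_{H_1}^{-1}+\alpha_{H_2}^{-1})$, i.e.\ $\gamma_H(x)=\alpha_H x$ with $\alpha_H^{-1}=\alpha_{H_1}^{-1}+\alpha_{H_2}^{-1}$ (a blocked branch contributing $\alpha^{-1}=0$ and hence being ignored). This is linear, completing the induction.

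The step I expect to require the most care is the parallel case, specifically the claim that a Wardrop equilibrium of $H$ restricts to a Wardrop equilibrium on each parallel component carrying its realized subflow; this is precisely what lets the inductive hypothesis collapse each component into a single ``effective edge'' of slope $\alpha_{H_i}$. It follows from the fact that every $s$-$t$ path of $H$ decomposes uniquely into a path of $H_1$ or of $H_2$, so optimality of a path in $H$ is equivalent to optimality of its restriction within the chosen component; combined with uniqueness of edge flows (Remark~\ref{remark:edgeFlowUnique}) this makes the reduction rigorous. The bookkeeping for the degenerate cases ($b_e=0$ edges, entirely blocked branches, slopes equal to $\infty$) is routine once the conventions $a+\infty=\infty$ and $\infty^{-1}=0$ are fixed.
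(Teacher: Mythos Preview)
Your proposal is correct and follows essentially the same structural induction on the series-parallel decomposition as the paper's proof, with the same base case and the same additive (series) and harmonic-mean (parallel) combination of slopes. The only cosmetic difference is that the paper first prunes all edges not lying on any fully invested $s$--$t$ path (invoking a lemma that the pruned graph stays series-parallel) so that every slope encountered in the induction is finite, whereas you carry the degenerate branches through the induction via the conventions $a+\infty=\infty$ and $\infty^{-1}=0$; your treatment of the parallel step (restriction is Wardrop, usable branches must carry flow) is in fact more explicit than the paper's.
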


The above two propositions are very important for the setup of network investment games. We assume that demand directly depends on the actual cost for the users, and thus indirectly on $b$. In order to do so, we use that the costs for each user are given by a linear function $c_b(|f|)$. Combining this with the reservation function $u$ stating that an amount of $u(x)$ has a willingness to pay at least $x$, we define
\[|f|= \inf\{x \in \mathbb{R}_+ \mid u(x) \leq c_b(x)\}.\]
Note that this infimum is well defined, since $u$ is non-increasing and $c_b(\cdot)$ is strictly increasing due to Proposition \ref{prop:costStrictlyIncreasing}. Furthermore, let us illustrate the dependence of $|f|$ on $b$ by the following observation. An increase of some $b_e$ for fixed $f$ induces a change in the corresponding edge cost $c_e(f_e,b_e)$ and weakly decreases the cost of the flow $f$. Since Wardrop flows are optimal flows for linear cost functions, this weakly decreases the value of $c_b(x)$ for any fixed $x$. Overall, one can show that a change of $b$ changes the slope of $c_b(x)$ in the definition above, which then might result in a different demand.

\section{Characterization of Equilibria}\label{sec:char}
In this section, we derive some important properties of subgame perfect Nash equilibria. First, we prove our main result that in every subgame perfect Nash equilibrium, every provider only invests in shortest (with respect to the number of edges) paths. The implications of this result are quite significant and twofold. First, it implies a nice characterization of markets that can be modeled by means of a network investment game. 
Second, the result simplifies the further analysis of the game significantly. As long as we are interested in equilibrium investments, the analysis gets notably easier. Instead of considering a strategy space in which every provider is allowed to invest in every edge separately, we can restrict attention to strategies in which providers invest in shortest paths. We will heavily use this characterization in the later sections.

\begin{definition}
We call a strategy  $b_i$ a {\em path strategy} of provider $i$ if there exist values
$ (b_{i,P})_{P \in \Pa}$ with $b_{i,P}\ge 0$ such that $b_{i,e}=\sum_{P: e\in P}{b_{i,P}}$ for all $e \in E$. 
\end{definition}

Note that the definition ensures that we can always decompose a path strategy $(b_{i,e})_{e \in E}$ of a provider $i \in N$ defined on edges into a path decomposition. Thus, when considering path strategies, we can work with the path decomposition of investments, instead of the investments defined on edges. If all providers' investments are path strategies, it is immediately clear that we can also define $b_P=\sum_{i\in N}b_{i,P}$ for all $P\in\Pa$ and $|b|=\sum_{P \in \Pa}{b_P}$.

Note that, when using the term \emph{shortest paths}, we always refer to shortest paths with respect to the number of edges in the path. Now, we are able to state and prove our main theorem of this section. 

\begin{theorem}\label{thm:sho}
\mbox{}
\begin{enumerate}
\item[(i)] In every subgame perfect Nash equilibrium, every provider $i\in N$ chooses a path strategy on shortest paths.
\item[(ii)] Let $i\in N$. If every provider $j\in N\setminus\{i\}$ chooses a path strategy on shortest paths then the following holds: For every strategy $b_i$ of provider $i$, there is a path strategy on shortest paths $b_i'$ inducing the same amount of flow and at least the same profit for provider $i$.
\end{enumerate}

\end{theorem}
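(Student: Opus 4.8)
The plan is to start from the reformulation of the profit that falls out of the definitions. Writing $c_e = f_e/b_e$ for the equilibrium cost on a relevant edge, the profit of provider $i$ becomes
\[ \pi_i(b) = \sum_{e \in E^+} b_{i,e}\,(c_e^2 - 1). \]
Thus provider $i$ only gains from edges whose per-user cost exceeds one, and her profit density on an edge is the increasing function $c_e^2-1$ of that cost. By Proposition \ref{prop:linearFunction}, $c_b(\cdot)$ is linear through the origin, say $c_b(x)=\alpha_b x$; since the realized demand satisfies $u(|f|)=c_b(|f|)$ and $c_b$ is strictly increasing (Proposition \ref{prop:costStrictlyIncreasing}), fixing $|f|$ is equivalent to fixing the effective slope $\alpha_b$. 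Consequently the total payment $|f|\cdot c_b(|f|)$ collected from the users is identical for every demand-preserving strategy. So, along demand-preserving deviations, provider $i$ is really maximizing her \emph{share} of a fixed total revenue minus her total investment $\sum_e b_{i,e}$; the task in (ii) is to show this can be done on shortest paths. The single guiding principle is that, by Remark \ref{remark:allFlowsSameCost}, every used $s$-$t$ path carries the same total cost $c_b$, so a shorter path distributes $c_b$ over fewer edges and hence has strictly higher average per-edge cost; since profit density is increasing in $c_e$, investment is most efficiently placed on shortest paths.

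To prove (ii) I would induct on the series-parallel decomposition of $G_{st}$, using Remark \ref{remark:edgeFlowUnique} so that at each stage the induced edge flow is pinned down by the through-demand. The base case is a single edge, where the only path is a shortest path and every strategy is trivially a path strategy. For a series composition $G=G_1\cdot G_2$, every $s$-$t$ path is a concatenation of a $G_1$-path and a $G_2$-path, the same through-flow passes each block, a shortest path of $G$ is the concatenation of shortest paths of the blocks, and provider $i$'s investment splits across the two blocks; applying the inductive hypothesis to each block with its through-flow held fixed yields the claim. The parallel composition $G=G_1\parallel G_2$ is the crux: the demand splits as $|f|=|f^1|+|f^2|$ with both branches equalised at the common cost $c_b$. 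If the two branches have equal shortest-path length, the inductive hypothesis applied in each branch suffices; if one branch is strictly longer, I must additionally show that relocating all of $i$'s investment out of the longer branch onto shortest paths of the shorter branch, while readjusting so that $\alpha_b$ (hence $|f|$) is preserved, does not decrease profit. This \emph{migration lemma} is where the per-edge-cost principle above does the work: reducing each branch to its single-edge equivalent (via the linear slope), the longer branch necessarily offers strictly lower profit density for the revenue it captures, so the move is profitable at equal flow.

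The main obstacle I anticipate is exactly this parallel migration step, for two reasons. First, an arbitrary strategy $b_i$ need not satisfy flow conservation and therefore need not decompose into paths at all (for instance, investment stranded on a single interior edge); such stranded mass still lowers some $b_e$ and thereby perturbs the Wardrop flow, so it cannot simply be deleted without changing $|f|$. Handling this is precisely where the hypothesis that all other providers already play shortest-path strategies is used: it concentrates the ambient investment on the shortest-path subgraph, which gives enough ``room'' to absorb the relocation of $i$'s mass while compensating the effective slope to keep $|f|$ fixed. Second, I must verify that the equilibrium flow's response to the relocation is controlled well enough that the weak profit inequality survives; I expect to prove this by computing the effective slope of each series-parallel block as a function of $i$'s investment and checking monotonicity of the profit-per-unit-investment in the shortest-path length of the block.

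Finally, I would derive (i) from the best-response characterization underlying (ii). In a subgame perfect Nash equilibrium each $b_i$ maximizes $\pi_i(\cdot,b_{-i})$, so it suffices to show that any strategy placing positive mass off the shortest-path subgraph, or failing to be a path strategy, admits a strictly profitable, demand-preserving relocation and hence cannot be a best response. The per-edge-cost principle is a property of the Wardrop flow alone and is insensitive to \emph{who} invested, so I expect the relocation to be strictly improving against an arbitrary $b_{-i}$ whenever $i$'s investment is not already a path strategy on shortest paths; the only strategies immune to it are precisely those of the desired form, yielding (i). The delicate point here is to make the strict version of the migration argument run against a completely arbitrary opponent profile rather than the shortest-path profile assumed in (ii); I would address this by isolating the purely flow-theoretic inequality (shorter used paths have higher average per-edge cost) from the construction, and invoking it directly to rule out off-shortest-path mass in equilibrium.
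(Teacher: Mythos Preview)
Your high-level plan---induct on the series-parallel decomposition and work only with demand-preserving deviations so that the reservation function drops out---is exactly the paper's approach, and your parallel step (reduce to shortest-path strategies inside each branch by induction, then migrate from the longer branch to the shorter one at unchanged slope) is essentially Lemma~\ref{lem:par}. There are, however, two genuine gaps.

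\textbf{Series case of (ii).} You write that after applying the inductive hypothesis in each block ``the claim'' follows. It does not. Induction hands you a shortest-path strategy of total amount $\beta_1$ in $G_1$ and $\beta_2$ in $G_2$, but a \emph{path} strategy in $G=G_1\cdot G_2$ requires $\beta_1=\beta_2$: every $s$--$t$ path uses exactly one shortest path from each block, so an edge-investment vector decomposes into $G$-paths only if the two totals agree. You still owe a \emph{balancing} step: replace $(\beta_1,\beta_2)$ by a single $\beta$ on concatenated shortest paths, chosen so that $k/(\,\alpha+\beta_1)+\ell/(\,\alpha+\beta_2)=(k+\ell)/(\,\alpha+\beta)$ (hence $|f|$ is unchanged), and check that profit does not drop. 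This is precisely Lemma~\ref{lem:ser}(ii), and both the explicit $\beta$ and the profit inequality use that every $m\ne i$ is already balanced, so that $|b|_{G_1}-|b_i|_{G_1}=|b|_{G_2}-|b_i|_{G_2}$.

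\textbf{Deriving (i).} You plan to show that the \emph{same} provider $i$ always has a strictly profitable demand-preserving relocation against an arbitrary $b_{-i}$. This is stronger than what the paper proves, and your outline does not establish it. In the series case, with all providers on shortest paths inside $G_1$ and $G_2$, suppose the totals satisfy $|b|_{G_1}<|b|_{G_2}$ while $i$ is unbalanced the other way, $|b_i|_{G_1}>|b_i|_{G_2}$. The derivative of $i$'s profit along a demand-preserving rebalancing has two terms of opposite sign (a ``share'' term and a ``totals'' term), and neither dominates in general. The paper sidesteps this by proving instead the statement (i$'$): \emph{some} provider $j$---possibly $j\neq i$---has a demand-preserving better response. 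The deviating $j$ is selected by an averaging argument so that both her absolute imbalance and her share imbalance point the same way, which forces the derivative to be strictly positive (Lemma~\ref{lem:ser}(i)). Since in a subgame perfect equilibrium \emph{no} provider can improve, (i$'$) still yields (i). Your ``purely flow-theoretic inequality'' cannot supply this, because the strict improvement hinges on how ownership shares are distributed across the two blocks, not only on the Wardrop flow.
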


For the proof of Theorem \ref{thm:sho}, we need the following two lemmas. In the first lemma we observe that in equilibrium a provider cannot choose positive investments on two paths of different length. If so, there is always a better response that only invests in the shorter of the two paths.

\begin{restatable}{lemma}{lemmaParallel}
\label{lem:par}
Let $(G,s,t)$ be a series-parallel graph that is a parallel composition of series-parallel graphs $(G_1,s,t)$ and $(G_2,s,t)$. Let $b$ be a vector of path strategies of paths that all are either shortest $s$-$t$ paths in $G_1$ with length $k$ or shortest $s$-$t$ paths in $G_2$ with length $\ell$. If $k<\ell$, and $b_{i,P_2}>0$ for some provider $i\in N$ and $P_2 \in \Pa^*(G_2)$, then there is a demand preserving better response for $i$ with $b_{i,P_2}=0$.
\end{restatable}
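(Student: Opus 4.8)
The plan is to exploit the linearity from Proposition~\ref{prop:linearFunction} together with the parallel composition to decouple $G_1$ and $G_2$, and then to show that shifting the investment $b_{i,P_2}$ into $G_1$ strictly lowers the investment needed while leaving provider $i$'s collected revenue untouched. Throughout I read the cost $c_e(f_e,b_e)=f_e/b_e$ as an \emph{electrical} one: $b_e$ is a conductance, the Wardrop flow is the electrical flow (these cost functions are linear, so the Wardrop flow is the cost-minimizing flow), and $c_b$ plays the role of the total ``voltage'' between $s$ and $t$. Since $G$ is the parallel composition of $G_1$ and $G_2$, every used path in either subgraph has the same cost $c_b$ by Remark~\ref{remark:allFlowsSameCost}; hence both subgraphs carry exactly this $c_b$, and once the total demand $|f|$ is fixed the two sides can be analysed independently.

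The technical heart is a \emph{balancing} claim, which I would prove by induction on the series-parallel decomposition of a subgraph $H$ all of whose positive-investment edges lie on shortest $s$-$t$ paths of common length $m$: every such edge then carries cost exactly $c_b/m$, and the $s$-$t$ conductance of $H$ equals $|b_H|/m$, where $|b_H|=\sum_{P\in\Pa^*(H)}b_P$. The single-edge base case is trivial, and in a parallel composition both sides see the full $c_b$ while conductances add, so the statement is immediate. The series case carries the real work: one first checks that the common-shortest-length hypothesis is inherited by both parts (a shortest path of $H$ splits into a shortest path of each part), and then that the total path-investment routed through each part equals $|b_H|$. This forces the two part-resistances to be proportional to their lengths, so $c_b$ splits across the parts in proportion to their lengths and, by induction, uniformly across every edge. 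Establishing this equal-through-investment step, and that it is special to series-parallel topologies, is the main obstacle.

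Granting the balancing claim for $G_1$ (length $k$) and $G_2$ (length $\ell$), provider $i$'s profit collapses to a transparent form. Writing $|b_i^{(1)}|,|b_i^{(2)}|$ for $i$'s total path-investment in $G_1$ and $G_2$, every edge of $G_1$ has cost $c_b/k$ and every edge of $G_2$ has cost $c_b/\ell$, so summing the per-edge revenue $b_{i,e}c_e^2$ and the per-edge investment $b_{i,e}$ over each subgraph yields
\[
\pi_i(b)=|b_i^{(1)}|\Bigl(\tfrac{c_b^2}{k}-k\Bigr)+|b_i^{(2)}|\Bigl(\tfrac{c_b^2}{\ell}-\ell\Bigr).
\]
By Propositions~\ref{prop:costStrictlyIncreasing} and~\ref{prop:linearFunction} the demand is governed solely by the aggregate conductance $C=|b^{(1)}|/k+|b^{(2)}|/\ell$, since $|f|$ is the fixed point of $u(x)=x/C$. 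Crucially, the balancing claim implies that with $c_b$ fixed no edge cost anywhere changes under a demand-preserving reallocation, so there are no cross-effects on $i$'s other investments to track.

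Finally I would exhibit the deviation. Let $\beta=b_{i,P_2}>0$, set $b'_{i,P_2}=0$, add $\beta_1:=k\beta/\ell$ to some fixed $P_1\in\Pa^*(G_1)$, and leave all remaining investments of $i$ unchanged. This lowers $|b^{(2)}|$ by $\beta$ and raises $|b^{(1)}|$ by $k\beta/\ell$, so the aggregate conductance $C$ is unchanged; hence $|f|$, and therefore $c_b=|f|/C$, are preserved, the response is demand preserving, and it still uses only shortest paths, so the profit formula applies verbatim with the same $c_b$. Substituting $\beta_1=k\beta/\ell$, the $c_b^2$ terms cancel and the profit changes by $\Delta\pi_i=\beta\bigl(\ell-k^2/\ell\bigr)=\beta(\ell^2-k^2)/\ell$, which is strictly positive since $k<\ell$. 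Thus $b_i'$ is a demand-preserving better response with $b'_{i,P_2}=0$, proving the lemma.
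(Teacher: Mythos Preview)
Your proof is correct and follows essentially the same route as the paper. The paper constructs the Wardrop flow explicitly (assigning $f_P$ proportional to $b_P/|P|$ and verifying that each used edge has cost $|f|/(|b|_{G_1}/k+|b|_{G_2}/\ell)$), which is precisely your ``balancing claim'' in concrete form; it then uses the identical deviation $b'_{i,P_1}=b_{i,P_1}+\frac{k}{\ell}b_{i,P_2}$, $b'_{i,P_2}=0$, checks that the conductance $|b|_{G_1}/k+|b|_{G_2}/\ell$ is preserved, and obtains the same profit gain $b_{i,P_2}(\ell-k^2/\ell)>0$.
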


Next we turn to graphs that are a series composition of two subgraphs. Assuming that every provider's strategy can be decomposed into two path strategies for the two subgraphs, we show that in equilibrium they must invest the same value in both subgraphs. Furthermore, if all other providers play shortest path strategies, for all strategies of a provider, the provider always has an at least as good strategy in shortest paths. 
\begin{restatable}{lemma}{lemmaSeries}
\label{lem:ser}
Let $(G,s,t)$ be a series-parallel graph that is a series composition of series-parallel graphs $(G_1,s,v)$ and $(G_2,v,t)$. Let $b$ be a vector of investments $(b_{i,e})_{i \in N, e \in E}$ that can be partitioned into $(b^1_{i,e})_{i \in N, e \in E(G_1)}$ and $(b^2_{i,e})_{i \in N, e \in E(G_2)}$ such that $b^1$ and $b^2$ are shortest path strategies in their corresponding graphs $G_1$ and $G_2$, respectively.
\begin{enumerate}
\item[(i)]  If $\sum_{P \in \Pa^*(G_1)}b^1_{i,P} \neq \sum_{P \in \Pa^*(G_2)}b^2_{i,P}$ for some provider $i \in N$, then there is a demand preserving better response for some $j \in N$. 
\item[(ii)] If, additionally, $\sum_{P \in \Pa^*(G_1)}b^1_{m,P} = \sum_{P \in \Pa^*(G_2)}b^2_{m,P}$  for all providers $m \ne i$, then the following holds. For every strategy $b_i$ there is a shortest path strategy $b_i'$ with $|f(b_{-i}, b_i)| = |f(b_{-i},b_i')|$ and at least the same profit for provider $i$, i.e., $\pi_i(b_{-i}, b_i) \leq \pi_i(b_{-i}, b_i')$.
\end{enumerate}

\end{restatable}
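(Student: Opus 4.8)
The plan is to first collapse each subgraph to a single effective parameter, reducing profit and demand to functions of the per-provider totals $B^r_i := \sum_{P \in \Pa^*(G_r)} b^r_{i,P}$ and the aggregates $B^r := \sum_{i} B^r_i$ for $r \in \{1,2\}$. Write $k_r$ for the common length of the shortest paths in $G_r$. For a shortest-path strategy only shortest paths carry flow (in a series-parallel graph the edges lying on shortest paths induce a subgraph all of whose $s$-$t$ paths have length $k_r$, so any other path meets a zero-investment edge), and the Wardrop flow satisfies $f_e = (|f|/B^r)\,b^r_e$ on each edge $e$ of $G_r$, so every such edge has cost exactly $|f|/B^r$. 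Hence $c_{b^r}(|f|) = k_r|f|/B^r$, and since the composition is in series both subgraphs carry the full demand, giving $c_b(|f|) = (k_1/B^1 + k_2/B^2)\,|f|$. Using $\sum_{e \in E(G_r)} b^r_{i,e} = k_r B^r_i$, provider $i$'s profit becomes
\[
\pi_i \;=\; \sum_{r \in \{1,2\}} k_r\, B^r_i\!\left(\frac{|f|^2}{(B^r)^2} - 1\right).
\]
Because demand is pinned down by $u(|f|) = c_b(|f|)$ with $c_b$ linear, a deviation is \emph{demand preserving} exactly when it fixes $|f|$ and the total resistance $R := k_1/B^1 + k_2/B^2$; such deviations trace out the curve $dB^2/dB^1 = -k_1(B^2)^2/(k_2(B^1)^2)$.

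For part (i) I differentiate $\pi_j$ along this curve when provider $j$ alone moves. A direct computation gives
\[
\frac{d\pi_j}{dB^1_j}\Big|_{\mathrm{curve}} \;=\; \frac{k_1}{(B^1)^2}\,\phi_j, \qquad \phi_j := 2|f|^2\big(s^2_j - s^1_j\big) - \big((B^1)^2 - (B^2)^2\big),
\]
where $s^r_j := B^r_j/B^r$. The decisive point is that $(B^1)^2-(B^2)^2$ is common to all providers, so summing and using $\sum_j s^r_j = 1$ yields $\sum_j \phi_j = -n\big((B^1)^2 - (B^2)^2\big)$. I would argue by contraposition: if nobody has a demand-preserving better response then the curve-derivative cannot point into the feasible region, i.e.\ $\phi_j > 0 \Rightarrow B^2_j = 0$ and $\phi_j < 0 \Rightarrow B^1_j = 0$ (one cannot shrink a zero coordinate). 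If $B^1 \neq B^2$, the identity forces a nonzero $\phi_j$ of the sign of $(B^2)^2-(B^1)^2$, and the sign implications together with $\sum_j s^r_j = 1$ are contradictory. If $B^1 = B^2$ but some $B^1_i \neq B^2_i$, then $\phi_j = 2|f|^2(s^2_j-s^1_j)$ and any provider with $s^2_j > s^1_j$ has $B^2_j > 0$, hence a feasible demand-preserving move increasing $B^1_j$. Either way some provider improves. (The degenerate case $|f|=0$ is immediate: an unbalanced provider invests a positive amount for zero revenue and profits by withdrawing.)

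For part (ii), since all $m \neq i$ are balanced we have $B^1_{-i} = B^2_{-i} =: B_{-i}$, and I would proceed in three steps. \textbf{Step 1 (reduce within each subgraph).} Viewing $b_i$ restricted to the strictly smaller series-parallel graphs $G_1$ and $G_2$, I invoke the theorem inductively to replace each restriction by a shortest-path strategy in that subgraph carrying the same flow and, crucially, realizing the same subgraph cost $c_{b^r}(|f|)$, without decreasing provider $i$'s profit there. This is the step I expect to be the main obstacle: the inductive guarantee must control not merely the flow but the effective resistance of each subgraph, since it is $c_{b^1}(|f|)+c_{b^2}(|f|)=u(|f|)$ that preserves the global demand. \textbf{Step 2 (balance across subgraphs).} After Step 1 provider $i$ plays shortest-path totals $(B^1_i,B^2_i)$; holding demand fixed, the computation of part (i) gives $d\pi_i/dB^1_i \propto \phi_i$, and since the opponents are balanced one checks $\phi_i>0$ for $B^1_i<B^2_i$ and $\phi_i<0$ for $B^1_i>B^2_i$, so $\pi_i$ is unimodal along the demand-preserving curve with unique maximum at $B^1_i = B^2_i$. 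This balanced point is always feasible, as the required common value $\beta = (k_1+k_2)/R - B_{-i}$ is nonnegative because $R \le (k_1+k_2)/B_{-i}$; sliding $i$ to it preserves demand and weakly increases profit.

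\textbf{Step 3 (realize as a path strategy in $G$).} Because the balanced totals agree, the shortest $G_1$- and $G_2$-path investments can be coupled into investments on shortest $s$-$t$ paths of $G$ (any coupling with the matching marginals works), producing a genuine shortest-path strategy $b_i'$ in $G$ with $|f(b_{-i},b_i')| = |f(b_{-i},b_i)|$ and $\pi_i(b_{-i},b_i') \ge \pi_i(b_{-i},b_i)$. This matching of totals is exactly the point where the balance from part (i) is indispensable, which is why the two statements are proved together.
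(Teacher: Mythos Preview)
Your reduction to the per-provider totals $B^r_i$, the aggregate resistance $R=k_1/B^1+k_2/B^2$, and the curve derivative
\[
\frac{d\pi_j}{dB^1_j}\Big|_{\text{curve}}=\frac{k_1}{(B^1)^2}\Bigl(2|f|^2(s^2_j-s^1_j)+(B^2)^2-(B^1)^2\Bigr)
\]
is exactly what the paper does (this is literally the expression obtained at $\epsilon=0$ in the paper's proof). Where you diverge is in how you exploit it. For (i) the paper does not argue by contraposition; assuming w.l.o.g.\ $B^1\le B^2$, it directly picks a provider $j$ with $B^1_j<B^2_j$ \emph{and} $s^1_j\le s^2_j$ (such a $j$ exists by a one-line averaging argument, since any $m$ with $B^1_m\ge B^2_m$ automatically has $s^1_m\ge s^2_m$). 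For this $j$ both summands in $\phi_j$ are nonnegative and not both zero, and $B^2_j>0$ guarantees the move is feasible. Your contraposition can be made to work, but the sentence ``the sign implications together with $\sum_j s^r_j=1$ are contradictory'' hides a real argument: you still need to observe that if $B^1<B^2$ then $\phi_j<0$ is impossible once $B^1_j=0$ (since then $\phi_j=2|f|^2 s^2_j+((B^2)^2-(B^1)^2)>0$), forcing $J_-=\emptyset$, and then $\sum_{j\in J_0}s^2_j=1$ contradicts $s^2_j<s^1_j$ on $J_0$. As written, this step is a gap.

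For (ii), your Step~1 is misplaced: the lemma's hypothesis already assumes that $b^1$ and $b^2$ are shortest-path strategies in $G_1$ and $G_2$ (for \emph{all} providers, including $i$), so no inductive reduction inside the subgraphs is needed here; that reduction belongs to the proof of Theorem~\ref{thm:sho}, not to this lemma. Once that is removed, your Step~2 is a legitimate and rather clean alternative to the paper's argument. The paper does not argue monotonicity along the demand-preserving curve; instead it writes down the balanced value
\[
\beta=\frac{k_1 B^1_i(B_{-i}+B^2_i)+k_2 B^2_i(B_{-i}+B^1_i)}{k_1(B_{-i}+B^2_i)+k_2(B_{-i}+B^1_i)}
\]
(which is the same point $\beta=(k_1+k_2)/R-B_{-i}$ you arrive at) and computes $\pi_i(b')-\pi_i(b)$ explicitly, showing it factors as a sum of two nonnegative terms proportional to $(B^1_i-B^2_i)^2$. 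Your monotonicity argument explains \emph{why} that difference is nonnegative without the algebra; the paper's computation is more direct but opaque. Step~3 is the same in both.
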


Now, we are ready to prove Theorem \ref{thm:sho}.

\begin{proof}[Proof of Theorem \ref{thm:sho}]
The main difficulty in proving statements on strategy changes of providers lies in the fact that the users' costs and, hence, demand and flow might change. However, we are able to circumvent the dependency on the reservation function by only considering demand preserving better responses. In fact, we prove the first statement of the theorem by showing the following stronger statement. \emph{(i') If there is a provider $i \in N$ such that $b_i$ is not a path strategy on shortest paths, then there is a provider $j \in N$ with a demand preserving better response.}  Note that statement (i') implies part (i) of the theorem.
We show statement (i') and part (ii) of the theorem by induction on the structure of the series-parallel graph $G$. For the base case, a graph with a single edge, both statements are trivially fulfilled. For the induction step, assume the statements are true for series-parallel graphs $G_1$ and $G_2$, and we will argue that the statements are true for a new series-parallel graph that either arises from a parallel composition or a series composition of $G_1$ and $G_2$.

For (i'), observe the following: If there is a provider $i$ such that $b_i$ is not a path strategy in one of the two subgraphs $G_1$ or $G_2$, we can assume w.l.o.g.\ that it is not a path strategy in $G_1$. Then there is a demand preserving better response in $G_1$ for some provider $j$. This is also a demand preserving better response for $j$ in $G$, since it is demand preserving in $G_1$, and thus does not change any flow in $G_2$, and thus any profit in $G_2$. So we can assume that the investments $b_i$ are shortest paths within $G_1$ and within $G_2$ for all $i \in N$.

\textbf{Case 1: Parallel Composition.}
Assume $G$ is the result of a parallel composition of series-parallel networks $G_1$ and $G_2$. 
For proving (i'), denote the length of the shortest paths in $G_1$ by $k$ and the length of the shortest paths in $G_2$ by $\ell$. 
If  $k \neq \ell$ and there is a provider $i$ that invests in the longer path, we can apply Lemma \ref{lem:par} to show that there is a demand preserving better response 
 for provider $i$. For (ii), observe that if $b_j$ is a path strategy on shortest paths for every provider $j \in N \setminus \{i\}$, it is also a path strategy on shortest paths in both subgraphs. Thus, by induction, provider $i$ has a shortest path strategy $b'_i$ that is demand preserving w.r.t. $b_i$ within $G_1$ and $G_2$. Note that this is already a strategy on paths that does not change the overall demand and yields at least as much profit as $b_i$. Note that this is not a shortest path strategy in $G$ only if $k \neq \ell$, say w.l.o.g.\ $k < \ell$ and $i$ invests in $G_2$. In this case, we apply Lemma \ref{lem:par} and conclude that $i$ has a demand preserving better response and, thus in fact has the desired strategy on shortest paths in $G$.

\textbf{Case 2: Series Composition.}
Assume $G$ is the result of a series composition of series-parallel networks $G_1$ and $G_2$. For (i'), note that have already argued that $b$ is a shortest path strategy within $G_1$ and within $G_2$ for every provider. However, if $b_i$ it is not a shortest path strategy in $G$ for some $i\in N$, we fulfill the conditions of Lemma \ref{lem:ser} (i).

For (ii), note that we also fulfill the additional condition of Lemma \ref{lem:ser} (ii). By using the lemma, we can conclude that for every strategy $b_i$, provider $i$ has a strategy $b_i'$ on shortest paths inducing the same demand and at least the same profit. 
\end{proof}

Note that we have proven that in a subgame perfect Nash equilibrium, all providers invest in shortest path strategies. Additionally, for proving that some investment matrix is an equilibrium, we can restrict the strategy set to shortest path strategies.

 We proceed with a proposition that significantly reduces the set of shortest path strategies. We show that if each provider chooses an investment on shortest paths, it is irrelevant how this investment is distributed over the shortest paths. The proof of the proposition is moved to the appendix.

\begin{restatable}{proposition}{investmentsDoNotDependOnPath}
\label{prop:equivalence}
Let $b$ and $b'$ be investment matrices such that for all $i\in N$, $b_i$ and $b'_i$ are path strategies, and $b_i$ and $b'_i$ can be decomposed into path decompositions $(b_{i,P})_{P \in \Pa^*}$ and $(b'_{i,P})_{P \in \Pa^*}$ only using shortest paths, respectively. Additionally, let $\sum_{P \in \Pa^*}{b_{i,P}} = \sum_{P \in \Pa^*}{b'_{i,P}}$ for all $i \in N$. Then, $\pi_i(b')=\pi_i(b)$ for all $i \in N$ and $|f(b)| = |f(b')|$. 
\end{restatable}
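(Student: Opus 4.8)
The plan is to show that redistributing a provider's total shortest-path investment across different shortest paths changes neither the flow nor any profit, by arguing that both quantities depend only on the aggregate edge-investments $b_e$, and that these aggregates are invariant under the redistribution. Concretely, I would first observe that the demand $|f|$ and the profits $\pi_i$ are completely determined by the edge-investment vector $(b_e)_{e \in E}$ together with the Wardrop flow it induces: the demand is fixed by the equation $|f| = \inf\{x \mid u(x) \le c_b(x)\}$ and $c_b$ depends only on $(b_e)_e$, while $\pi_i(b) = \sum_{e \in E^+} b_{i,e}(f_e^2/b_e^2 - 1)$ depends on $(b_e)_e$, the per-provider split $(b_{i,e})_e$, and the edge-flows $f_e$. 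So if I can show that passing from $b$ to $b'$ leaves both $(b_e)_{e\in E}$ and each $(b_{i,e})_{e \in E}$ unchanged, everything follows at once.

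Unfortunately that strongest claim is false edge-by-edge: two distinct distributions of the same total mass over shortest paths generally yield different edge loads $b_e$. So the real content is a symmetry argument exploiting the series-parallel structure. The key step I would carry out is to prove, by structural induction on the series-parallel decomposition of $G_{st}$, that for any shortest-path strategy the aggregate investment is ``balanced'' in a way that makes $c_b(\cdot)$ and hence $|f|$ invariant. For a single edge the statement is trivial. In a series composition $G = G_1 \cdot G_2$, every shortest $s$-$t$ path is a concatenation of a shortest path in $G_1$ with a shortest path in $G_2$, so a provider's total shortest-path mass splits into equal totals $\sum_{P \in \Pa^*(G_1)} b_{i,P} = \sum_{P \in \Pa^*(G_2)} b_{i,P}$ on each side (as already used in Lemma \ref{lem:ser}); the induction hypothesis applied to $G_1$ and $G_2$ separately, together with additivity of $c_b$ along the series composition, gives invariance of the cost function and therefore of $|f|$. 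In a parallel composition the shortest paths all live in the component(s) achieving the minimum length, and I would reduce to the subgraphs where the induction hypothesis applies, using that the Wardrop equilibrium splits flow between parallel branches solely according to their aggregate cost functions, which the induction hypothesis shows are unchanged.

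Once $c_b(\cdot)$ is shown invariant, Proposition \ref{prop:costStrictlyIncreasing} guarantees $|f(b)| = |f(b')|$ immediately from the definition of demand, and Remark \ref{remark:edgeFlowUnique} then forces the equilibrium edge-flows $f_e$ to agree on every relevant edge of each irreducible parallel block (since the total flow entering that block is the same and the branch cost functions coincide). The final step is to convert flow-invariance into profit-invariance. Here I would write each provider's profit as a sum over edges and regroup it along the decomposition; along series links the profit is additive, and within a parallel block the per-user cost $c_b(|f|)$ is common to all providers by Remark \ref{remark:allFlowsSameCost}, so provider $i$'s revenue is $c_b(|f|)$ times $i$'s effective share of the flow through that block, a quantity that I would show depends only on $i$'s total shortest-path mass $\sum_{P} b_{i,P}$ and not on its distribution. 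Subtracting the investment cost $\sum_P b_{i,P}$ (also depending only on the total mass) yields $\pi_i(b') = \pi_i(b)$.

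I expect the main obstacle to be the profit-invariance bookkeeping in the parallel case: although flows and the scalar cost $c_b(|f|)$ are invariant, provider $i$'s \emph{revenue} is $\sum_e (b_{i,e}/b_e) f_e c_e$, and it is not a priori obvious that the proportional shares $b_{i,e}/b_e$ recombine across parallel branches to give a share depending only on $i$'s aggregate mass. The crux is to verify that, under the cost function $c_e = f_e/b_e$, each parallel branch carries flow proportional to its aggregate investment, so that the revenue provider $i$ collects on a branch is proportional to $i$'s investment on that branch; summing over branches then telescopes to $i$'s total mass times a common factor. Making this proportionality precise, and checking it is preserved under the inductive recomposition, is the step requiring the most care.
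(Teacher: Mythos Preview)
Your inductive plan would go through, but it is a considerably longer route than the paper takes. The paper does not induct on the series-parallel decomposition at all: it instead exhibits one explicit global Wardrop flow (Lemma~\ref{lem:flo}), namely $f_P=\tfrac{b_P}{|b|}\,|f|$ for every $P\in\Pa^*$. A two-line computation then gives $f_e/b_e=|f|/|b|$ for every $e\in E^+$, so every relevant edge carries the \emph{same} cost and every used path costs $k\,|f|/|b|$. From this the profit collapses to $\pi_i(b)=\bigl((|f|/|b|)^2-1\bigr)\sum_{e\in E^+}b_{i,e}$, and since each shortest path has length $k$ one has $\sum_{e\in E^+}b_{i,e}=k\sum_{P\in\Pa^*}b_{i,P}$, which is invariant by hypothesis. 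Demand invariance is immediate because $c_b(x)=kx/|b|$ depends only on $|b|=\sum_i\sum_P b_{i,P}$.

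In other words, what you single out as ``the step requiring the most care'' --- that under $c_e=f_e/b_e$ each branch carries flow proportional to its aggregate investment --- is precisely the global observation that dissolves the problem in one stroke. Guessing the proportional flow $f_P\propto b_P$ and checking it equalizes all edge costs simultaneously removes any need to peel off series or parallel layers. Your induction is sound (you would want to strengthen the hypothesis to track the explicit form $c_b(x)=kx/|b|$, not merely its invariance, since the parallel step uses the proportional split and not just equality of the two cost curves), but it reproves at every level a uniformity that can be verified once and for all.
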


By using Proposition \ref{prop:equivalence}, we can classify shortest path strategies into equivalence classes, represented by a single number. 

\begin{observation}
\label{obs:singleNumber}
In order to analyze subgame perfect Nash equilibria and check their existence, it is sufficient to check investments on shortest paths. By Proposition \ref{prop:equivalence}, we know that it is irrelevant on which shortest path a provider invests. Thus, we can assume that each strategy of a provider is not a vector of investments $(b_{i,e})_{e\in E}$, but a single number $\sum_{P \in \Pa^*}{b_{i,P}}$ representing the total investment in shortest paths.
\end{observation}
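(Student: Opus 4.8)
The plan is to show that, for a shortest-path strategy profile, every quantity entering a provider's profit is pinned down by the scalars $B_i := \sum_{P \in \Pa^*} b_{i,P}$, $B := \sum_{i \in N} B_i$, and the common length $L$ of all shortest paths, none of which change when we pass from $b$ to $b'$. First I would dispose of the investment term: since every $P \in \Pa^*$ has exactly $L$ edges, the edge investment of provider $i$ satisfies $\sum_{e \in E} b_{i,e} = \sum_{e}\sum_{P \ni e} b_{i,P} = L\,B_i$, so the cost part $\sum_{e \in E^+} b_{i,e}$ of $\pi_i$ already depends only on $B_i$. It therefore remains to control the demand $|f|$ and the per-provider revenue $\sum_{e \in E^+} b_{i,e}\, f_e^2/b_e^2$, both of which run through the induced Wardrop flow.

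The crux, and the step I expect to be the main obstacle, is to identify this Wardrop flow explicitly, and the key observation is that the investment vector is itself an equilibrium flow. Route $b_P$ units along each $P \in \Pa^*$; the resulting edge flow is exactly $g_e = b_e$, so every relevant edge carries cost $c_e(g_e,b_e) = b_e/b_e = 1$ and every shortest path has cost exactly $L$. Because $b$ is a shortest-path strategy, a shortest path all of whose edges are relevant costs exactly $L$, a path meeting a zero-investment edge costs infinity, and a longer path costs more than $L$; hence no $s$-$t$ path is cheaper than $L$, and by Remark \ref{remark:allFlowsSameCost}, $g$ is a Wardrop flow for demand $B$ with equilibrium cost $c_b(B) = L$. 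Since $c_b(0)=0$ and $c_b$ is linear (Proposition \ref{prop:linearFunction}), this forces $c_b(f) = (L/B)\,f$ for all $f$. The cost function thus depends only on $B$ and $L$, so $|f| = \inf\{x : u(x) \le c_b(x)\}$ coincides for $b$ and $b'$; call this common value $F$.

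To finish I would invoke homogeneity of Wardrop equilibria under the costs $c_e = f_e/b_e$: scaling any Wardrop flow by $\lambda \ge 0$ scales every edge cost by $\lambda$ and hence preserves all of the (in)equalities in the Wardrop condition, so the unique (Remark \ref{remark:edgeFlowUnique}) equilibrium edge flow at demand $F$ is $f_e = (F/B)\,b_e$, giving every relevant edge the identical cost $c_e = f_e/b_e = F/B$. Substituting into the revenue term yields $\sum_{e \in E^+} b_{i,e}\, f_e^2/b_e^2 = (F/B)^2 \sum_{e} b_{i,e} = (F/B)^2 L\,B_i$, so $\pi_i(b) = (F/B)^2 L B_i - L B_i$ is an expression in $F, B, L, B_i$ alone; since all four quantities agree for $b$ and $b'$, we obtain $\pi_i(b) = \pi_i(b')$ for every $i$ together with $|f(b)| = F = |f(b')|$. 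The only genuinely nonroutine point is recognizing $b$ as a Wardrop flow in the second paragraph; once the cost function is known to be $(L/B)f$ and the flow $f_e = (F/B)\,b_e$, the rest is substitution, with Remarks \ref{remark:allFlowsSameCost} and \ref{remark:edgeFlowUnique} and Proposition \ref{prop:linearFunction} supplying the finiteness, uniqueness, and linearity that make the identification rigorous.
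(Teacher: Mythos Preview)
The Observation itself carries no separate proof in the paper; it is simply stated as an immediate consequence of Theorem~\ref{thm:sho} (restriction to shortest-path strategies) together with Proposition~\ref{prop:equivalence} (irrelevance of the particular shortest path). What you have written is, in effect, a self-contained proof of Proposition~\ref{prop:equivalence}, and it is correct.

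Comparing against the paper's proof of Proposition~\ref{prop:equivalence} (which goes through Lemma~\ref{lem:flo}): the paper directly posits $f_P=(b_P/|b|)\,|f|$ and checks that every relevant edge has cost $|f|/|b|$, then substitutes into $\pi_i$. You reach the same edge-cost identity by a two-step route---first recognising the investment vector itself as a Wardrop flow at demand $B$ with path cost $L$, then invoking linearity of $c_b$ (Proposition~\ref{prop:linearFunction}) and homogeneity to scale to demand $F$. The substance is identical; your presentation is a bit more modular, and your treatment of paths longer than $L$ inside $E^+$ (cost $|P|>L$ directly) is more self-contained than the paper's bare assertion that all $s$-$t$ paths in $(V,E^+)$ have length $k$.
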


\section{Homogeneous Users}\label{sec:fixed_res_value}
In this section, we assume that all users are homogeneous and have the same fixed reservation value $R \in \mathbb{R}_+$ for processing a task, and decide not to process their task if the price is above $R$. More formally, we assume that $u(x)=R$ for all $x\in[0,d]$, and $u(x)=0$ for all $x>d$, where $d\in\mathbb{R}_+$ represents the size of the population of users. We show that there always exists a subgame perfect Nash equilibrium and analyze the inefficiency of equilibria in terms of the PoA and PPoA.

Recall that, by Observation \ref{obs:singleNumber}, we can restrict the strategy spaces of the providers to path strategies on shortest paths and denote them by a single number. We slightly abuse notation and denote the strategy chosen by provider $i$ by $b_i \in \mathbb{R}_+$. 

We first compute the demand and profits of providers in a network investment game with a fixed reservation value.
\begin{restatable}{lemma}{lemmaprofix}\label{lem:prof}
For $u(x)=R$ for all $x\in[0,d]$, and $u(x)=0$ for all $x>d$, where $d\in\mathbb{R}_+$,
\begin{itemize}
 \item[(i)]  the demand $|f(b)|$ is given by the function
$|f(b)|=\min\left\{\frac{R\cdot |b|}{k},d\right\}$ and
\item[(ii)] the profit $\pi_i$ is given by $
\pi_i(b_i, b_{-i})=\min\left\{k\cdot\left(\frac{R^2}{k^2}-1\right)\cdot b_i,k\cdot\left(\frac{d^2}{|b|^2}-1\right)\cdot b_i\right\}$,
 \end{itemize}
where $k$ denotes the length of a shortest $s$-$t$ path.
\end{restatable}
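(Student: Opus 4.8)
The plan is to first determine the Wardrop cost function $c_b(\cdot)$ for shortest-path investments and then read off both the demand and the profit from it; I assume $|b|>0$ throughout, the case $|b|=0$ being the empty flow with $|f|=0$ and all $\pi_i=0$, which matches both formulas since then $b_i=0$. Since $b$ is a path strategy supported on shortest paths, every relevant edge lies on some $P\in\Pa^\ast$ with $b_P>0$ and $b_e=\sum_{P\ni e}b_P$. For a target demand $x$ I would exhibit the flow that routes $f_P=(x/|b|)\,b_P$ on each shortest path $P$ and verify it is a Wardrop flow: its total is $(x/|b|)\sum_P b_P=x$, its edge flows are $f_e=\sum_{P\ni e}f_P=(x/|b|)\,b_e$, so $f_e/b_e=x/|b|$ on every relevant edge (and $f_e=0$ elsewhere); hence every shortest path has cost $k\,x/|b|$, while every other $s$-$t$ path either uses only relevant edges, so it has length $\ge k$ and cost $\ge k\,x/|b|$, or uses a non-relevant edge and has infinite cost. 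By uniqueness of Wardrop edge flows (Remark~\ref{remark:edgeFlowUnique}) this is the Wardrop flow, giving $c_b(x)=k\,x/|b|$ and, crucially, $f_e/b_e=|f|/|b|$ on all relevant edges (this also sharpens Proposition~\ref{prop:linearFunction}).

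For part (i) I would substitute $c_b(x)=k\,x/|b|$ into $|f|=\inf\{x:u(x)\le c_b(x)\}$ and split on whether $R|b|/k\le d$. If $R|b|/k\le d$, then $x^\ast=R|b|/k\le d$ is the smallest $x$ satisfying $R=u(x)\le k\,x/|b|$, so $|f|=R|b|/k$. If $R|b|/k>d$, then $c_b(x)<R=u(x)$ for all $x\le d$ while $u(x)=0\le c_b(x)$ for $x>d$, so $|f|=d$. Together these give $|f|=\min\{R|b|/k,\,d\}$.

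For part (ii) I would start from $\pi_i(b)=\sum_{e\in E^+}b_{i,e}\,(f_e^2/b_e^2-1)$ and insert $f_e/b_e=|f|/|b|$ from the first step, obtaining $\pi_i(b)=(|f|^2/|b|^2-1)\sum_{e\in E^+}b_{i,e}$. Because $b_i$ is supported on shortest paths, $\sum_{e}b_{i,e}=\sum_{e}\sum_{P\ni e}b_{i,P}=\sum_{P}k\,b_{i,P}=k\,b_i$, so $\pi_i(b)=k\,b_i\,(|f|^2/|b|^2-1)$. Substituting the two values of $|f|$ yields $k\,b_i(R^2/k^2-1)$ when $|f|=R|b|/k$ and $k\,b_i(d^2/|b|^2-1)$ when $|f|=d$; since $R|b|/k\le d$ is equivalent to $R^2/k^2\le d^2/|b|^2$ and the nonnegative factor $k\,b_i$ preserves this inequality, the realized profit equals the minimum of the two expressions, as claimed.

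The main obstacle is the first step: the whole computation rests on the identity $f_e/b_e=|f|/|b|$ for every relevant edge, which is exactly what lets the per-edge costs aggregate into the clean factor $k$. The explicit proportional-to-investment flow together with uniqueness of Wardrop edge flows is the cleanest route, and it is precisely here that the shortest-path, path-strategy structure is essential: all used paths share the same length $k$, and $b_e=\sum_{P\ni e}b_P$ so that the proposed flow is simultaneously feasible and equalizes per-edge cost ratios. For a general investment this identity fails.
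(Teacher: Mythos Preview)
Your proposal is correct and follows essentially the same route as the paper. The paper's proof is terser because the key identity $c_e(f_e,b_e)=|f|/|b|$ on every relevant edge, together with the resulting path cost $c_b(x)=k\,x/|b|$, has already been established in Lemma~\ref{lem:flo}; you simply re-derive that lemma inline by exhibiting the proportional flow $f_P=(x/|b|)\,b_P$ and invoking uniqueness, and you spell out both branches of the case distinction for $|f|$ (whereas the paper only writes the case $|b|\le dk/R$ explicitly).
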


Using the closed form for the demand and the profit, we show by a case distinction of $R<k$, $R = k$ and $R > k$ in the appendix that a subgame perfect Nash equilibrium always exists and has the same total investment.
\begin{restatable}{theorem}{FixedExistenceOfNash}\label{thm:exi}
There always exists a subgame perfect Nash equilibrium.
\end{restatable}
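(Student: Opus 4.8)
The plan is to use Lemma \ref{lem:prof} and Observation \ref{obs:singleNumber} to reduce the problem to a tractable one-dimensional game: by Observation \ref{obs:singleNumber} each provider's strategy is a single number $b_i \in \mathbb{R}_+$, and by Lemma \ref{lem:prof}(ii) the profit is the explicit function $\pi_i(b_i,b_{-i}) = \min\{A b_i,\, k(d^2/|b|^2 - 1) b_i\}$ with $A := k(R^2/k^2 - 1)$ and $|b| = \sum_{j} b_j$. The existence question then becomes the existence of a Nash equilibrium of this $n$-player aggregative game on $\mathbb{R}_+$, and I would organize the argument by the sign of $A$, equivalently by the three cases $R<k$, $R=k$, $R>k$ announced before the theorem.

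First I would dispose of the easy cases $R \le k$. When $R<k$ we have $A<0$, so any strictly positive investment is unprofitable in the uncapped regime; and in the capped regime, which here requires $|b| \ge dk/R > d$, one has $d^2/|b|^2 < 1$, so the profit is again negative. Hence $b_i = 0$ is a best response to everything and the all-zero profile is an equilibrium. When $R=k$ we have $A=0$, so every provider earns exactly $0$ as long as the total stays in the uncapped regime and strictly negative profit otherwise; thus the all-zero profile (indeed any profile with $|b|\le d$) is an equilibrium. These verifications are direct sign checks.

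The substance is the case $R>k$, where $A>0$ and each provider would like to invest. My first step here is a structural claim: for fixed $b_{-i}$ the profit $\pi_i(\cdot,b_{-i})$ is single-peaked (quasiconcave) in $b_i$. On the uncapped part it is linear with positive slope $A$; on the capped part it equals $h(b_i) = k\,d^2 b_i/|b|^2 - k b_i$ with $|b| = b_i + c$, $c := \sum_{j\ne i} b_j$, and a second-derivative computation gives $h''(b_i) = 2kd^2 (b_i - 2c)/|b|^4$, so $h$ is concave on $[0,2c]$ and convex on $[2c,\infty)$. Since $h'(2c) < -k < 0$ and $h'(b_i) \to -k$ as $b_i\to\infty$, $h$ is strictly decreasing on the convex part, hence single-peaked overall, and gluing it to the increasing linear piece (value and left slope match at the regime boundary $|b| = dk/R$) keeps $\pi_i$ single-peaked. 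With quasiconcavity in hand, a candidate that satisfies the appropriate first-order or boundary condition is automatically a global best response.

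Finally I would exhibit an explicit symmetric candidate and verify it. Setting $b_i = b^*/n$ and imposing the interior first-order condition in the capped regime yields $b^* = d\sqrt{(n-2)/n}$, which is legitimate precisely when this value lies in the capped regime, i.e. when $n$ is at least the threshold $2R^2/(R^2-k^2)$ (note this forces $n\ge 3$); for smaller $n$ the best response is pinned at the kink and the equilibrium total is $b^* = dk/R$. In both subcases single-peakedness reduces the verification to a marginal computation at the candidate: downward deviations are killed by the positive uncapped slope $A$, while for upward deviations the capped marginal at the kink equals $(R^2/k)(1-2/n)-k$, which is negative exactly below the threshold — certifying the kink $b^*=dk/R$ as the peak — and nonnegative above it, where instead the peak is the interior point satisfying the capped first-order condition. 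The main obstacle is precisely this $R>k$ analysis: the profit is not concave (it is convex for large own investment), so quasiconcavity must be established by hand, and one must split according to whether the symmetric solution is interior or sits at the regime boundary, a split governed by $n$ versus $2R^2/(R^2-k^2)$ rather than by $R$ versus $k$ alone.
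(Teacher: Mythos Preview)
Your proposal is correct and follows essentially the same route as the paper: reduce via Observation~\ref{obs:singleNumber} and Lemma~\ref{lem:prof} to a one-dimensional aggregative game, split into the cases $R<k$, $R=k$, $R>k$, and in the last case exhibit the symmetric candidate at the kink $|b|=dk/R$ or at the interior point $|b|=d\sqrt{(n-2)/n}$ according to the threshold $\sqrt{(n-2)/n}\gtrless k/R$ (equivalently $n\gtrless 2R^2/(R^2-k^2)$). Your explicit single-peakedness argument is in fact a useful addition, since the paper's proof checks the first-order conditions at the candidates but does not spell out why they are sufficient for a global best response.
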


The following example illustrates that a subgame perfect Nash equilibrium need not be unique.
\begin{example}
Let $G$ consist of a single edge connecting $s$ to $t$, $n=2$, and $u(x)=2$ for all $x\in[0,1]$, and $u(x)=0$ for all $x>1$. Then for $i=1,2$, $\pi_i(b)=\min\left\{3b_i,\left(\frac{1}{|b|^2}-1\right)\cdot b_i\right\}$. From the first order conditions of provider $i=1,2$, we get that all $b\in\mathbb{R}_+^2$ with $\frac{3}{16}\leq b_1\leq\frac{5}{16}$ and $b_1+b_2=\frac{1}{2}$ are subgame perfect Nash equilibria. Notice that the total investment is the same in all subgame perfect Nash equilibria.
\end{example}

Given that a subgame perfect Nash equilibrium exists, we might wonder about the induced performance of an equilibrium. In order to answer this question, we consider two different measures of performance: the social welfare and the total providers' profits.

The main result of this section gives a tight characterization on the price of anarchy for fixed reservation value users.
\begin{theorem}
For an instance $\mathcal{G}$ with $u(x)=R$ for all $x\in[0,d]$, and $u(x)=0$ for all $x>d$, where $d\in\mathbb{R}_+$,
\begin{align*}
PoA(\mathcal{G})=PoS(\mathcal{G})=\begin{cases}
\frac{(R+k)\cdot(R-k)}{R\cdot(R-k\sqrt{\frac{n-2}{n}})}&\text{ if }R>k\text{ and }\sqrt{\frac{n-2}{n}}\geq \frac{k}{R},\\
1&\text{ otherwise.}
\end{cases}
\end{align*}
\end{theorem}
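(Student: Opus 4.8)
The plan is to collapse the whole question to a one-dimensional optimization in the total investment $|b|$, using Observation~\ref{obs:singleNumber} to treat every strategy profile as a vector of single numbers and Lemma~\ref{lem:prof} for the closed forms of demand and profit. First I would observe that $PoA(\mathcal G)=PoS(\mathcal G)$: by Theorem~\ref{thm:exi} all subgame perfect Nash equilibria share the same total investment $|b|$, and since each provider invests on shortest paths of length $k$ we have $\sum_{e\in E}b_e=k|b|$, while $|f(b)|$ depends on $b$ only through $|b|$ by Lemma~\ref{lem:prof}(i). Hence $SW(b)=\int_0^{|f|}u(x)\,dx-\sum_e b_e=R\,|f(b)|-k|b|$ is a function of $|b|$ alone, so it is constant across all equilibria and the infimum and supremum over $\mathcal E(\mathcal G)$ coincide.

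Next I would compute the social optimum. Writing $SW$ as a function of the scalar $B:=|b|$ and substituting $|f|=\min\{RB/k,d\}$ gives the piecewise-linear map $SW(B)=R\min\{RB/k,d\}-kB$, with slope $(R^2-k^2)/k$ for $B\le kd/R$ and slope $-k$ afterwards. When $R\le k$ both slopes are non-positive, so $SW^\ast=0$; when $R>k$ the map rises to its kink and then falls, so $SW^\ast=SW(kd/R)=d(R^2-k^2)/R$ at $B^\ast=kd/R$.

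The heart of the proof is to identify the equilibrium total investment $B^{\mathrm{eq}}$. In Lemma~\ref{lem:prof}(ii) the two branches of the minimum switch precisely at $B=kd/R$: for $B\le kd/R$ the profit is $\pi_i=b_i(R^2-k^2)/k$, and for $B>kd/R$ it is the saturated-demand profit $\pi_i=b_i(kd^2/B^2-k)$. If $R\le k$ no provider benefits from positive investment, so $B^{\mathrm{eq}}=0$ and $SW^{\mathrm{eq}}=0=SW^\ast$. If $R>k$, the profit is strictly increasing in $b_i$ on the unsaturated branch, so providers drive $B$ up to at least $kd/R$; on the saturated branch I would differentiate $\pi_i$ in $b_i$ with $b_{-i}$ fixed, obtaining the first-order condition $d^2(b_{-i}-b_i)=B^3$, and sum it over $i\in N$ to get $d^2(n-2)B=nB^3$, i.e.\ the interior optimum $B=d\sqrt{(n-2)/n}$. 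This interior point is an equilibrium only if it actually lies on the saturated branch, $d\sqrt{(n-2)/n}\ge kd/R$, equivalently $\sqrt{(n-2)/n}\ge k/R$; otherwise the best responses pull $B$ back to the kink and $B^{\mathrm{eq}}=kd/R=B^\ast$.

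Finally I would substitute. In the generic case $R>k$ and $\sqrt{(n-2)/n}\ge k/R$ the equilibrium is on the saturated branch, so $|f|=d$ and $SW^{\mathrm{eq}}=Rd-kd\sqrt{(n-2)/n}=d\bigl(R-k\sqrt{(n-2)/n}\bigr)$; dividing $SW^\ast$ by this yields $\frac{(R+k)(R-k)}{R(R-k\sqrt{(n-2)/n})}$. In every remaining case $B^{\mathrm{eq}}=B^\ast$ or both welfares vanish, so the ratio is $1$ (with the convention $0/0=1$); note that at the threshold $\sqrt{(n-2)/n}=k/R$ the first formula also evaluates to $1$, so the two cases agree. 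The main obstacle is the third step: one must argue that summing the symmetric first-order conditions really pins down the equilibrium total investment rather than a mere candidate, ruling out corner and asymmetric equilibria and checking consistency with the correct branch of the minimum; here the uniqueness of total investment from Theorem~\ref{thm:exi} does the decisive work, since any equilibrium must reproduce the value forced by those conditions.
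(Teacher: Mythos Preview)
Your proposal is correct and follows essentially the same route as the paper: reduce to a scalar $B=|b|$ via Observation~\ref{obs:singleNumber} and Lemma~\ref{lem:prof}, compute the piecewise-linear $SW(B)$ with maximum at $B^\ast=kd/R$, read off the equilibrium total investment from the first-order analysis of the saturated branch (which the paper simply cites from the proof of Theorem~\ref{thm:exi} rather than rederiving), and take the ratio. One small inaccuracy: Theorem~\ref{thm:exi} as stated only asserts existence, not uniqueness of the equilibrium total investment, and indeed when $R=k$ every $|b|\le d$ is an equilibrium; however all such equilibria have $SW=0=SW^\ast$, so your conclusion $PoA=PoS=1$ in that case remains valid.
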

\begin{proof}
First, observe from the definition of the social welfare that $SW(b)=b\cdot\left(\frac{R^2}{k}-k\right)$ for all $b\in\mathbb{R}_+^N$ with $|b|\leq\frac{d\cdot k}{R}$. Moreover, $SW(b)$ is decreasing in $|b|$ for $|b|>\frac{d\cdot k}{R}$. This implies that $\sup\limits_{b \in \mathbb{R}_+^{N \times E}} SW(b)$ is attained at some $b^*\in\mathbb{R}_+^N$ with $|b|^*\leq\frac{d\cdot k}{R}$. In particular, if $R\leq k$, all subgame perfect Nash equilibrium maximize the social welfare.

So assume that $R>k$. Then, $|b|^*=\frac{d\cdot k}{R}$. From the proof of Theorem \ref{thm:exi}, it follows that either $|b|=\frac{d\cdot k}{R}$ or $|b|=d\cdot\sqrt{\frac{n-2}{n}}$. In the former case, we have that all subgame perfect Nash equilibrium maximize the social welfare. In the latter case, we have $\frac{d\cdot R-k\cdot \frac{d\cdot k}{R}}{d\cdot R-k\cdot d\cdot \sqrt{\frac{n-2}{2}}}=\frac{(R+k)\cdot(R-k)}{R\cdot(R-k\sqrt{\frac{n-2}{n}})}\leq1+\frac{k}{R}<2$, where the last inequality follows since $R> k$. 
\end{proof}
\begin{corollary}
For all $\mathcal{G}$ with $u(x)=R$ for all $x\in[0,d]$, and $u(x)=0$ for all $x>d$, where $d\in\mathbb{R}_+$, $PoA(\mathcal{G})<2$.
\end{corollary}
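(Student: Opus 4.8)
The plan is to read off the price of anarchy from the immediately preceding theorem and bound its closed-form expression. Since that theorem establishes $PoA(\mathcal{G}) = PoS(\mathcal{G})$ and provides an exact two-case formula, it suffices to show that each case is strictly below $2$. In the ``otherwise'' branch the value is exactly $1$, which is trivially smaller than $2$, so all the work lies in the first branch, where $R > k$ and $\sqrt{\frac{n-2}{n}} \ge \frac{k}{R}$.

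For that branch I would establish the single inequality
\[
\frac{(R+k)\cdot(R-k)}{R\cdot\left(R-k\sqrt{\frac{n-2}{n}}\right)} \le 1 + \frac{k}{R},
\]
which is precisely the bound already asserted (and used) inside the proof of the theorem. To verify it directly, I would first note that the denominator is strictly positive: since $\frac{n-2}{n} = 1 - \frac{2}{n} \le 1$ we have $\sqrt{\frac{n-2}{n}} \le 1$, hence $k\sqrt{\frac{n-2}{n}} \le k < R$, so $R - k\sqrt{\frac{n-2}{n}} > 0$. Rewriting $1 + \frac{k}{R} = \frac{R+k}{R}$ and cancelling the common positive factor $\frac{R+k}{R}$ from both sides, the claimed inequality reduces to $\frac{R-k}{R - k\sqrt{\frac{n-2}{n}}} \le 1$, i.e.\ to $R - k \le R - k\sqrt{\frac{n-2}{n}}$. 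This last inequality is equivalent to $k\sqrt{\frac{n-2}{n}} \le k$, which again follows from $\sqrt{\frac{n-2}{n}} \le 1$.

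Finally, since the first branch requires $R > k$, we have $\frac{k}{R} < 1$, so $1 + \frac{k}{R} < 2$; combining this with the displayed inequality yields $PoA(\mathcal{G}) < 2$ in this branch as well. Together with the trivial bound $PoA(\mathcal{G}) = 1 < 2$ in the other branch, this proves $PoA(\mathcal{G}) < 2$ for every such instance.

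I do not expect a genuine obstacle here: the corollary is an immediate consequence of the theorem, and the only point requiring care is confirming positivity of the denominator so that the cancellation step is legitimate. The whole argument is the chain of elementary algebraic manipulations already sketched within the theorem's proof, so the genuinely ``hard part'' — deriving the exact $PoA$ formula itself — has been carried out beforehand.
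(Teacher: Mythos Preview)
Your proposal is correct and follows exactly the approach the paper uses: the corollary is obtained from the closed-form $PoA$ formula of the preceding theorem via the chain $\frac{(R+k)(R-k)}{R\,(R-k\sqrt{(n-2)/n})}\le 1+\frac{k}{R}<2$, together with the trivial branch $PoA=1$. Your added verification that the denominator is strictly positive (using $\sqrt{(n-2)/n}\le 1$ and $k<R$) is a welcome clarification, but otherwise the argument coincides with the paper's.
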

Note that the above result is quite robust, and states that, independent of $R$, the social welfare under competition is at least $1/2$ the optimal social welfare.

However, the following result shows that the providers' price of anarchy is not bounded by a constant, but could grow linearly with the number of providers (by at most $n/2$), even for fixed reservation value users.
\begin{theorem}
For an instance $\mathcal{G}$ with $u(x)=R$ for all $x\in[0,d]$, and $u(x)=0$ for all $x>d$, where $d\in\mathbb{R}_+$,
\begin{align*}
PPoA(\mathcal{G})=PPoS(\mathcal{G})=\begin{cases}
\frac{n\cdot\sqrt{\frac{n-2}{n}}\cdot(R^2-k^2)}{2\cdot R\cdot k}&\text{ if }R>k\text{ and }\sqrt{\frac{n-2}{n}}\geq \frac{k}{R},\\
1&\text{ otherwise.}
\end{cases}
\end{align*}
\end{theorem}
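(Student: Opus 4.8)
The plan is to collapse everything onto the scalar total investment $|b|$ and then reuse the equilibrium characterization from the proof of Theorem~\ref{thm:exi}. First I would observe that, by Lemma~\ref{lem:prof}(ii), the inner minimum defining each $\pi_i$ is resolved by comparing $R^2/k^2$ with $d^2/|b|^2$, a comparison that depends on $b$ only through $|b|$; hence the same branch is selected for all providers simultaneously. Summing over $i$ and using $\sum_i b_i=|b|$, the total profit $\Pi(b):=\sum_{i\in N}\pi_i(b)$ is a function of $|b|$ alone,
\[
\Pi(b)=\begin{cases}\left(\frac{R^2}{k}-k\right)|b| & \text{if } |b|\le \frac{dk}{R},\\[2pt] \frac{kd^2}{|b|}-k|b| & \text{if } |b|>\frac{dk}{R}.\end{cases}
\]
Since $\Pi$ depends only on $|b|$, and all subgame perfect Nash equilibria share the same total investment (proof of Theorem~\ref{thm:exi}), the infimum and supremum of $\sum_i\pi_i$ over $\mathcal{E}(\mathcal{G})$ coincide; thus $PPoA(\mathcal{G})=PPoS(\mathcal{G})$ and it suffices to evaluate one ratio.

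For the numerator I would maximize $\Pi$ over all $b^*\in\mathbb{R}_+^N$. When $R>k$ the upper branch is increasing and the lower branch decreasing in $|b|$, so $\Pi$ peaks at the kink $|b|=\frac{dk}{R}$, giving $\sup_{b^*}\sum_i\pi_i(b^*)=\frac{d(R^2-k^2)}{R}$. When $R\le k$ both branches are non-positive, so the maximal total profit is $0$, attained at $|b|=0$.

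For the denominator I would invoke the proof of Theorem~\ref{thm:exi}, which shows that an equilibrium has $|b|\in\{\frac{dk}{R},\,d\sqrt{\frac{n-2}{n}}\}$, with the value $d\sqrt{\frac{n-2}{n}}$ realized precisely when $\sqrt{\frac{n-2}{n}}\ge\frac{k}{R}$ and $\frac{dk}{R}$ otherwise. In the distinguished regime $R>k$ and $\sqrt{\frac{n-2}{n}}\ge\frac{k}{R}$ we have $d\sqrt{\frac{n-2}{n}}\ge\frac{dk}{R}$, so the equilibrium sits on the second branch of $\Pi$ at $|b|=d\sqrt{\frac{n-2}{n}}$; substituting and using $1-\frac{n-2}{n}=\frac{2}{n}$ yields equilibrium total profit $\frac{2kd}{n\sqrt{(n-2)/n}}$. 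Dividing the optimum by this value gives
\[
PPoA(\mathcal{G})=\frac{d(R^2-k^2)/R}{2kd/\bigl(n\sqrt{(n-2)/n}\bigr)}=\frac{n\sqrt{\frac{n-2}{n}}\,(R^2-k^2)}{2Rk}.
\]
In every remaining case --- $R\le k$, or $R>k$ with $\sqrt{\frac{n-2}{n}}<\frac{k}{R}$ --- the equilibrium investment equals the profit-maximizing investment $\frac{dk}{R}$ (or all profits are $0$), so numerator and denominator agree and the ratio is $1$ under the convention $0/0=1$.

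I expect the main obstacle to be bookkeeping rather than analysis: correctly matching each parameter range to the realized equilibrium value of $|b|$, and verifying that the two ``otherwise'' sub-cases collapse to $1$ (including the degenerate situation where optimal and equilibrium profits both vanish). Both rely on reusing the case distinction in the proof of Theorem~\ref{thm:exi} instead of redoing it, which keeps the remaining argument to the two short substitutions above.
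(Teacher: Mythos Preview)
Your proposal is correct and follows essentially the same route as the paper: reduce the total profit to a function of $|b|$ via Lemma~\ref{lem:prof}, read off the equilibrium value of $|b|$ from the proof of Theorem~\ref{thm:exi}, and compare with the profit-maximizing $|b|=\tfrac{dk}{R}$. Your write-up is in fact more explicit than the paper's (you spell out the piecewise form of $\Pi$ and why $PPoA=PPoS$), but the logical skeleton is identical.

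One small inaccuracy worth tightening: the claim that ``all subgame perfect Nash equilibria share the same total investment'' is not literally true when $R=k$, where the proof of Theorem~\ref{thm:exi} shows every $b$ with $|b|\le\tfrac{dk}{R}$ is an equilibrium. This does not affect your conclusion, since on that whole range $\Pi\equiv 0$ (and the optimum is also $0$), so $PPoA=PPoS=1$ by the $0/0$ convention you already invoke; but the justification for $PPoA=PPoS$ in that sub-case should be ``$\Pi$ is constant on the equilibrium set'' rather than ``$|b|$ is constant.'' Similarly, for $R<k$ the equilibrium has $|b|=0$ rather than $\tfrac{dk}{R}$, which you correctly absorb into your parenthetical ``(or all profits are $0$)''.
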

\begin{proof}
First, observe that if $R\leq k$, then $\sum_{i\in N}\pi_i(b)\leq 0$ for all $b\in\mathbb{R}_+^N$ and thus all subgame perfect Nash equilbria maximize the sum of providers' profits.

Second, if $R>k$, then by Lemma \ref{lem:prof},  $b^*=\argmax\limits_{b\in\mathbb{R}_+^N} \sum_{i\in N}\pi_i(b)$ has $|b^*|=\frac{d\cdot k}{R}$. In particular, if $\sqrt{\frac{n-2}{n}}< \frac{k}{R}$, all subgame perfect Nash equilibrium maximize the sum of the providers' profits.

Third, if $R>k$ and $\sqrt{\frac{n-2}{n}}\geq\frac{k}{R}$, then from the proof of Theorem \ref{thm:exi}, it follows that in a subgame perfect Nash equilibrium, $|b|=d\cdot\sqrt{\frac{n-2}{n}}$. Hence, we get

$\frac{k\cdot\left(\frac{d^2}{\left(\frac{d\cdot k}{R}\right)^2}-1\right)\cdot \frac{d\cdot k}{R}}{k\cdot\left(\frac{d^2}{\left(d\cdot\sqrt{\frac{n-2}{n}}\right)^2}-1\right)\cdot d\cdot\sqrt{\frac{n-2}{n}}}=\frac{n\cdot\sqrt{\frac{n-2}{n}}\cdot(R^2-k^2)}{2\cdot R\cdot k}$.
\end{proof}

\section{Heterogeneous Users}\label{sec:elastic_demand}

We now turn to reservation functions that correspond to heterogeneous users with differing reservation values. As an example we choose one of the arguably simplest class of functions. We note that the study of different and application specific functions is an interesting open question. Here, we consider the reservation functions $u(x)=\frac{1}{x^{1/\alpha}}$, where $\alpha>0$, for all $x\in\mathbb{R}_+$. We distinguish between one and multiple provider games as for one provider games a subgame perfect Nash equilibrium might not exist. We study equilibrium existence, equilibrium uniqueness and the inefficiency in terms of the PoA and PPoA. Note that we again restrict the strategy spaces of the providers to path strategies on shortest paths and denote them by a single number $b_i \in \mathbb{R}_+$ due to Observation \ref{obs:singleNumber}. 

We show that for network investment games with a single provider a monopoly equilibrium exists if and only if $\alpha>1$. If $\alpha \leq 1$, it is always beneficial for the provider to decrease investment to a smaller, strictly positive amount. For the case of $n\geq 2$ there is a unique subgame perfect equilibrium. 
\begin{restatable}{theorem}{theoremOne}\label{thm:one}
Let $k$ denote the length of a shortest $s$-$t$ path in $G$. For $n=1$, there exists a subgame perfect Nash equilibrium if and only if $\alpha>1$. If $\alpha >1$, the equilibrium is unique and given by
$b_1=\frac{\left(1-\frac{2}{(\alpha+1)}\right)^{\frac{\alpha+1}{2}}}{k^\alpha}$.

For $n\geq 2$, there is a unique Nash equilibrium given by $b_i=\frac{1}{|N|}\cdot\frac{\left(1-\frac{2}{(\alpha+1)\cdot |N|}\right)^{\frac{\alpha+1}{2}}}{k^\alpha}$ for all $i\in N$. 
\end{restatable}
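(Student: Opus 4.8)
The plan is to collapse everything to the single-number strategies of Observation~\ref{obs:singleNumber} and to pin down closed forms for demand and profit before doing any optimization. First I would record that, on shortest path strategies, the per-user cost is linear with a transparent slope: routing total flow $|f|$ over shortest paths of length $k$ carrying total investment $|b|=\sum_i b_i$ yields $c_b(|f|)=\frac{k}{|b|}\,|f|$ (consistent with Lemma~\ref{lem:prof} in the homogeneous case and with the linearity from Proposition~\ref{prop:linearFunction}). For $u(x)=x^{-1/\alpha}$ the demand equation $u(|f|)=c_b(|f|)$ then gives $|f(b)|=\left(\frac{|b|}{k}\right)^{\alpha/(\alpha+1)}$, and substituting into the profit identity $\pi_i=k b_i\!\left(\frac{|f|^2}{|b|^2}-1\right)$ produces, with the shorthand $\beta:=\frac{2}{\alpha+1}\in(0,2)$, the clean expression $\pi_i(b)=k^{\beta-1}b_i|b|^{-\beta}-k b_i$. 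Note $\beta<1\iff\alpha>1$. Everything afterwards is single-variable calculus on this formula.

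For $n=1$ I would study $\pi_1(b_1)=k^{\beta-1}b_1^{1-\beta}-kb_1$ directly. Its derivative $\pi_1'(b_1)=k^{\beta-1}(1-\beta)b_1^{-\beta}-k$ is strictly negative for all $b_1>0$ when $\beta\ge1$ (i.e.\ $\alpha\le1$): the provider always gains by shrinking the investment toward $0$, the supremum of $\pi_1$ is approached only as $b_1\to0^+$ and is not attained, so no equilibrium exists. When $\beta<1$ (i.e.\ $\alpha>1$), $\pi_1'$ is strictly decreasing with $\pi_1'(0^+)=+\infty$ and $\pi_1'(b_1)\to-k<0$, hence there is a unique stationary point which is the global maximizer; solving $\pi_1'=0$ gives $b_1=\left(\frac{1-\beta}{k^{2-\beta}}\right)^{1/\beta}$, and since $1-\beta=\frac{\alpha-1}{\alpha+1}$, $\tfrac1\beta=\frac{\alpha+1}{2}$ and $\frac{2-\beta}{\beta}=\alpha$, this is exactly the stated $b_1=\frac{(1-2/(\alpha+1))^{(\alpha+1)/2}}{k^\alpha}$.

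For $n\ge2$ I would first establish that, for fixed $B_{-i}:=|b|-b_i>0$, the map $b_i\mapsto\pi_i$ is single-peaked. Writing $\pi_i'(b_i)=h(b_i)-k$ with $h(b_i)=k^{\beta-1}|b|^{-\beta-1}(|b|-\beta b_i)$, a short computation of $h'$ shows $h$ is positive and strictly decreasing on the interval where $|b|-\beta b_i>0$ and is $\le0$ thereafter, so $\pi_i'$ changes sign at most once, from positive to negative; thus the best response is unique, being either the interior critical point or $0$. In any equilibrium, every provider with $b_i>0$ sits at an interior maximum and hence satisfies the first-order condition $|b|-\beta b_i=k^{2-\beta}|b|^{\beta+1}$, which forces $b_i=\frac{|b|-k^{2-\beta}|b|^{\beta+1}}{\beta}$, the same value for all active providers; so an equilibrium with $m$ active providers must be the $m$-symmetric solution with $|b|=k^{-\alpha}(1-\beta/m)^{1/\beta}<k^{-\alpha}$. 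Any inactive provider $j$ would need $\pi_j'(0)\le0$, i.e.\ $|b|\ge k^{-\alpha}$, contradicting the strict inequality; and the all-zero profile is destroyed by a monopolistic deviation. Hence all $n$ providers are active, $m=n$, and solving the symmetric first-order condition (as in the $n=1$ computation, with $|b|=nb^*$ and $|b|-\beta b^*=(n-\beta)b^*$) yields the claimed $b_i=\frac1n\cdot\frac{(1-2/((\alpha+1)n))^{(\alpha+1)/2}}{k^\alpha}$, the unique equilibrium.

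The main obstacle is the second-order/global-optimality analysis for $\beta\in[1,2)$ (equivalently $\alpha\le1$) in the multi-provider case, where $\pi_i$ fails to be globally concave in $b_i$; the sign analysis of $h$ is exactly what rescues unimodality and lets me conclude that the interior stationary point is the global best response and that inactive players cannot be sustained. The remaining verification — that the $m$-symmetric total investment stays strictly below the entry threshold $k^{-\alpha}$ for every $m\le n$ — is a routine inequality using $(1-\beta/m)^{1/\beta}<1$, but it is the crux of the uniqueness argument, since it is what rules out all asymmetric and boundary equilibria.
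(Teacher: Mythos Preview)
Your proof is correct and follows essentially the same route as the paper: derive the closed-form profit on shortest-path strategies, use the first-order condition to force symmetry among active providers, and use the entry incentive $\pi_j'(0)>0$ to rule out inactive providers (together with the monopolistic deviation to kill the all-zero profile). The paper distributes this work over Lemmas~\ref{lem:con}, \ref{lem:inv}, \ref{lem:zer}, \ref{lem:one}, and \ref{lem:all} and then separately verifies that the candidate is an equilibrium, whereas your single-peakedness analysis of $h(b_i)=k^{\beta-1}|b|^{-\beta-1}(|b|-\beta b_i)$ packages necessity and sufficiency in one step; the mathematical content is the same.
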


For $n=1$, we refer to the appendix. For the proof of the theorem for $n\geq 2$, we need the following lemmas that are proven in the appendix.

\begin{restatable}{lemma}{propositionInv}\label{lem:inv}
Let $S$ be the set of providers $i\in N$ with $b_i>0$ for some investment vector $b$. Then, if there is some $i\in S$ with
$b_i \neq \frac{1}{|S|}\cdot\frac{\left(1-\frac{2}{(\alpha+1)\cdot |S|}\right)^{\frac{\alpha+1}{2}}}{k^\alpha}$,
then $b$ is not a subgame perfect Nash equilibrium.
\end{restatable}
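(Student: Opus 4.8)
The plan is to reduce the equilibrium condition to a single scalar first-order condition per active provider and to show that this condition alone pins down the investment. By Theorem~\ref{thm:sho}(i), in any subgame perfect Nash equilibrium every provider plays a shortest path strategy, and by Observation~\ref{obs:singleNumber} each such strategy is fully described by a single number $b_i\in\mathbb{R}_+$; moreover Theorem~\ref{thm:sho}(ii) guarantees that, when all opponents play shortest path strategies, provider $i$'s best response can also be sought among single numbers. Hence it suffices to analyze the profit as a function of the scalars $(b_i)_{i\in N}$ with $B:=|b|=\sum_i b_i$.

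First I would derive the closed form of the profit. For the reservation function $u(x)=x^{-1/\alpha}$ the demand is interior for every $B>0$, so it is determined by $u(|f|)=c_b(|f|)$; using that $c_b(x)=\tfrac{k}{B}\,x$ (the linear cost established in Propositions~\ref{prop:costStrictlyIncreasing} and~\ref{prop:linearFunction}, with slope read off exactly as in Lemma~\ref{lem:prof}), this gives $|f|=(B/k)^{\alpha/(\alpha+1)}$. The total payment of the users equals $|f|\,c_b(|f|)=|f|\,u(|f|)$, of which provider $i$ receives the proportional share $b_i/B$, while her investment cost is $k\,b_i$. Substituting $|f|$ yields $\pi_i(b)=A\,b_i\,B^{-2/(\alpha+1)}-k\,b_i$, where $A:=k^{(1-\alpha)/(\alpha+1)}$.

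The key step is the first-order condition. Since $b_i>0$ for every $i\in S$ and $\pi_i$ is differentiable in $b_i$ on $(0,\infty)$, a necessary condition for $b_i$ to be a best response is $\partial\pi_i/\partial b_i=0$. Differentiating (and using $\partial B/\partial b_i=1$) gives
\[
A\,B^{-2/(\alpha+1)}\left(1-\frac{2\,b_i}{(\alpha+1)\,B}\right)=k.
\]
The crucial observation is that the factor $A\,B^{-2/(\alpha+1)}$ is common to all providers, so the condition forces the bracket $1-\tfrac{2b_i}{(\alpha+1)B}$ to take the same value for every active $i$; consequently all active investments must coincide, $b_i=B/|S|$ for every $i\in S$. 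If instead some active $b_i$ violates the first-order condition, then $\partial\pi_i/\partial b_i\neq0$, and a small increase or (feasible, since $b_i>0$) decrease of $b_i$ strictly improves $\pi_i$, so $b$ is not a subgame perfect Nash equilibrium.

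It remains to solve for the common value. Substituting $b_i=B/|S|$ into the displayed equation yields $A\,B^{-2/(\alpha+1)}\bigl(1-\tfrac{2}{(\alpha+1)|S|}\bigr)=k$; solving for $B$ (using $k/A=k^{2\alpha/(\alpha+1)}$ and raising to the power $-(\alpha+1)/2$) gives $B=k^{-\alpha}\bigl(1-\tfrac{2}{(\alpha+1)|S|}\bigr)^{(\alpha+1)/2}$, whence $b_i=B/|S|$ is exactly the claimed expression. The statement of the lemma is then the contrapositive: any active $b_i$ different from this value cannot satisfy the first-order condition and hence cannot be part of an equilibrium. I expect the main obstacle to be the bookkeeping that establishes the profit formula --- in particular correctly identifying the slope of $c_b$ and the factor $k$ in the investment cost coming from the single-number reduction --- after which the factorization of the first-order condition makes the equality of active investments essentially immediate.
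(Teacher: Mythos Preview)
Your proposal is correct and follows essentially the same route as the paper's proof: reduce to scalar investments, write the profit as $\pi_i(b)=b_i\bigl(k^{(1-\alpha)/(\alpha+1)}|b|^{-2/(\alpha+1)}-k\bigr)$, impose the first-order condition at each active provider (valid since $b_i>0$ is interior and $\pi_i$ is differentiable there), observe that the resulting equation forces all active $b_i$ to coincide, and solve for the common value. The only differences are cosmetic: you re-derive the demand and profit formulas inline whereas the paper cites its Lemma~\ref{lem:pro}, and you handle all $|S|\ge 1$ uniformly while the paper separates $|S|=1$ by deferring to Theorem~\ref{thm:one}.
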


\begin{restatable}{lemma}{propositionAllinvest}\label{lem:all}
Let $S$ be the set of providers $i\in N$ with $b_i>0$. If $S\neq N$, then this is not a subgame perfect Nash equilibrium.
\label{lem:allInvest}
\end{restatable}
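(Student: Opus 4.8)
The plan is a proof by contradiction. Suppose $b$ is a subgame perfect Nash equilibrium whose support $S=\{i\in N:b_i>0\}$ satisfies $S\neq N$; I will produce a provider with a profitable deviation. Throughout I use Observation~\ref{obs:singleNumber} to treat each strategy as a single number $b_i\in\mathbb{R}_+$, and I write $|b|=\sum_j b_j$ and $\beta:=\frac{\alpha}{\alpha+1}\in(0,1)$. First I record the two closed forms that drive everything. By Proposition~\ref{prop:equivalence} I may place all investment and all flow on a single shortest $s$-$t$ path, whose $k$ edges each carry flow $x$ over total investment $|b|$; hence the cost per user is the linear function $c_b(x)=kx/|b|$ (linear as guaranteed by Proposition~\ref{prop:linearFunction}). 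Solving $u(|f|)=c_b(|f|)$, i.e. $|f|^{-1/\alpha}=k|f|/|b|$, gives the demand $|f|=(|b|/k)^{\beta}$, and therefore
\[
\pi_i(b)=k\,b_i\left(\frac{|f|^2}{|b|^2}-1\right)=k\,b_i\,g(|b|),\qquad g(y):=y^{2\beta-2}k^{-2\beta}-1 .
\]

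Next I dispose of the degenerate case $S=\emptyset$. Then $|b|=0$, and a single provider deviating to $b_i'>0$ earns $\pi_i=k^{1-2\beta}(b_i')^{2\beta-1}-k\,b_i'$; since the ratio of the two terms is $k^{-2\beta}(b_i')^{2\beta-2}\to\infty$ as $b_i'\to0^+$ (because $2\beta-2<0$), the profit is strictly positive for small $b_i'$, so $b$ is not an equilibrium.

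Now assume $S\neq\emptyset$ and fix any $i\notin S$ (which exists since $S\neq N$), so that $b_i=0$ and $\pi_i(b)=0$. The crux is the inequality $g(|b|)>0$, equivalently $|b|<k^{-\alpha}$. I would derive it from the equilibrium condition on the incumbents: choose any $j\in S$. Since $b$ is an equilibrium, $b_j$ maximizes $\pi_j$, and as $b_j>0$ lies in the interior of $\mathbb{R}_+$ the first-order condition $\partial_{b_j}\pi_j=0$ holds; because $\partial_{b_j}|b|=1$ this reads $g(|b|)=-b_j\,g'(|b|)$, and $g'(y)=(2\beta-2)y^{2\beta-3}k^{-2\beta}<0$ forces $g(|b|)>0$. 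Equivalently, one may invoke Lemma~\ref{lem:inv}: every $j\in S$ takes the stated value, whence $|b|=(1-\tfrac{2}{(\alpha+1)s})^{(\alpha+1)/2}k^{-\alpha}$ with $s=|S|$, and substitution gives $\frac{|f|^2}{|b|^2}=(1-\tfrac{2}{(\alpha+1)s})^{-1}>1$, the base $1-\tfrac{2}{(\alpha+1)s}\in(0,1)$ being exactly the condition for $|b|$ to be a positive real.

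Finally, I evaluate the marginal profit of the entrant $i$. Since $\pi_i(b_i')=k\,b_i'\,g(|b|+b_i')$ and $\pi_i(0)=0$, differentiating gives $\frac{d}{db_i'}\pi_i\big|_{b_i'=0}=k\,g(|b|)>0$, so every sufficiently small $b_i'>0$ yields $\pi_i>0$, a profitable deviation. This contradicts $b$ being a subgame perfect Nash equilibrium and proves the claim. I expect the third paragraph to be the real obstacle: it is not true in isolation that a zero-investor always wants to enter—if the incumbents had over-invested so that $|b|\geq k^{-\alpha}$, entry would strictly lose money—so the argument must extract $|b|<k^{-\alpha}$ from the incumbents' own optimality, either through the first-order condition above or through the explicit values of Lemma~\ref{lem:inv}.
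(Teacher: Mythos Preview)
Your proof is correct and follows essentially the same approach as the paper: establish that $|f|^2/|b|^2>1$ from the incumbents' optimality and then observe that any non-investor has strictly positive marginal profit at zero. The paper organizes this through a three-way case split on $|S|$ (Lemmas~\ref{lem:zer}, \ref{lem:one}, \ref{lem:inv}) and plugs in the explicit equilibrium value of $|b|$, whereas your direct first-order condition on an incumbent yields $g(|b|)=-b_j g'(|b|)>0$ uniformly for all $|S|\geq 1$ without splitting on $|S|$ or on $\alpha$---a slightly more streamlined packaging of the same idea.
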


\begin{proof}[Proof of Theorem \ref{thm:one}]
Assume $n\geq 2$. First, note that by Lemma \ref{lem:allInvest} all providers have to invest and by Lemma \ref{lem:inv} in a Nash equilibrium all have to invest $\tilde{b}=\frac{1}{|N|}\cdot\frac{\left(1-\frac{2}{(\alpha+1)\cdot |N|}\right)^{\frac{\alpha+1}{2}}}{k^\alpha}$. Thus, $b= \tilde{b} \cdot \mathbbm{1} \in \mathbb{R}^N$ is the only possibility for a Nash equilibrium. It remains to show that $b$ is in fact a Nash equlibrium. In order to do so, we fix some provider $i \in N$ and prove that $\tilde{b}$ is in fact a best response to $b=\tilde{b} \cdot \mathbbm{1}$. First, we observe in  Lemma \ref{lem:con} in the appendix that $\pi_i(b_i,b_{-i})$ is continuous in $b_i$. We proceed by showing that (i) $\tilde{b}$ is the only value fulfilling the first order conditions and (ii) yields a positive profit. For (i) note that
 \begin{align*}
 \frac{\partial \pi_i(b_i,b_{-i})}{\partial b_i} &= \frac{k^{\frac{1-\alpha}{\alpha+1}}}{(b_i+|b|_{-i})^{\frac{2}{\alpha+1}}}\cdot\left(1-\frac{2}{(\alpha+1)}\cdot\frac{b_i}{b_i+|b|_{-i}}\right)-k.
 \end{align*}
 We have shown in Lemma \ref{lem:inv} that $\tilde{b}$ fulfills the first order condition for all $i \in N$. For showing uniqueness, define $h(x)=\left(1-\frac{2}{(\alpha+1)}\cdot\frac{x}{x+|b|_{-i}}\right)$ and observe that if $h(\hat{x}) < 0$ for some $\hat{x}$, then $h(x) < 0$ for all $x \geq \hat{x}$. We conclude that all values $b^*$ fulfilling the first order condition have the property $b^* < \hat{x}$. However, for all $b^* < \hat{x}$, the $h(b^*)$ is decreasing in $b^*$. This shows that there is in fact only one value fulfilling the first order condition.
 
 For (ii), observe that $\pi_i(b)$ evaluates to $\tilde{b}\left(k\left(1-\frac{2}{(\alpha+1)n}\right)^{-1}-k\right)$, which is positive since $\left(1-\frac{2}{(\alpha+1)n}\right)^{-1}>1$.
\end{proof}

We are now ready to precisely quantify the inefficiency of subgame perfect Nash equilibria for any instance of a network investment game with reservation function $u(x)=\frac{1}{x^{1/\alpha}}$, where $\alpha>1$. The proof can be found in the appendix. As in immediate consequence, we obtain that the price of anarchy is upper bounded by  a surprisingly small constant of approximately $1.22$ for every graph and every $\alpha>1$.
\begin{restatable}{theorem}{PoAElastic}
For an instance $\mathcal{G}$ with  $u(x)=\frac{1}{x^{1/\alpha}}$, where $\alpha>1$,
\[PoA(\mathcal{G})=PoS(\mathcal{G})=\left(\frac{\alpha n}{n(\alpha+1)-2}\right)^{\frac{\alpha-1}{2}}\frac{2\alpha n}{n(\alpha+1)+2(\alpha-1)}\leq 2\sqrt{\frac{1}{e}}.\]
\end{restatable}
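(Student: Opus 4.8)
The plan is to collapse the whole computation to the scalar total investment $B:=|b|$, using Observation~\ref{obs:singleNumber}. First I would identify the Wardrop cost curve on shortest-path strategies: arguing exactly as in the derivation behind Lemma~\ref{lem:prof} (or, equivalently, via the single-edge reduction licensed by Theorem~\ref{thm:sho} together with Proposition~\ref{prop:equivalence}), the common path cost at total flow $x$ is the linear function $c_b(x)=kx/|b|$, where $k$ is the shortest-path length. Equating supply and demand, $u(|f|)=c_b(|f|)$, i.e.\ $|f|^{-1/\alpha}=k|f|/|b|$, gives $|f|=(|b|/k)^{\alpha/(\alpha+1)}$. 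Since on shortest paths $\sum_e b_e=k|b|$, the welfare depends on $b$ only through $B$:
\[ SW(B)=\int_0^{|f|}x^{-1/\alpha}\,dx-kB=\frac{\alpha}{\alpha-1}\Bigl(\frac{B}{k}\Bigr)^{\frac{\alpha-1}{\alpha+1}}-kB. \]

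Next I would evaluate numerator and denominator of the ratio. As $\tfrac{\alpha-1}{\alpha+1}<1$, the map $SW(\cdot)$ is strictly concave, so its unique maximiser solves the first-order condition and yields the social optimum $B^*=(\tfrac{\alpha}{\alpha+1})^{(\alpha+1)/2}/k^\alpha$; by Theorem~\ref{thm:one} the unique equilibrium has total investment $B^{\mathrm{eq}}=(1-\tfrac{2}{(\alpha+1)n})^{(\alpha+1)/2}/k^\alpha$, and it is precisely the uniqueness of the equilibrium that forces $PoA=PoS$. Writing $SW(c/k^\alpha)=k^{1-\alpha}\bigl(\tfrac{\alpha}{\alpha-1}c^{(\alpha-1)/(\alpha+1)}-c\bigr)$ cancels the $k$-dependence, and substituting the two values of $c$, then factoring $(\tfrac{\alpha}{\alpha+1})^{(\alpha-1)/2}$ out of the numerator bracket and $(1-\tfrac{2}{(\alpha+1)n})^{(\alpha-1)/2}$ out of the denominator bracket, collapses the ratio to the claimed closed form $\bigl(\tfrac{\alpha n}{n(\alpha+1)-2}\bigr)^{(\alpha-1)/2}\tfrac{2\alpha n}{n(\alpha+1)+2(\alpha-1)}$. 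Along the way I would verify $SW(B^*),SW(B^{\mathrm{eq}})>0$, so the quotient is well defined.

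The main obstacle is the universal bound $\le 2/\sqrt e$, because the closed form equals $1$ at $n=2$ and approaches its supremum only as $n,\alpha\to\infty$, so no term-by-term estimate works (in particular $(\tfrac{\alpha}{\alpha+1})^{(\alpha+1)/2}\le e^{-1/2}$ on its own loses too much slack). Taking logarithms and setting $D:=n(\alpha+1)$ reduces the claim to
\[ \Phi(\alpha,n):=\frac{\alpha+1}{2}\ln\frac{\alpha}{\alpha+1}-\frac{\alpha-1}{2}\ln\Bigl(1-\frac{2}{D}\Bigr)-\ln\Bigl(1+\frac{2(\alpha-1)}{D}\Bigr)\le-\frac12 . \]
The decisive observation is that $\partial\Phi/\partial n=\dfrac{(\alpha^2-1)(\alpha+1)(n-2)}{D(D-2)(D+2(\alpha-1))}$ has the sign of $n-2$, so $\Phi$ is quasi-convex in $n$ with its minimum at $n=2$, whence $\sup_n\Phi=\max\{\Phi(\alpha,1),\lim_{n\to\infty}\Phi(\alpha,n)\}$. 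The limit case is immediate, $\lim_{n\to\infty}\Phi=\tfrac{\alpha+1}{2}\ln(1-\tfrac{1}{\alpha+1})\le-\tfrac12$ by $\ln(1-x)\le-x$, and this is exactly the regime where $2/\sqrt e$ is attained in the limit. The residual and fiddliest piece is the one-variable estimate $\Phi(\alpha,1)\le-\tfrac12$ for all $\alpha>1$, where the three logarithms simplify to $\tfrac{\alpha+1}{2}\ln\tfrac{\alpha}{\alpha+1}-\tfrac{\alpha-1}{2}\ln\tfrac{\alpha-1}{\alpha+1}-\ln\tfrac{3\alpha-1}{\alpha+1}$; I would settle it by bounding its derivative, checking that the two endpoints $\alpha\to1^+$ (value $-\ln2$) and $\alpha\to\infty$ (value $\tfrac12-\ln3$) lie below $-\tfrac12$, and confirming that the single interior maximum also stays below $-\tfrac12$.
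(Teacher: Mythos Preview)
Your approach is essentially the paper's: reduce to the scalar $B=|b|$, compute $SW(B)$, extract $B^{*}$ and $B^{\mathrm{eq}}$, form the ratio, and bound it via monotonicity in $n$ together with the limit $n\to\infty$. One minor point in your favour: the factor $(n-2)$ you find in $\partial\Phi/\partial n$ is correct, whereas the paper's displayed formula for $\partial PoA/\partial n$ carries a factor $(n-1)$; the paper's argument still goes through because it handles $n=1$ by a separate bare assertion, which your sketch for $\Phi(\alpha,1)\le -\tfrac12$ actually treats more carefully.
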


As an interesting consequence, we would like to point out that the equilibrium for a two providers game is in fact optimal.
\begin{corollary}
For $u(x)=\frac{1}{x^{1/\alpha}}$, where $\alpha>1$, and $n=2$, we have $PoA=1$.
\end{corollary}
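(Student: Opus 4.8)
The plan is to obtain the corollary as a direct specialization of the preceding price-of-anarchy theorem for heterogeneous users, simply by setting $n=2$ in its closed-form expression. I would write the stated formula as a product of two factors, the first being $\bigl(\frac{\alpha n}{n(\alpha+1)-2}\bigr)^{(\alpha-1)/2}$ and the second being $\frac{2\alpha n}{n(\alpha+1)+2(\alpha-1)}$, and evaluate each factor at $n=2$ separately.

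For the first factor, the base becomes $\frac{2\alpha}{2(\alpha+1)-2}=\frac{2\alpha}{2\alpha}=1$, so the entire factor equals $1^{(\alpha-1)/2}=1$ for every $\alpha>1$, irrespective of the exponent. For the second factor, substituting $n=2$ gives $\frac{4\alpha}{2(\alpha+1)+2(\alpha-1)}=\frac{4\alpha}{4\alpha}=1$. Multiplying the two factors yields $PoA(\mathcal{G})=PoS(\mathcal{G})=1$, which is exactly the claim.

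There is essentially no obstacle here: the argument is a one-line algebraic substitution into an already-established identity, and the preceding theorem's formula is stated to hold for all $\alpha>1$ with no additional side condition, so the regime $n=2$ is covered directly. As an independent sanity check one could instead invoke Theorem \ref{thm:one} with $n=2$ to obtain the unique equilibrium total investment $|b|=\bigl(\frac{\alpha}{\alpha+1}\bigr)^{(\alpha+1)/2}/k^{\alpha}$, compute the welfare-maximizing total investment separately, and confirm that the two coincide; but the substitution route is shorter and is the one I would present. The only point worth flagging explicitly is the conceptual reading of the result: the collapse of both factors at $n=2$ says that, under this reservation function, duopoly already suffices to drive the equilibrium investment exactly to the socially optimal level, so no efficiency is lost to competition.
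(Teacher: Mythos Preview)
Your proposal is correct and takes essentially the same approach as the paper: the corollary is stated there without proof, as an immediate consequence of the preceding price-of-anarchy formula, and your direct substitution of $n=2$ into the two factors is exactly the intended one-line verification.
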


We now consider the total providers' profit and derive tight bounds on the PPoA. Interestingly, in contrast to the PoA, the PPoA is not  constant but grows almost linearly with the number of providers.

\begin{restatable}{theorem}{theoremPPOA}\label{thm:ppoa2}
For an instance $\mathcal{G}$ with $u(x)=\frac{1}{x^{1/\alpha}}$, where $\alpha>0$, and $n\geq 2$,
\[PPoA(\mathcal{G})=PPoS(\mathcal{G})=\begin{cases}
\infty&\text{ if }\alpha<1,\\
n\left(\frac{(\alpha+1)n-2}{n(\alpha-1)}\right)^{\frac{1-\alpha}{2}} &\text{otherwise.}
\end{cases}\]
\end{restatable}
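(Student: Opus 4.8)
The plan is to collapse the whole problem to a single scalar---the total investment $t:=|b|=\sum_{i\in N}b_i$---and to express the total providers' profit as a one-variable function of $t$. By Observation~\ref{obs:singleNumber} I may restrict every provider to a shortest-path strategy described by a single number, and then, concentrating all investment on one shortest path of length $k$, Propositions~\ref{prop:costStrictlyIncreasing} and~\ref{prop:linearFunction} give the per-user cost $c_b(x)=kx/|b|$. Solving $u(|f|)=c_b(|f|)$ for $u(x)=x^{-1/\alpha}$ yields demand $|f|=(|b|/k)^{\alpha/(\alpha+1)}$ and price $c_b=(k/|b|)^{1/(\alpha+1)}$, so that the total profit (total revenue $|f|\,c_b$ minus total investment cost $k|b|$) becomes
\[
\Pi(t)=\left(\tfrac{t}{k}\right)^{\frac{\alpha-1}{\alpha+1}}-k\,t .
\]
I would first check this against the marginal-profit expression already appearing in the proof of Theorem~\ref{thm:one}. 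Both numerator and denominator of the PPoA are then computed inside this parametrization: the denominator because equilibria are shortest-path strategies, and the numerator because the single-edge reduction discussed after Theorem~\ref{thm:sho} applies to the profit-maximizing profile as well. Since Theorem~\ref{thm:one} gives a \emph{unique} equilibrium for $n\ge2$, the infimum and supremum over $\mathcal E(\mathcal G)$ coincide and hence $PPoA=PPoS$.

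Writing $\beta:=\frac{\alpha-1}{\alpha+1}$, a single identity drives the rest: for any $t=Ck^{-\alpha}$ one has $\Pi(Ck^{-\alpha})=k^{1-\alpha}\,(C^{\beta}-C)$, using $(\alpha+1)\beta=\alpha-1$. For the numerator I analyze $\sup_{t\ge0}\Pi(t)$ through the sign of $\beta$. If $\alpha<1$ then $\beta<0$, so $(t/k)^{\beta}\to+\infty$ as $t\to0^{+}$ and the supremum is $+\infty$; since Theorem~\ref{thm:one}(ii) makes the equilibrium profit strictly positive and finite, the ratio is infinite, which is the first case. If $\alpha>1$ then $0<\beta<1$, so $\Pi$ is strictly concave with $\Pi'(0^{+})=+\infty$ and $\Pi'(\infty)=-k$; solving $\Pi'(t^{*})=0$ gives $t^{*}=\beta^{(\alpha+1)/2}k^{-\alpha}$, and the identity yields $\Pi(t^{*})=k^{1-\alpha}\beta^{(\alpha-1)/2}\,\frac{2}{\alpha+1}$.

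For the denominator I substitute the equilibrium total investment from Theorem~\ref{thm:one}, namely $t_{\mathrm{eq}}=\gamma^{(\alpha+1)/2}k^{-\alpha}$ with $\gamma:=1-\frac{2}{(\alpha+1)n}$, into the same identity to get $\Pi(t_{\mathrm{eq}})=k^{1-\alpha}\gamma^{(\alpha-1)/2}\,\frac{2}{(\alpha+1)n}$. The factors $k^{1-\alpha}$ and $\frac{2}{\alpha+1}$ cancel in the ratio, leaving
\[
PPoA=\frac{\Pi(t^{*})}{\Pi(t_{\mathrm{eq}})}=n\left(\frac{\beta}{\gamma}\right)^{\frac{\alpha-1}{2}}=n\left(\frac{(\alpha+1)n-2}{n(\alpha-1)}\right)^{\frac{1-\alpha}{2}},
\]
after simplifying $\beta/\gamma=\frac{(\alpha-1)n}{(\alpha+1)n-2}$. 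The boundary value $\alpha=1$ (where $\beta=0$ and the stated fraction is ill-defined) I would settle directly: there $\Pi(t)=1-kt$, so $\sup_t\Pi(t)=1$ while $\Pi(t_{\mathrm{eq}})=1/n$, giving $PPoA=n$, which is exactly the limit of the formula as $\alpha\downarrow1$.

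The main obstacle is structural rather than computational, and it has two parts. First, I must justify that the \emph{unconstrained} profit optimum $\sup_{b^{*}}\sum_{i}\pi_i(b^{*})$ is genuinely realized by the scalar function $\Pi(t)$---that no unbalanced or non-shortest-path profile does strictly better---which I would obtain from the single-edge reduction; for $\alpha<1$ this is automatic, since $\sup_t\Pi(t)=\infty$ already forces the (at least as large) unconstrained supremum to be infinite. Second, the case $\alpha=1$ needs the separate limiting treatment above, because there the optimum is a non-attained supremum at vanishing investment rather than an interior critical point, so the critical-point formula cannot be evaluated naively.
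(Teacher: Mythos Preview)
Your proposal is correct and follows essentially the same route as the paper: reduce to shortest-path strategies, express the total profit as a function of the single scalar $|b|$, optimize it for the numerator, plug in the equilibrium from Theorem~\ref{thm:one} for the denominator (this is the content of the paper's Lemma~\ref{lem:pro2}), and take the ratio with the same three-case split on $\alpha$. Your organizing identity $\Pi(Ck^{-\alpha})=k^{1-\alpha}(C^{\beta}-C)$ is a tidy packaging of what the paper does by direct substitution, and your explicit treatment of the $\alpha=1$ boundary and of why the numerator may also be computed within the shortest-path reduction are points the paper leaves implicit; but these are clarifications of the same argument rather than a different one.
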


\section{Conclusion}
We have considered a class of games, called network investment games, in which providers invest in bandwidth of particular services that are then used by users to process a specific task. We studied the existence and inefficiency of subgame perfect Nash equilibria when the underlying technology graph is series-parallel, reflecting that services can be either complements or substitutes. We showed that it is essentially without loss of generality to restrict attention to one-edge networks. For two particular classes of reservation functions we studied the performance of subgame perfect Nash equilibria and we quantified this performance in terms of the social welfare and the total providers' profit.

Several questions remain open. We have given examples for which subgame perfect Nash equilibria exist and do not exit, and we have shown that for series-parallel graphs the existence of subgame perfect Nash equilibria is essentially independent of the technology graph. It would be interesting to know if there is a (general) property of the reservation functions that characterize the existence of subgame perfect Nash equilibria. In case there is such an existence condition, it would be nice to find a price of anarchy bound that is independent of the class of reservation functions being used? Unfortunately, our results have far less implications for graphs that are not series-parallel, like for example the Braess' graph. Is it still true that for more general networks providers only invest in shortest paths in equilibrium?



\bibliography{lipics-v2019}

\appendix
\section{Omitted Proofs and Lemmas}
\section*{Preliminaries}

\CBStrictlyIncreasing*
\begin{proof}
Given an investment matrix $b$, let $f,f'$ be corresponding Wardrop flows with $|f|<|f'|$. We will prove that there exists a $P\in\Pa$ such that $f_e<f'_e$ for all $e\in P$. Suppose not, then we construct a set of vertices $S$ with $s\in S$ and $t\notin S$ as follows. First, add $s$ to $S$. As long as there is an edge $e$ from $S$ to $u \in V\setminus S$ with $f_e<f'_e$ add $u$ to $S$. Since there is at least one edge $e \in P$ with $f_e\geq f'_e$ for all $P \in \Pa$, we have that $t$ will never be added to $S$. Thus, $S$ is a proper subset of $V$ and for all edges $e$ from $S$ to $V\setminus S$, $f_e\geq f'_e$. Given that $G$ is series-parallel, there are no edges from $V\setminus S$ to $S$, which would imply that $|f|\geq|f'|$ which is a contradiction.

Since there exists a $P\in\Pa$ such that $f_e<f'_e$ for all $e\in P$ and cost functions are strictly increasing by definition, the proposition follows by Remark \ref{remark:allFlowsSameCost}.
\end{proof}

\fLinearFunction*

Before we can prove the proposition by means of induction, we need the following structural lemma about series-parallel graphs.

\begin{lemma}\label{lem:stillSePaGraph}
Let $G_{st}$ be a series-parallel graph and $\Pa$ be the set of simple $s$-$t$ paths in $G_{st}$. Let $G'_{st}$ arise from $G_{st}$ by choosing any non-empty set $\Pa' \subseteq \Pa$ and defining $G'=(V(G), \{e \in E(G) \mid e \text{ is in at least one path in }\Pa' \})$. Then $G'_{st}$ is a series-parallel graph. 
\end{lemma}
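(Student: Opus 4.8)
The plan is to induct on the recursive construction of the series-parallel graph $G_{st}$, exactly the decomposition used in the proof of Theorem~\ref{thm:sho}. For the base case, $G_{st}$ is a single edge, so $\Pa$ contains exactly one path and any non-empty $\Pa'$ equals $\{this path\}$; the generated graph $G'$ is that single edge, which is trivially series-parallel. For the inductive step I assume the claim holds for series-parallel graphs $G_1$ and $G_2$ and treat the two ways $G$ can be built from them. Throughout, the key observation I will use repeatedly is that an edge $e$ lies in $E(G')$ if and only if $e$ lies on some path of $\Pa'$, so the restriction of $G'$ to the edges of a subgraph $G_j$ is precisely the graph generated by the family of paths (or path-segments) of $\Pa'$ that live in $G_j$.

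\textbf{Parallel composition.} Suppose $G$ is the parallel composition of $(G_1,s,t)$ and $(G_2,s,t)$, which share only the two terminals $s,t$ and have no edges between their interiors. I would first establish the \emph{confinement claim}: every simple $s$-$t$ path of $G$ lies entirely in $G_1$ or entirely in $G_2$. This holds because a simple path visits $s$ and $t$ only once (at its endpoints), and any switch between the interiors of $G_1$ and $G_2$ would require either an edge joining the two interiors (none exists) or a second visit to a terminal (impossible for a simple path). Hence $\Pa$ partitions as $\Pa(G_1)\sqcup\Pa(G_2)$, and $\Pa'$ partitions into $\Pa'_1=\Pa'\cap\Pa(G_1)$ and $\Pa'_2=\Pa'\cap\Pa(G_2)$. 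The part of $G'$ inside $G_j$ is the graph generated by $\Pa'_j$, which is series-parallel by the inductive hypothesis whenever $\Pa'_j\neq\emptyset$; if both families are non-empty then $G'$ is the parallel composition of these two series-parallel graphs, and if only one is non-empty then $G'$ equals the corresponding series-parallel subgraph. Either way $G'$ is series-parallel.

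\textbf{Series composition.} Suppose $G$ is the series composition of $(G_1,s,v)$ and $(G_2,v,t)$, which share only the cut vertex $v$. Here I would prove the \emph{splitting claim}: every simple $s$-$t$ path $P$ of $G$ passes through $v$ exactly once and decomposes uniquely as the concatenation of a simple $s$-$v$ path $P_1$ in $G_1$ and a simple $v$-$t$ path $P_2$ in $G_2$; this follows since $v$ is the only vertex shared by the two parts and there are no edges between their remaining vertices. Setting $\Pa'_1=\{P_1:P\in\Pa'\}$ and $\Pa'_2=\{P_2:P\in\Pa'\}$, both families are non-empty (each $P\in\Pa'$ contributes to each), and the part of $G'$ inside $G_j$ is the graph generated by $\Pa'_j$, hence series-parallel by induction. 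Since these two subgraphs meet only at $v$, their series composition is $G'$, which is therefore series-parallel.

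The main obstacle is making the two structural claims (confinement under parallel composition, and splitting at the cut vertex under series composition) fully rigorous, since these are the only places where the series-parallel hypothesis on $G$ is genuinely used; everything else is bookkeeping of which paths land in which subgraph. One mild technical wrinkle is that $G'$ retains all of $V(G)$, so it may carry vertices not incident to any chosen edge; I would dispatch this by noting that such isolated vertices do not affect membership in the class of series-parallel graphs, so they may be ignored (or equivalently added back after the composition) without altering the argument.
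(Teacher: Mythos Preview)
Your proposal is correct and follows essentially the same inductive approach as the paper's proof, inducting on the series-parallel decomposition and treating the series and parallel cases separately. You are in fact slightly more careful than the paper: you explicitly justify the confinement and splitting claims, handle the parallel case where one of the $\Pa'_j$ is empty, and address leftover isolated vertices, whereas the paper's version glosses over these points.
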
 
\begin{proof}
We show the lemma by induction on the series-parallel graph structure. First, note that the lemma is clearly fulfilled for a series-parallel graph with a single edge. Next, assume that $G_{st}$ arises from a series composition of series-parallel graphs $(G_{sv}$ and $(G_{vt})$. In this case, each path in $G_{st}$ also forms a path in $G_{sv}$ and $G_{vt}$, so $G'_{sv}$ and $(G'_{vt}$ are series-parallel, where $G'_{sv}=(V(G_{sv}), \{e \in E(G_{sv}) \mid e \text{ is in at least one path in }\Pa' \})$ and $G'_{vt}=(V(G_{vt}), \{e \in E(G_{vt}) \mid e \text{ is in at least one path in }\Pa' \})$. Due to the definition of $G'_{sv}$ and $G'_{vt}$, the graph $G'_{st}$ can be constructed by a series composition of $G'_{sv}$ and $G'_{vt}$, thus is also series-parallel.

Now, assume that $G_{st}$ arises from a parallel composition of series-parallel graphs $(G^1_{st}$ and $G^2_{st}$. In this case, each path in $G_{st}$ forms an $s$-$t$ path in either $G^1_{st}$ or $G^2_{st}$, so ${G^1}'_{st}$ and ${G^2}'_{st}$ are series-parallel, where ${G^1}'$ and ${G^2}'$ are defined as above. Due to the definition of ${G^1}'$ and ${G^2}'$, the graph $(G'_{st}$ can be constructed by a parallel composition of ${G^1}'_{st}$ and ${G^2}'_{st}$, thus is also series-parallel.  
\end{proof}

\begin{proof}[Proof of Proposition \ref{prop:linearFunction}]
Let $G_{st}$ be a series-parallel graph and $b$ some investment vector. First, we do the following preprocessing. Observe that the flow particles can only use edges $e \in E$ that are on at least one $s$-$t$ path $P$ with $b_e>0$ for all $e \in P$. Thus, we can ignore and delete all other edges without changing $c_b(|f|)$. By using Lemma \ref{lem:stillSePaGraph}, we observe that the remaining graph is series-parallel. In the following, we show the Proposition by induction on the series-parallel structure.

First, note that if $E(G)=\{(s,t)\}$, i.e., $E(G)$ contains only the edge $e=(s,t)$, then the cost function $c_b$ is given by definition of the cost function of edge $e$, since $|f| = f_e$. We have
\[c_b(|f|) = c_e(|f|,b_e) = \frac{|f|}{b_e},\]
which is a linear function in $|f|$. Now suppose $G_{st}$ arises from a series or a parallel composition of series-parallel graphs $G^1_{sv}$ and $G^2_{vt}$, or $G^1_{st}$ and $G^2_{st}$, respectively. We assume that the Proposition holds for the two subgraphs and we have $c_b(|f|) = g_1 \cdot |f|$ for $G^1$ and $c_b(|f|) = g_2 \cdot |f|$ for $G^2$, where $g_1$ and $g_2$ are the corresponding slopes that do not depend on $f$. In the following, we calculate the cost function $c_b(|f|)$ in $G$. First, consider the case of a series composition. Then, the total cost of any flow particle is clearly $c_b(|f|) = g_1 \cdot |f| + g_2 \cdot |f|$, which is a linear function in $|f|$ with slope $g_1+g_2$. Second, for a parallel composition, note that a particle either uses graph $(G^1_{st}$ or $(G^2_{st}$. Both routes have the same cost, due to the definition of the Wardrop flow. Assume w.l.o.g.\ that the particle travels through $(G^1_{st}$. Let $|f_1|$ denote the amount of particles using $G^1$ and $|f_2|$ the amount of particles using $G^2$, respectively. Then, $c_b(|f|) = g_1 \cdot |f_1|$, where $|f_1|$ arises such that $g_1 |f_1| = g_2 |f_2|$ with $|f_1| + |f_2| = |f|$. We conclude $|f_1| = \frac{g_2}{g_1+g_2}|f|$, i.e.,\ \[c_b(|f|) = \frac{g_1 g_2}{g_1+g_2}\cdot|f|,\]
which is a linear function in $f$ and finishes the proof.  
\end{proof}

\section*{Characterization of Equilibria}
\investmentsDoNotDependOnPath*
\begin{proof}
We calculate the profit of some provider $i \in N$ and show that it only depends on the sum of the invested bandwidths. Using the Wardflow $f$ as defined in Lemma \ref{lem:flo}, we obtain
\begin{align*}
\pi_i(b)=  \sum_{e \in E^+} b_{i,e} \cdot\frac{f_e^2}{b_e^2} - \sum_{e\in E^+}b_{i,e}=\sum_{e \in E^+} b_{i,e}\cdot\left(\left(\frac{|f|}{|b|}\right)^2-1\right).
\end{align*} 
Note that $\pi_i(b)$ only depends on $|f|$, $\sum_{e \in E^+}b_{i,e}$ and $|b|$. We can thus conclude that the actual choice of shortest paths is irrelevant as long as the total investment level of a provider remains the same.

We show in the following Lemma \ref{lem:flo}, that the cost of each path only depends on $k$, $|f(b)|$ and $|b|$. Given that $|b|=|b'|$, we can conclude that $|f(b)|=|f(b')|$.
\end{proof}

\begin{lemma}\label{lem:flo}
Given that all providers choose path strategies on shortest paths, the flow $(f_P)_{P \in \Pa}$ with
\[f_P = \frac{b_P}{|b|} \cdot |f(b)|\]
is a Wardrop flow. The corresponding price of edge $e$ for a user equals $c_e(f_e,b_e)=\frac{|f|}{|b|}$ for all $e \in E^+$.
\end{lemma}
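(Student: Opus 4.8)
The plan is to verify directly that the proposed path flow meets the Wardrop conditions of the definition, exploiting the fact that under shortest-path strategies every relevant edge ends up carrying exactly the same price. First I would record the elementary bookkeeping identity linking edge and path investments: by the definition of a path strategy $b_{i,e}=\sum_{P\ni e}b_{i,P}$, and summing over providers gives $b_e=\sum_{P\ni e}b_P$. Since by hypothesis all investment sits on shortest paths, $b_P=0$ for every $P\notin\Pa^*$; in particular every edge of $E^+$ lies on some shortest path, and $b_P>0$ forces all edges of $P$ into $E^+$.

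Next I would compute the edge flow induced by $f_P=\frac{b_P}{|b|}\,|f(b)|$. Summing over the paths through a fixed edge $e$ yields
\[ f_e=\sum_{P\ni e}f_P=\frac{|f(b)|}{|b|}\sum_{P\ni e}b_P=\frac{|f(b)|}{|b|}\,b_e, \]
so that for every $e\in E^+$ the price is $c_e(f_e,b_e)=f_e/b_e=|f(b)|/|b|$. This is precisely the second assertion of the lemma, and it is the crux of the whole argument: once all relevant edges share the single common cost $|f(b)|/|b|$, the cost of any $s$-$t$ path collapses to its number of edges times this constant.

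From this point the equilibrium condition is almost immediate. Every flow-carrying path is a shortest path all of whose edges lie in $E^+$, hence has cost exactly $k\cdot|f(b)|/|b|$; in particular all such paths share one finite cost. For an arbitrary competing path $P'$ there is a clean dichotomy: either some edge of $P'$ has zero investment, giving cost $\infty$, or all its edges are relevant, giving cost $|P'|\cdot|f(b)|/|b|\geq k\cdot|f(b)|/|b|$ because $k$ is the minimum number of edges on any $s$-$t$ path. In both cases the cost is at least that of a flow-carrying path, which is exactly the Wardrop inequality. Feasibility is routine: $f_P\geq 0$ since $b_P,|f(b)|\geq 0$, flow conservation is automatic from the path decomposition, and $\sum_P f_P=\frac{|f(b)|}{|b|}\sum_P b_P=|f(b)|$, so the construction routes the correct demand.

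The only delicate point, and the one I would flag as the main obstacle, is guaranteeing that a \emph{non}-shortest path can never undercut a shortest one; this is handled entirely by the dichotomy above (an uninvested edge forces infinite cost, and otherwise the strictly larger edge count yields strictly larger cost). I would also dispose of the degenerate case $|b|=0$ separately: then no $s$-$t$ path has finite cost, the unique Wardrop flow is the empty flow with $|f(b)|=0$, and the stated formulas hold vacuously.
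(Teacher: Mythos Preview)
Your proof is correct and follows essentially the same approach as the paper: compute $f_e=\frac{|f|}{|b|}b_e$ from the path decomposition, deduce the uniform edge cost $|f|/|b|$ on $E^+$, and then verify the Wardrop inequality by comparing path lengths. If anything, your version is slightly tidier---you argue directly via $|P'|\geq k$ rather than invoking (as the paper does) the series-parallel fact that every $s$-$t$ path in $(V,E^+)$ has length exactly $k$, and you explicitly treat the degenerate case $|b|=0$.
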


\begin{proof}
Let $k$ be the length of all shortest paths. Since $G$ is series-parallel, the graph $(V,E^+)$ is a graph with the property that all $s$-$t$ paths in this graph have length $k$. For all other $P\in\mathcal{P}$, the cost of the path equals $\infty$ by definition, since there must be an edge $e \in P$ that is not part of any shortest path, and thus without bandwidth. It remains to consider the paths $P \in \Pa^*$. First, let $P \in \Pa^*$ with $e \in E^+$ for all $e \in P$. We observe
\begin{align*}
c_e(f_e,b_e) &= \frac{f_e}{b_e} = \frac{\sum_{\tilde{P}: e \in \tilde{P}}{f_{\tilde{P}}}}{\sum_{\tilde{P} : e \in \tilde{P}}{b_{\tilde{P}}}}= {\frac{\sum_{\tilde{P}: e \in \tilde{P}}{\frac{b_{\tilde{P}}}{|b|} \cdot |f|}}{\sum_{\tilde{P}: e \in \tilde{P}}{b_{\tilde{P}}}}}=\frac{|f|}{|b|},
 \end{align*}
 which proves the second part of the theorem. We conclude that
 \begin{align*}
 \sum_{P \in \Pa}{c_e(f_e,b_e)} =  \sum_{P \in \Pa}{\frac{|f|}{|b|}} = k \frac{|f|}{|b|}.
 \end{align*}
Thus the cost of any path $P$ with $e \in E^+$ for all $e \in P$ only depends on $k$, $|f|$ and $|b|$.  Note that all paths $P$ with some $e \in (E \setminus E^+) \cap P$ have cost equal to $\infty$ by definition. We conclude that there is no path in $G$ with smaller cost than the used paths in the flow constructed above. Assuming that all providers choose path strategies on shortest paths, the constructed flow $f$ is a Wardrop equilibrium.
\end{proof}

\lemmaParallel*

\begin{proof}
Define $|b|_{G_1}=\sum_{P\in \Pa(G_1)}b_P$ and $|b|_{G_2}=\sum_{P\in \Pa(G_2)}b_P$. We will prove that the flow $f$ defined by
\[f_P=\frac{\frac{b_P}{k}}{\frac{|b|_{G_1}}{k} + \frac{|b|_{G_2}}{\ell}} \cdot |f|\]
for all $P\in \Pa(G_1)$, and 
\[f_P=\frac{\frac{b_P}{\ell}}{\frac{|b|_{G_1}}{k} + \frac{|b|_{G_2}}{\ell}} \cdot |f|\]
for all $P\in \Pa(G_2)$, is a Wardrop flow.

We will show the proof only for paths $P\in\Pa(G_1)$. For all $P\in\Pa(G_2)$, the proof is analogous. 
Notice that for all $P\in \Pa(G_1)\setminus \Pa^*(G_1)$, the costs are infinity as there is at least one edge without investment. For all $P\in\Pa^*(G_1)$, we have
\begin{align*}
\sum_{e \in P}\frac{f_e}{b_e} &=\sum_{e \in P}\frac{\sum_{\tilde{P}: e \in \tilde{P}}{f_{\tilde{P}}}}{\sum_{\tilde{P} : e \in \tilde{P}}{b_{\tilde{P}}}}\\
&= \sum_{e \in P}{\frac{\sum_{\tilde{P}: e \in \tilde{P}}{\frac{\frac{b_{\tilde{P}}}{k}}{\frac{|b|_{G_1}}{k} + \frac{|b|_{G_2}}{\ell}} \cdot |f|}}{\sum_{\tilde{P}: e \in \tilde{P}}{b_{\tilde{P}}}}}\\
&= \sum_{e \in P}{\frac{\frac{|f|}{k}}{\frac{|b|_{G_1}}{k} + \frac{|b|_{G_2}}{\ell}}}\\
&=\frac{|f|}{\frac{|b|_{G_1}}{k} + \frac{|b|_{G_2}}{\ell}}.
 \end{align*}
Thus the cost of all used paths are the same, and hence the constructed flow is a Wardrop flow.

Now, suppose that there is a provider $i\in N$ with $b_{i,P_2}>0$ for some $P_2\in\Pa(G_2)$. We show that provider $i$ has a demand preserving better response. First, note that the total profit of provider $i$ equals
\begin{align*}
\pi_i(b)&=  \sum_{e \in E^+} b_{i,e} \cdot\frac{f_e^2}{b_e^2} - \sum_{e\in E^+}b_{i,e}\\
&=  \sum_{P \in \Pa : b_{i,P}>0}\sum_{e \in P} b_{i,P} \cdot\frac{f_e^2}{b_e^2} - \sum_{e\in E^+}b_{i,e}\\
&=  \sum_{P \in \Pa : b_{i,P}>0}b_{i,P} \sum_{e \in P} \left(\frac{\frac{|f|}{|P|}}{\frac{|b|_{G_1}}{k} + \frac{|b|_{G_2}}{\ell}}\right)^2 - \sum_{e\in E^+}b_{i,e}\\
&=  \sum_{P \in \Pa : b_{i,P}>0}b_{i,P} \frac{1}{|P|}\left(\frac{|f|}{\frac{|b|_{G_1}}{k} + \frac{|b|_{G_2}}{\ell}}\right)^2 - \sum_{e\in E^+}b_{i,e}\\
&=  \left(\frac{|f|}{\frac{|b|_{G_1}}{k} + \frac{|b|_{G_2}}{\ell}}\right)^2 \sum_{P \in \Pa : b_{i,P}>0} \frac{b_{i,P}}{|P|} - \sum_{e\in E^+}b_{i,e}
\end{align*}
Next, we show that $i$ has a demand preserving better response by shifting some investment from $P_2$ to a path $P_1 \in \Pa^*(G_1)$. Let $b'_{i,P_1}=b_{i,P_1}+\frac{k}{\ell}\cdot b_{i,P_2}$ on path $P_1$ and $b'_{i,P_2}=0$ on $P_2$. We observe that $|f(b)| = |f(b')|$, since
\begin{align*}
c_b(|f|) = \frac{|f|}{\frac{|b|_{G_1}}{k} + \frac{|b|_{G_2}}{l}} = \frac{|f|}{\frac{|b|_{G_1}+\frac{k}{\ell}b_{i,P_2}}{k} + \frac{|b|_{G_2}-b_{i,P_2}}{l}} = c_{b'}(|f|)\;.
\end{align*}
The total difference in the profit of provider $i$ is
\begin{align*}
& \pi_i(b'_i,b_{-i}) - \pi_i(b_i,b_{-i})\\
&=   \left(\frac{|f|}{\frac{|b|_{G_1}}{k} + \frac{|b|_{G_2}}{\ell}}\right)^2 \left(\frac{k}{\ell}\frac{b_{i,P_2}}{k} - \frac{b_{i,P_2}}{\ell} \right) -  \left(k \frac{k}{\ell}b_{i,P_2}-\ell b_{i,P_2}\right)\\
&= b_{i,P_2} \left(\ell-\frac{k^2}{\ell}\right).
\end{align*}
Since $k<\ell$, this term is positive and the deviation is profitable.
\end{proof}

\lemmaSeries*
\begin{proof}
We start the proof by showing (i).
Define $|b_i|_{G_1}=\sum_{P \in \Pa^*(G_1)}b^1_{i,P}$ and $|b_i|_{G_2}=\sum_{P \in \Pa^*(G_2)}b^2_{i,P}$ for all $i\in N$, and $|b|_{G_1}=\sum_{i\in N}|b_i|_{G_1}$ and $|b|_{G_2}=\sum_{i\in N}|b_i|_{G_2}$. Observe that in a subgame perfect Nash equilibrium if $|f|=0$, then either $|b|_{G_1}=0$ or $|b|_{G_2}=0$. In that case, provider $i\in N$ with $|b_i|_{G_1}\neq |b_i|_{G_2}$ has a demand preserving better response by not investing at all. So we can assume that $|f|>0$. W.l.o.g.\ suppose that $|b|_{G_1}\leq |b|_{G_2}$. We show that there exists a provider $j\in N$ that has a profitable deviation.

Since there is a provider $i\in N$ with $|b_i|_{G_1}\neq |b_i|_{G_2}$, there is a provider $j\in N$ with $|b_j|_{G_1}<|b_j|_{G_2}$ and $\frac{|b_j|_{G_1}}{|b|_{G_1}}\leq \frac{|b_j|_{G_2}}{|b|_{G_2}}$. This provider exists, because $\sum_{i\in N}\frac{|b_i|_{G_1}}{|b|_{G_1}}=\sum_{i\in N}\frac{|b_i|_{G_2}}{|b|_{G_2}}=1$, and for all providers $m\in N$ with $|b_m|_{G_1}\geq |b_m|_{G_2}$, we have $\frac{|b_m|_{G_1}}{|b|_{G_1}}\geq \frac{|b_m|_{G_2}}{|b|_{G_2}}$.

Let $k$ denote the length of a shortest $s$-$t$ path in $G_1$, and $\ell$ the length of a shortest $s$-$t$ path in $G_2$. In the following we will observe that $\pi_j(b)$ only depends on $|b_j|_{G_1}$ and $|b_j|_{G_2}$ and not on the particular investments on shortest paths.
\begin{align*}
&\pi_i(b)\\
&=  \sum_{e \in E^+} b_{i,e} \cdot\frac{f_e^2}{b_e^2} - \sum_{e\in E^+}b_{i,e}\\
&=  \sum_{e \in E^+(G_1)} b_{i,e} \cdot\frac{f_e^2}{b_e^2}+ \sum_{e \in E^+(G_2)} b_{i,e} \cdot\frac{f_e^2}{b_e^2} - \sum_{e\in E^+}b_{i,e}\\
&=  \sum_{P \in \Pa(G_1) : b^1_{i,P}>0}\sum_{e \in P} b^1_{i,P} \cdot\frac{f_e^2}{b_e^2}+ \sum_{P \in \Pa(G_2) : b^2_{i,P}>0}\sum_{e \in P} b^2_{i,P} \cdot\frac{f_e^2}{b_e^2} - \sum_{e\in E^+}b_{i,e}\\
&=  \sum_{P \in \Pa(G_1) : b^1_{i,P}>0}b^1_{i,P} \cdot\sum_{e \in P} \left(\frac{|f|}{|b|_{G_1}}\right)^2+ \sum_{P \in \Pa(G_2) : b^2_{i,P}>0}b^2_{i,P} \cdot\sum_{e \in P} \left(\frac{|f|}{|b|_{G_2}}\right)^2 - \sum_{e\in E^+}b_{i,e}\\
&=k\cdot \left(\frac{|f|}{|b|_{G_1}}\right)^2\cdot|b_i|_{G_1} + \ell\cdot \left(\frac{|f|}{|b|_{G_2}}\right)^2\cdot |b_i|_{G_2} - \left(k\cdot |b_i|_{G_1}+\ell\cdot |b_i|_{G_2}\right),
\end{align*}
where the fourth equality follows because the Wardrop flow in $G$ can be decomposed into a Wardrop flow in $G_1$ and $G_2$ as all providers are investing in shortest paths.

Assume that provider $j$ chooses a vector of investments $b'_j$ with $|b'_j|_{G_1}=|b_j|_{G_1}+\epsilon$ and $|b'_j|_{G_2}=|b_j|_{G_2}-\frac{|b|_{G_2}^2\cdot\epsilon\cdot k}{|b|_{G_1}^2\cdot \ell+|b|_{G_1}\cdot\epsilon\cdot \ell+|b|_{G_2}\cdot\epsilon\cdot k}$, where $\epsilon>0$ is sufficiently small. We observe that $|f(b)| = |f(b')|$, since
\begin{align*}
c_b(|f|) &=\sum_{e \in P\cap E(G_1)} \frac{|f|}{|b|_{G_1}}+\sum_{e \in P\cap E(G_2)}\frac{|f|}{|b|_{G_2}}=k\cdot\frac{|f|}{|b|_{G_1}}+\ell\cdot\frac{|f|}{|b|_{G_2}}\\
&=k\cdot\frac{|f|}{|b|_{G_1}+\epsilon}+\ell\cdot\frac{|f|}{|b|_{G_2}-\frac{|b|_{G_2}^2\cdot\epsilon\cdot k}{|b|_{G_1}^2\cdot \ell+|b|_{G_1}\cdot\epsilon\cdot \ell+|b|_{G_2}\cdot\epsilon\cdot k}}=c_{b'}(|f|),
\end{align*}
where the third equality follows from basic calculus.

The total difference in the profit of provider $j$ is
\begin{align*}
& \pi_j(b'_j,b_{-j}) - \pi_j(b_j,b_{-j})\\
=&k\cdot\left(\left(\frac{|f|}{|b|_{G_1}+\epsilon}\right)^2\cdot(|b_j|_{G_1}+\epsilon)-\left(\frac{|f|}{|b|_{G_1}}\right)^2\cdot |b_j|_{G_1}\right)\\
+&\ell\cdot\left(\left(\frac{|f|}{|b|_{G_2}-\frac{|b|_{G_2}^2\cdot\epsilon\cdot k}{|b|_{G_1}^2\cdot \ell+|b|_{G_1}\cdot\epsilon\cdot \ell+|b|_{G_2}\cdot\epsilon\cdot k}}\right)^2\cdot\left(|b_j|_{G_2}-\frac{|b|_{G_2}^2\cdot\epsilon\cdot k}{|b|_{G_1}^2\cdot \ell+|b|_{G_1}\cdot\epsilon\cdot \ell+|b|_{G_2}\cdot\epsilon\cdot k}\right)\right.\\
-&\left.\left(\frac{|f|}{|b|_{G_2}}\right)^2\cdot |b_j|_{G_2}\right)-\left(k\cdot \epsilon-\ell \cdot \frac{|b|_{G_2}^2\cdot\epsilon\cdot k}{|b|_{G_1}^2\cdot \ell+|b|_{G_1}\cdot\epsilon\cdot \ell+|b|_{G_2}\cdot\epsilon\cdot k}\right)
\end{align*}
and thus
\begin{align*}
&\left.\frac{\partial \left(\pi_j(b'_j,b_{-j}) - \pi_j(b_j,b_{-j})\right)}{\partial \epsilon}\right|_{\epsilon=0}\\
&=\frac{k}{|b|_{G_1}^2}\cdot\left(2\cdot|f|^2\cdot\left(\frac{|b_j|_{G_2}}{|b|_{G_2}}-\frac{|b_j|_{G_1}}{|b|_{G_1}}\right)+\left(|b|_{G_2}^2-|b|_{G_1}^2\right)\right)>0,
\end{align*}
where the inequality follows because either $|b|_{G_1}=|b|_{G_2}$ and then $\frac{|b_j|_{G_1}}{|b|_{G_1}}<\frac{|b_j|_{G_2}}{|b|_{G_2}}$, or $|b|_{G_1}<|b|_{G_2}$ and $\frac{|b_j|_{G_1}}{|b|_{G_1}}\leq \frac{|b_j|_{G_2}}{|b|_{G_2}}$,
which completes the proof of (i).

In order to show (ii) we revisit the proof of (i) and use the additional property that $|b|_{G_1} - |b_i|_{G_1} = |b|_{G_2} - |b_i|_{G_2} \eqqcolon \alpha$. From (i) we conclude 
\begin{align*}
\pi_i(b) &= k\cdot \left(\frac{|f|}{|b|_{G_1}}\right)^2\cdot|b_i|_{G_1} + \ell\cdot \left(\frac{|f|}{|b|_{G_2}}\right)^2\cdot |b_i|_{G_2} - \left(k\cdot |b_i|_{G_1}+\ell\cdot |b_i|_{G_2}\right)\\
&= |f|^2 \left(\frac{k}{(\alpha + |b_i|_{G_1})^2}|b_i|_{G_1} + \frac{\ell}{(\alpha + |b_i|_{G_2})^2}|b_i|_{G_2}\right) - \left(k\cdot |b_i|_{G_1}+\ell\cdot |b_i|_{G_2}\right).
\end{align*}
In order to construct a path strategy on shortest paths, we choose a shortest path strategy $b_i'$ with $|b_i'|_{G_1} = |b_i'|_{G_2} = \beta$ for \[\beta = \frac{k |b_i|_{G_1} \left(\alpha + |b_i|_{G_2}\right) + |b_i|_{G_2} \ell \left(\alpha + |b_i|_{G_1}\right)}{k \left(\alpha + |b_i|_{G_2}\right) + \ell \left(\alpha + |b_i|_{G_1}\right)}.\] We can check easily that $\beta$ is exactly chosen in a way that
\[\frac{k}{\alpha + |b_i|_{G_1}} + \frac{\ell}{\alpha + |b_i|_{G_2}}=\frac{k+\ell}{\alpha + \beta},\]
i.e., such that $c_b(|f|) = c_{b_{-i},b_i'}(|f|)$, thus the switch of provider $i$ from $b_i$ to $b_i'$ is demand preserving. It remains to show that the profit is not decreasing. We observe
\begin{align*}
&\pi_i(b_{-i},b_i') - \pi_i(b)\\
&= |f|^2 \left(\frac{k+\ell}{(\alpha + \beta)^2}\beta  - \frac{k}{(\alpha + |b_i|_{G_1})^2}|b_i|_{G_1} - \frac{\ell}{(\alpha + |b_i|_{G_2})^2}|b_i|_{G_2}\right)\\
& \quad + \left(k\cdot |b_i|_{G_1}+\ell\cdot |b_i|_{G_2}\right) - \left(k + \ell\right) \beta\\
&= |f|^2 \frac{k \ell \alpha (|b_i|_{G_1} - |b_i|_{G_2})^2}{(k + \ell)^2 (\alpha + |b_i|_{G_1})^2 (\alpha + |b_i|_{G_2})^2} + \frac{k \ell (|b_i|_{G_1} - |b_i|_{G_2})^2}{|b_i|_{G_1} \ell + |b_i|_{G_2} k + \alpha (k + \ell)} \geq 0,
\end{align*} 
which completes the proof of (ii).
\end{proof}

\section*{Homogeneous Users}
\lemmaprofix*
\begin{proof}
Suppose that $|b|\leq\frac{d\cdot k}{R}$. Then the following equivalence follows $u(\delta(|b|))=k\cdot\frac{\delta(|b|)}{|b|}\Leftrightarrow R=k\cdot\frac{\delta(|b|)}{|b|}\Leftrightarrow \delta(|b|)=\frac{R\cdot |b|}{k}$. This implies that $\delta(|b|)=\min\left\{\frac{R\cdot |b|}{k},d\right\}$.

Using the definition of a provider's profit, we also get that $\pi_i(b_i,b_{-i})=k\cdot\left(\frac{\delta(|b|)^2}{|b|^2}-1\right)\cdot b_i=\min\left\{k\cdot\left(\frac{R^2}{k^2}-1\right)\cdot b_i,k\cdot\left(\frac{d^2}{|b|^2}-1\right)\cdot b_i\right\}$.
\end{proof}

\FixedExistenceOfNash*
\begin{proof}
By Lemma \ref{lem:prof},
\[
\pi_i(b_i,b_{-i})=\begin{cases}
k\cdot\left(\frac{R^2}{k^2}-1\right)\cdot b_i&\text{ if }|b|\leq\frac{d\cdot k}{R},\\
k\cdot\left(\frac{d^2}{|b|^2}-1\right)\cdot b_i&\text{ otherwise.}
\end{cases}
\]
We distinguish the following three cases: $R< k$, $R=k$, and $R>k$.

Suppose that $R< k$. Then $\frac{R^2}{k^2}< 1$ and thus $\pi_i(b)<0$ for all $i\in N$ with $b_i>0$ and all $b\in\mathbb{R}_+^N$ . Hence $b_i = 0$ for all $i\in N$ is the unique subgame perfect Nash equilibrium.

Now, suppose that $R=k$. Then $\frac{R^2}{k^2}=1$ and thus $\pi_i(b)=0$ for all $i\in N$ and all $b\in\mathbb{R}_+^N$ with $|b|\leq\frac{d\cdot k}{R}$, and $\pi_i(b)<0$ for all $i\in N$ with $b_i>0$ and all $b\in\mathbb{R}_+^N$ with $|b|>\frac{d\cdot k}{R}$. Hence all $b\in\mathbb{R}_+^N$ with $|b|\leq\frac{d\cdot k}{R}$ are a subgame perfect Nash equilibrium.

Finally, suppose that $R>k$. Assume that $b\in\mathbb{R}_+^N$ is a subgame perfect Nash equilibrium, we will derive some properties. First, since $\frac{R^2}{k^2}>1$, there cannot be a subgame perfect Nash equilibrium with $|b|<\frac{d\cdot k}{R}$, as the profit function of all $i\in N$ is a strictly increasing linear function in that region, and thus we can conclude that $|b|\geq\frac{d\cdot k}{R}$. Second, since $\pi_i(b)\geq 0$ for all $i\in N$, we must have $|b|\leq d$. In particular, this implies that for all $i\in N$ with $b_i>0$, $\pi_i(b)>0$ as by slightly decreasing the investment, the provider would gain a strictly positive profit. But then $\pi_i(b)>0$ for all $i\in N$, as $\left(\frac{d^2}{|b|^2}-1\right)>0$ and a provider without investment could gain a strictly positive profit by investing an arbitrary small, but positive amount. This implies that it is sufficient to check the first order conditions of all providers $i\in N$. Given the kink in the profit function at $|b|=\frac{d\cdot k}{R}$, the first order condition is given by $d^2\cdot\frac{|b|-2b_i}{|b|^3}\leq 1$ for all $i\in N$, where the inequality is an equality if $|b|>\frac{d\cdot k}{R}$. We now consider two subcases: either  $d^2\cdot\frac{|b|-2b_i}{|b|^3}< 1$ for some $i\in N$, or $d^2\cdot\frac{|b|-2b_i}{|b|^3}=1$ for all $i\in N$.

First, suppose that $d^2\cdot\frac{|b|-2b_i}{|b|^3}< 1$ for some $i\in N$. Then by the first order conditions, $|b|=\frac{d\cdot k}{R}$ and $b_i\geq d\cdot\frac{k\cdot(R^2-k^2)}{2R^3}$ for all $i\in N$. Summing over all $i\in N$, yields $\frac{k^2}{R^2}>\frac{n-2}{n}$. In particular, $b_i=\frac{d\cdot k}{n\cdot R}$ for all $i\in N$ is a subgame perfect Nash equilibrium, because $|b|=\frac{d\cdot k}{R}$ and $d^2\cdot\frac{|b|-2b_i}{|b|^3}=d^2\cdot\frac{\frac{n-2}{n}\cdot \left(\frac{d\cdot k}{R}\right)}{\left(\frac{d\cdot k}{R}\right)^3}<1$.

Second, suppose that $d^2\cdot\frac{|b|-2b_i}{|b|^3}= 1$ for all $i\in N$. Then by rewriting the first order condition, we have $b_i=d\cdot\sqrt{\frac{n-2}{n^3}}$ for all $i\in N$. This is a subgame perfect Nash equilibrium if $|b|\geq\frac{d\cdot k}{R}$, or equivalently $\sqrt{\frac{n-2}{n}}\geq\frac{k}{R}$. Notice that in this case the subgame perfect Nash equilibrium is unique.

\end{proof}

\section*{Heterogeneous Users }

\begin{lemma}
\label{lem:dem}
\label{lem:pro}
 Let $k$ denote the number of edges on a shortest $s$-$t$ path in $G$. For $u(x)=\frac{1}{x^{1/\alpha}}$, where $\alpha>0$, 
  \begin{itemize}
 \item[(i)]  the demand $|f(b)|$ is given by the function
$|f(b)|=\left(\frac{|b|}{k}\right)^{\frac{\alpha}{\alpha+1}}$ and
\item[(ii)] the profit $\pi_i$ is given by $
\pi_i(b_i, b_{-i})=b_i\left((b_i+|b|_{-i})^{\frac{-2}{\alpha+1}}{k}^{\frac{-\alpha+1}{\alpha+1}}-k \right).$
 \end{itemize}
\end{lemma}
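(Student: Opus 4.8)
The plan is to reduce everything to the closed form of the per-user cost function $c_b(\cdot)$, which the preceding results show is linear, and then to solve a single scalar equation for the demand and substitute the result into the profit identity. Concretely, since by Observation \ref{obs:singleNumber} we may assume every provider invests only on shortest paths, Lemma \ref{lem:flo} applies: the price of every used edge equals $\frac{|f|}{|b|}$, and each shortest $s$-$t$ path consists of exactly $k$ edges, so a user routing a total demand of $x$ incurs cost $c_b(x) = \frac{k x}{|b|}$. (If $|b| = 0$ there is no usable path, so $|f| = 0$, which agrees with the claimed formula; hence I would assume $|b| > 0$.)

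For part (i), I would substitute this into the demand definition $|f| = \inf\{x \in \mathbb{R}_+ : u(x) \leq c_b(x)\}$. With $u(x) = x^{-1/\alpha}$, the function $u$ is strictly decreasing with $u(x) \to \infty$ as $x \to 0^+$, whereas $c_b$ is strictly increasing and vanishes at $0$; thus the set $\{x : u(x) \leq c_b(x)\}$ is a half-line $[x^*,\infty)$ whose left endpoint $x^*$ is the unique solution of $u(x) = c_b(x)$. Solving $x^{-1/\alpha} = \frac{kx}{|b|}$ gives $x^{(\alpha+1)/\alpha} = |b|/k$, i.e. $x^* = (|b|/k)^{\alpha/(\alpha+1)}$, which is exactly the claimed expression for $|f(b)|$.

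For part (ii), I would start from the profit identity established in the proof of Proposition \ref{prop:equivalence}, namely $\pi_i(b) = \sum_{e \in E^+} b_{i,e}\big((|f|/|b|)^2 - 1\big)$. Because provider $i$ invests only on shortest paths, each of length $k$, its edge investments satisfy $\sum_{e \in E^+} b_{i,e} = k\,b_i$. Substituting $|f| = (|b|/k)^{\alpha/(\alpha+1)}$ from part (i) yields $(|f|/|b|)^2 = |b|^{-2/(\alpha+1)} k^{-2\alpha/(\alpha+1)}$, and after multiplying through by $k$ and collecting exponents (using $1 - \tfrac{2\alpha}{\alpha+1} = \tfrac{1-\alpha}{\alpha+1}$) one obtains $\pi_i(b) = b_i\big(|b|^{-2/(\alpha+1)} k^{(-\alpha+1)/(\alpha+1)} - k\big)$; writing $|b| = b_i + |b|_{-i}$ gives the stated formula.

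The only real obstacle is bookkeeping rather than ideas: I must confirm that the infimum in the demand definition is attained precisely at the intersection of $u$ and $c_b$ (which follows from the strict monotonicity of both functions together with the blow-up of $u$ at $0$), and I must keep the fractional exponents straight during the final simplification. Beyond the already-established linearity of $c_b$ and the closed form of $\pi_i$, no new machinery is required.
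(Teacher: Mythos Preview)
Your proposal is correct and follows essentially the same route as the paper: for (i) you use the linear form $c_b(x)=kx/|b|$ from Lemma~\ref{lem:flo} and solve $u(x)=c_b(x)$, and for (ii) you substitute the resulting demand into the shortest-path profit identity $\pi_i(b)=k\,b_i\big((|f|/|b|)^2-1\big)$, exactly as the paper does. The only difference is that you spell out why the infimum is attained at the intersection point, which the paper leaves implicit.
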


\begin{proof}

 We  prove the two parts separately.
 \begin{itemize}
 \item[(i)] $u(x)=k\cdot\frac{x}{|b|}\Leftrightarrow\frac{1}{x^{1/\alpha}}=k\cdot\frac{x}{|b|}\Leftrightarrow x=\left(\frac{|b|}{k}\right)^{\frac{\alpha}{\alpha+1}}$.
\item[(ii)] $\pi_i(b_i,b_{-i})=k\left(|f(b_i+|b|_{-i})|^2 \frac{b_i}{(b_i+|b|_{-i})^2} - b_i\right)= b_i\left((b_i+|b|_{-i})^{\frac{-2}{\alpha+1}}{k}^{\frac{-\alpha+1}{\alpha+1}}-k \right).$
 \end{itemize}
\end{proof}

\begin{lemma}\label{lem:con}
Let $i\in N$. If there exists a provider $j\neq i$ with $b_j>0$ or $\alpha>1$, then $\pi_i(b_i,b_{-i})$ is continuous in $b_i$.
\end{lemma}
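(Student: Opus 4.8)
The plan is to start from the closed form of the profit derived in Lemma~\ref{lem:pro}(ii),
\[
\pi_i(b_i,b_{-i}) = b_i\left((b_i+|b|_{-i})^{\frac{-2}{\alpha+1}}\, k^{\frac{-\alpha+1}{\alpha+1}} - k\right),
\]
and to prove continuity in $b_i$ on $[0,\infty)$ by a case distinction that mirrors the two clauses of the hypothesis. The first observation I would make is that on the open half-line $(0,\infty)$ there is nothing to prove: the expression is built from $b_i$, a negative power of the strictly positive base $b_i+|b|_{-i}$, and a linear term, all of which are continuous there. Hence the entire argument reduces to the single point $b_i=0$, where I must check that $\lim_{b_i\to 0^+}\pi_i(b_i,b_{-i})$ equals the function value $\pi_i(0,b_{-i})=0$.

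If some provider $j\neq i$ has $b_j>0$, then $|b|_{-i}>0$ and the base $b_i+|b|_{-i}$ stays bounded below by $|b|_{-i}>0$ for all $b_i\geq 0$. Thus $(b_i+|b|_{-i})^{-2/(\alpha+1)}$ is continuous even at $b_i=0$, and $\pi_i$ is continuous as a product and sum of continuous functions; in particular the limit at $0$ is $0$. In the remaining case $|b|_{-i}=0$ with $\alpha>1$, the formula simplifies, using $1-\frac{2}{\alpha+1}=\frac{\alpha-1}{\alpha+1}$, to
\[
\pi_i(b_i,b_{-i}) = k^{\frac{1-\alpha}{\alpha+1}}\, b_i^{\frac{\alpha-1}{\alpha+1}} - k\,b_i.
\]
The key point is that the exponent $\frac{\alpha-1}{\alpha+1}$ is strictly positive exactly because $\alpha>1$, so $b_i^{(\alpha-1)/(\alpha+1)}\to 0$ as $b_i\to 0^+$; therefore $\pi_i\to 0=\pi_i(0,b_{-i})$ and continuity holds at the origin as well.

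I do not anticipate a real obstacle here; the content lies entirely in recognizing that the only candidate discontinuity is at $b_i=0$ and that the sign of $\frac{\alpha-1}{\alpha+1}$ is precisely what the hypothesis controls. The one subtlety worth spelling out is why $\pi_i(0,b_{-i})=0$ rather than the indeterminate $0\cdot\infty$ suggested by naive substitution: when $b_i=0$ and all other investments vanish there is no usable $s$-$t$ path, so the induced demand is $|f|=0$ and an idle provider earns nothing. This also clarifies the necessity of the hypothesis, since for $|b|_{-i}=0$ the same simplification shows that $\pi_i$ tends to $1$ when $\alpha=1$ and diverges when $\alpha<1$, so continuity genuinely fails outside the stated regime.
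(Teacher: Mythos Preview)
Your proof is correct and follows essentially the same route as the paper: continuity on $(0,\infty)$ is immediate from the closed form in Lemma~\ref{lem:pro}, and at $b_i=0$ you split into the case $|b|_{-i}>0$ (the bracket term has a finite limit, so the product goes to $0$) and the case $|b|_{-i}=0$, $\alpha>1$ (the positive exponent $\frac{\alpha-1}{\alpha+1}$ forces $b_i^{(\alpha-1)/(\alpha+1)}\to 0$). The additional remarks on why $\pi_i(0,b_{-i})=0$ and on the necessity of the hypothesis are a nice touch but not needed for the lemma itself.
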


\begin{proof}
By Lemma \ref{lem:pro}, the profit function is a continuous function on $(0,\infty)$. So the only possible point of discontinuity is at $0$. Notice that $\pi_i(0,b_{-i})=0$ by definition. Remains to show that $\lim\limits_{b_i\rightarrow 0} \pi_i(b_i,b_{-i})=0$.

\noindent {\em Case 1}: there exists a provider $j\neq i$ with $b_j>0$. Then as $b_i\rightarrow 0$, the quantity in the bracket converges to the constant $\frac{k^{\frac{1-\alpha}{\alpha+1}}}{(|b|_{-i})^{\frac{2}{\alpha+1}}}-k$ and as $b_i\rightarrow 0$, we have $\lim\limits_{b_i\rightarrow 0} \pi_i(b_i,b_{-i})=0$.

\noindent {\em Case 2}: $b_j=0$ for all $j\neq i$ and $\alpha>1$. Then by Lemma \ref{lem:pro}, we have
$b_i^{\frac{\alpha-1}{\alpha+1}}{k}^{\frac{-\alpha+1}{\alpha+1}}-k\cdot b_i.$
Since $\alpha>1$, $\lim\limits_{b_i\rightarrow 0} b_i^{\frac{\alpha-1}{\alpha+1}}=0$ and hence $\lim\limits_{b_i\rightarrow 0}\pi_i(b_i,b_{-i})=0$. 
\end{proof}

\theoremOne*

\begin{proof}
Assume $n=1$. By Lemma \ref{lem:pro} the payoff for one provider evaluates to
$\pi_1(b_1)=b_1\left(b_1^{\frac{-2}{\alpha+1}}k^{\frac{1-\alpha}{\alpha+1}}-k\right)$.
We consider three different cases.

\noindent {\em Case 1}: $\alpha<1$. Then $\lim\limits_{b_1\rightarrow 0}\pi_1(b_1)=\infty$ and thus there is no best response.

\noindent {\em Case 2}: $\alpha=1$. Then $\pi_1(b_1)=1-k\cdot b_1$ and and thus there is no best response.

\noindent {\em Case 3}: $\alpha>1$. By Lemma \ref{lem:con}, the profit function is continuous. Moreover, we have $\lim\limits_{b_1\rightarrow \infty}\pi_1(b_1)=-\infty$. This implies that any Nash equilibrium with positive profit for provider $1$ has to fulfill the first-order condition: 
$
\frac{\partial\pi_1(b_1)}{\partial b_1}=\frac{\alpha-1}{\alpha+1}b_1^{\frac{-2}{\alpha+1}}k^{\frac{1-\alpha}{\alpha+1}}-k=0,
$
yielding
$b_1=\frac{\left(1-\frac{2}{(\alpha+1)}\right)^{\frac{\alpha+1}{2}}}{k^\alpha}.$
Since this is the unique point satisfying the first order conditions, this must be a global maximum.
\end{proof}

\propositionInv*

\begin{proof}
If $|S|=1$, see Theorem \ref{thm:one}. Assume that $|S|\geq 2$. By Lemma \ref{lem:con}, the profit function is continuous. Moreover, $\lim\limits_{b_i\rightarrow \infty}(b_i+|b|_{-i})^{\frac{-2}{\alpha+1}}k^{\frac{1-\alpha}{\alpha+1}}=0$ and hence $\lim\limits_{b_i\rightarrow \infty}\pi_i(b_i,b_{-i})=-\infty$. This implies that any Nash equilibrium with positive profit for provider $i$ has to fulfill the first-order condition for all $i\in S$:
\begin{align*}
&\frac{\partial \pi_i(b_i,b_{-i})}{\partial b_i} = 0\\
\Leftrightarrow\;\;\;&\frac{k^{\frac{1-\alpha}{\alpha+1}}}{(b_i+|b|_{-i})^{\frac{2}{\alpha+1}}}\cdot\left(1-\frac{2}{(\alpha+1)}\cdot\frac{b_i}{b_i+|b|_{-i}}\right)-k=0\\
\Leftrightarrow\;\;\;&b_i=\frac{k^{\frac{1-\alpha}{\alpha+1}}\cdot |b|^{\frac{-2}{\alpha+1}}-k}{\frac{2}{\alpha+1}\cdot k^{\frac{1-\alpha}{\alpha+1}}\cdot |b|^{\frac{-\alpha-3}{\alpha+1}}}\;.
\end{align*}
The right-hand side is independent on $i$ and the same for all providers $i\in S$. Hence $b_i=b_j$ for all $i\in S$. Setting $b_i=\frac{1}{|S|}\cdot |b|$ yields $b_i=\frac{1}{|S|}\cdot\frac{\left(1-\frac{2}{(\alpha+1)\cdot |S|}\right)^{\frac{\alpha+1}{2}}}{k^\alpha}$.
\end{proof}

\propositionAllinvest*
Before we can prove Lemma \ref{lem:all}, we need the following two lemmas.

\begin{lemma}\label{lem:zer}
$b_i=0$ for all $i\in N$ is not a Nash equilibrium.
\end{lemma}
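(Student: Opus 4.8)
The plan is to refute the equilibrium property of $b=0$ by exhibiting a single provider with a strictly profitable unilateral deviation. At the all-zero profile there is no $s$-$t$ path carrying positive investment on every edge, so by the conventions on the cost functions the Wardrop flow is empty, $|f|=0$, and $\pi_i(0)=0$ for every $i\in N$ (each summand of the profit vanishes since all $b_{i,e}=0$). Hence it suffices to produce one provider who can earn a strictly positive profit, and the natural candidate deviation is a small positive investment by that provider alone.

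Concretely, I would fix an arbitrary provider $i\in N$, keep every other provider at investment $0$ so that $|b|_{-i}=0$, and let provider $i$ move to some $b_i>0$, which by Observation \ref{obs:singleNumber} I may treat as a single number representing investment in shortest paths. Substituting $|b|_{-i}=0$ into the profit formula of Lemma \ref{lem:pro} collapses it to
\[\pi_i(b_i,0)=b_i^{\frac{\alpha-1}{\alpha+1}}\,k^{\frac{1-\alpha}{\alpha+1}}-k\,b_i,\]
using $b_i\cdot b_i^{-2/(\alpha+1)}=b_i^{(\alpha-1)/(\alpha+1)}$. The remaining step is to find an admissible $b_i$ making this positive: rearranging $\pi_i(b_i,0)>0$ and comparing the exponents of $b_i$ on the two sides reduces the inequality to $b_i^{-2/(\alpha+1)}>k^{2\alpha/(\alpha+1)}$, i.e.\ (raising both positive sides to the power $\tfrac{\alpha+1}{2}$) to $b_i<k^{-\alpha}$. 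Thus any choice $b_i\in(0,k^{-\alpha})$ yields $\pi_i(b_i,0)>0=\pi_i(0)$, a profitable deviation, so $b=0$ is not a Nash equilibrium.

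I do not expect a genuine obstacle here; the argument is uniform in $\alpha>0$ and needs no case distinction, in contrast to the existence analysis of Theorem \ref{thm:one}, where the sign of $\alpha-1$ decides whether a best response exists. The only points warranting a sentence of justification are that the profit expression of Lemma \ref{lem:pro} remains valid in the degenerate sole-investor case $|b|_{-i}=0$, and that the associated demand $|f|=(b_i/k)^{\alpha/(\alpha+1)}$ from Lemma \ref{lem:dem} is indeed strictly positive for $b_i>0$, so that provider $i$ really does attract users and the positive profit is meaningful.
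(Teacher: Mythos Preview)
Your proof is correct and follows essentially the same approach as the paper: exhibit a profitable unilateral deviation by a single provider at the all-zero profile, using the sole-investor profit formula $\pi_i(b_i,0)=k^{(1-\alpha)/(\alpha+1)}b_i^{(\alpha-1)/(\alpha+1)}-k\,b_i$ and noting that this is strictly positive precisely for $b_i\in(0,k^{-\alpha})$. The paper's argument is terser but identical in substance.
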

\begin{proof}
Let $i\in N$ and assume that $b_j=0$ for all $j\neq i$. Suppose that $b_i=0$ is a best response. We derive a contradiction.
\[\pi_i(b_i,b_{-i})=k^{\frac{1-\alpha}{\alpha+1}}\cdot b_i^{\frac{\alpha-1}{\alpha+1}}-k\cdot b_i>0\Leftrightarrow b_i<\frac{1}{k^{\alpha}}.\]
Hence there always exists a $b_i>0$ so that $\pi_i(b_i,b_{-i})>0$.
\end{proof}

\begin{lemma}\label{lem:one}
$b_i>0$ for some $i\in N$ and $b_j=0$ for all $j\neq i$ is not a Nash equilibrium.
\end{lemma}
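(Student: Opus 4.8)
The plan is to show that whatever value the single active investment $b_i>0$ takes, some provider has a strictly profitable deviation; everything is governed by how $b_i$ compares with the threshold $1/k^\alpha$. First I would record the two profit expressions needed, both obtained from Lemma~\ref{lem:pro}. Since every $j\neq i$ plays $b_j=0$, provider $i$ faces $|b|_{-i}=0$ and is effectively a monopolist, so $\pi_i(b_i)=k^{\frac{1-\alpha}{\alpha+1}}b_i^{\frac{\alpha-1}{\alpha+1}}-k\,b_i$; this is exactly the function analysed in the proof of Lemma~\ref{lem:zer}, which yields the equivalence $\pi_i(b_i)>0\iff b_i<1/k^\alpha$ (and hence $\pi_i(1/k^\alpha)=0$ and $\pi_i(b_i)<0$ for $b_i>1/k^\alpha$) for every $\alpha>0$. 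The second expression is the profit that a currently idle provider $j\neq i$ (which exists because $n\geq 2$) would earn by investing some $b_j>0$, namely $\pi_j(b_j,b_{-j})=b_j\bigl((b_j+b_i)^{\frac{-2}{\alpha+1}}k^{\frac{1-\alpha}{\alpha+1}}-k\bigr)$.

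I would then split on the position of $b_i$ relative to $1/k^\alpha$. If $b_i<1/k^\alpha$, I evaluate the bracket in $\pi_j$ at $b_j=0$, where it equals $b_i^{\frac{-2}{\alpha+1}}k^{\frac{1-\alpha}{\alpha+1}}-k$; the same algebra as above shows this quantity is strictly positive precisely when $b_i<1/k^\alpha$. By continuity the bracket stays positive for all sufficiently small $b_j>0$, so for such $b_j$ we get $\pi_j(b_j,b_{-j})>0=\pi_j(0,b_{-j})$, i.e.\ provider $j$ has a profitable deviation. If instead $b_i\geq 1/k^\alpha$, then $\pi_i(b_i)\leq 0$ by the monopoly equivalence, whereas any $b_i'\in(0,1/k^\alpha)$ yields $\pi_i(b_i')>0$; thus provider $i$ strictly improves by switching to such a $b_i'$. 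In either case the profile cannot be a Nash equilibrium, which proves the lemma.

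The only genuinely delicate point is the boundary case $b_i=1/k^\alpha$, where the sole investor already makes zero profit so that deviating to $0$ brings no strict gain; the fix is to deviate not to $0$ but to an interior point $b_i'\in(0,1/k^\alpha)$, which the equivalence guarantees is strictly profitable. Beyond that, the remaining work is routine: verifying that the sign computations (dividing through by $b_i>0$ and raising to the positive power $(\alpha+1)/2$) go through uniformly for all $\alpha>0$, and noting that $n\geq 2$ supplies the extra provider $j$ used in the first case. It is worth remarking that the argument also exposes the economic intuition, namely that a rival free-rides on the favourable prices created by the single investor.
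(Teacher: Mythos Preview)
Your argument is correct and in fact a bit cleaner than the paper's. The paper proceeds by a case split on $\alpha$: for $\alpha\le 1$ it observes that $\pi_i(b_i)=k^{\frac{1-\alpha}{\alpha+1}}b_i^{\frac{\alpha-1}{\alpha+1}}-k\,b_i$ is strictly increasing as $b_i\downarrow 0$ (tending to $\infty$ for $\alpha<1$, and equal to $1-k b_i$ for $\alpha=1$), so the sole investor $i$ always has a smaller positive $b_i'$ that does better; for $\alpha>1$ it first pins down $b_i$ to the monopoly optimum via Theorem~\ref{thm:one} and then checks that $\partial\pi_j/\partial b_j\big|_{b_j=0}>0$ at that specific value, so a rival $j$ enters. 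Your proof instead splits on the position of $b_i$ relative to the threshold $1/k^\alpha$ and handles all $\alpha>0$ uniformly: the single equivalence $\pi_i(b_i)>0\iff b_i<1/k^\alpha$ (equivalently, the bracket $b_i^{-2/(\alpha+1)}k^{\frac{1-\alpha}{\alpha+1}}-k>0$) simultaneously tells you that a rival profitably enters when $b_i<1/k^\alpha$ and that $i$ profitably moves inside $(0,1/k^\alpha)$ otherwise. This avoids both the $\alpha$-case analysis and the appeal to Theorem~\ref{thm:one}; the price is that you must carry the explicit threshold computation, but you already have it from Lemma~\ref{lem:zer}. Your remark that $n\ge 2$ is needed for the entrant case is accurate and matches the paper's implicit use of a provider $j\ne i$ in its $\alpha>1$ case.
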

\begin{proof}
Let $i\in N$ and assume that $b_j=0$ for all $j\neq i$. Suppose that $b_i>0$ is a best response. We derive a contradiction. Recall that $\pi_i(b_i,b_{-i})=k^{\frac{1-\alpha}{\alpha+1}}\cdot b_i^{\frac{\alpha-1}{\alpha+1}}-k\cdot b_i$.

Case 1: $\alpha<1$. Then $\lim\limits_{b_i\rightarrow 0}\pi_i(b_i,b_{-i})=\infty$ and hence $0<b'_i<b_i$ is a better response.

Case 2: $\alpha=1$. Then $\pi_i(b_i,b_{-i})=1-k\cdot b_i$ and hence $0<b'_i<b_i$ is a better response.

Case 3: $\alpha>1$. By Theorem \ref{thm:one}, $b_i=\frac{\left(1-\frac{2}{(\alpha+1)}\right)^{\frac{\alpha+1}{2}}}{k^\alpha}$. Since the profit function of provider $j\neq i$ is continuous by Lemma \ref{lem:con}, it is sufficient to show that $\frac{\partial \pi_j(b_j,b_{-j})}{\partial b_j}>0$ at $b_j=0$. Hence $\frac{k^{\frac{1-\alpha}{\alpha+1}}}{b_i^{\frac{2}{\alpha+1}}}-k=k\cdot\left(1-\frac{2}{(\alpha+1)}\right)^{-1}-k>0$.
\end{proof}

\begin{proof}[Proof of Lemma \ref{lem:all}]
If $|S|=0$, see Lemma \ref{lem:zer}. If $|S|=1$, see Lemma \ref{lem:one}. Suppose that $|S|\geq 2$. By Lemma \ref{lem:inv}, $|b|=\frac{\left(1-\frac{2}{(\alpha+1)\cdot |S|}\right)^{\frac{\alpha+1}{2}}}{k^\alpha}$.
Since the profit function of provider $i\in N\setminus S$ is continuous by Lemma \ref{lem:con}, it is sufficient to show that $\frac{\partial \pi_i(b_i,b_{-i})}{\partial b_i}>0$ at $b_i=0$. Evaluating $\frac{\partial \pi_i(b_i,b_{-i})}{\partial b_i}$ at $b_i=0$ yields $\frac{k^{\frac{1-\alpha}{\alpha+1}}}{|b|^{\frac{2}{\alpha+1}}}-k$. Plugging in $|b|$ yields $k\cdot\left(1-\frac{2}{(\alpha+1)\cdot |S|}\right)^{-1}-k = l$, which is clearly positive since $\left(1-\frac{2}{(\alpha+1)\cdot |S|}\right)^{-1}>1$.
\end{proof}

\PoAElastic*
\begin{proof}
We begin the proof by deriving a closed form expression for the social welfare.
\[SW(b)=\int_0^{|f(b)|}u(x)\;dx-k |b|=\int_0^{\left(\frac{|b|}{k}\right)^{\frac{\alpha}{\alpha+1}}}x^{-\frac{1}{\alpha}}\;dx-k |b|=\frac{\alpha}{\alpha-1}\left(\frac{|b|}{k}\right)^{\frac{\alpha}{\alpha+1}}-k |b|,\] where we used $|f(b)|=\left(\frac{|b|}{k}\right)^{\frac{\alpha}{\alpha+1}}$ which is proven in Lemma \ref{lem:dem} in the appendix.

In the following, we can compute the social optimum. Note that the social welfare function is continuous in $|b|$ and $\lim_{b\rightarrow\infty}SW(b)=-\infty$. Hence the first order conditions, if unique, are sufficient to ensure a global maximum,
$
\frac{\alpha}{\alpha+1}|b|^{\frac{-2}{\alpha+1}}k^{\frac{1-\alpha}{\alpha+1}}-k=0\;
\Leftrightarrow\; |b|=\left(\frac{\alpha}{\alpha+1}\right)^{\frac{\alpha+1}{2}}k^{-\alpha}.$
\noindent Thus, the optimal social welfare is

$\frac{\alpha}{\alpha-1}\left(\frac{\left(\frac{\alpha}{\alpha+1}\right)^{\frac{\alpha+1}{2}}k^{-\alpha}}{k}\right)^{\frac{\alpha-1}{\alpha+1}}-k\left(\frac{\alpha}{\alpha+1}\right)^{\frac{\alpha+1}{2}}k^{-\alpha}=\left(\frac{\alpha}{\alpha+1}\right)^{\frac{\alpha-1}{2}}k^{1-\alpha}\frac{2\alpha}{(\alpha-1)(\alpha+1)}.
$

\noindent  By Theorem \ref{thm:one}, the equilibrium social welfare is

$
\frac{\alpha}{\alpha-1}\left(\frac{\frac{\left(1-\frac{2}{(\alpha+1)\cdot |N|}\right)^{\frac{\alpha+1}{2}}}{k^\alpha}}{k}\right)^{\frac{\alpha-1}{\alpha+1}}-k \frac{\left(1-\frac{2}{(\alpha+1)\cdot |N|}\right)^{\frac{\alpha+1}{2}}}{k^\alpha}
=\left(1-\frac{2}{(\alpha+1)n}\right)^{\frac{\alpha-1}{2}}k^{1-\alpha}\left(\frac{1}{\alpha-1}+\frac{2}{(\alpha+1)n}\right).
$

\noindent  Thus,
$
PoA(\mathcal{G})=\frac{\left(\frac{\alpha}{\alpha+1}\right)^{\frac{\alpha-1}{2}}k^{1-\alpha}\frac{2\alpha}{(\alpha-1)(\alpha+1)}}{\left(1-\frac{2}{(\alpha+1)n}\right)^{\frac{\alpha-1}{2}}k^{1-\alpha}\left(\frac{1}{\alpha-1}+\frac{2}{(\alpha+1)n}\right)}
=\left(\frac{\alpha n}{n(\alpha+1)-2}\right)^{\frac{\alpha-1}{2}}\frac{2\alpha n}{n(\alpha+1)+2(\alpha-1)}.
$

\noindent  To prove the last inequality, observe that if $n=1$,
$PoA(\mathcal{G})=\frac{2\alpha\left(\frac{\alpha}{\alpha-1}\right)^{\frac{\alpha-1}{2}}}{3\alpha-1}<2\sqrt{\frac{1}{e}}.$

\noindent  For $n\geq2$, we have
$\frac{\partial PoA(\mathcal{G})}{\partial n}=\frac{2(\alpha^2-1)(n-1)\left(\frac{\alpha n}{n(\alpha+1)-2}\right)^{\frac{\alpha+1}{2}}}{n(n(\alpha+1)+2(\alpha-1))^2}\geq0.$

\noindent As $n\rightarrow\infty$, $PoA(\mathcal{G})\rightarrow 2\left(\frac{\alpha}{\alpha+1}\right)^{\frac{\alpha-1}{2}}\frac{\alpha}{\alpha+1}=2\sqrt{\left(1-\frac{1}{\alpha+1}\right)^{\alpha+1}}\leq 2\sqrt{\frac{1}{e}}.$
\end{proof}

\theoremPPOA*

For the proof of the theorem, the following simple lemma on the providers' profit in equilibrium is crucial. 

\begin{lemma}\label{lem:pro2}
If $b\in\mathcal{E}(\mathcal{G})$, then 
$\sum_{i\in N}\pi_i(b)=\left(1-\frac{2}{(\alpha+1)\cdot |N|}\right)^{\frac{\alpha-1}{2}}k^{1-\alpha}\frac{2}{(\alpha+1)n}.$
\end{lemma}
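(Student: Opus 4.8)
The plan is to invoke the explicit equilibrium description from Theorem \ref{thm:one} and substitute it into the profit formula from Lemma \ref{lem:pro}(ii); no new machinery is needed, so the argument is essentially a substitution. Since $b\in\mathcal{E}(\mathcal{G})$ and the statement concerns $n\geq 2$, Theorem \ref{thm:one} tells me that the equilibrium is unique and symmetric, with $b_i=\frac{1}{n}\cdot\frac{\left(1-\frac{2}{(\alpha+1)n}\right)^{\frac{\alpha+1}{2}}}{k^\alpha}$ for every $i\in N$. Summing over $i$, the total investment is $|b|=\left(1-\frac{2}{(\alpha+1)n}\right)^{\frac{\alpha+1}{2}}k^{-\alpha}$, and this single number is all I will need to carry through the computation.

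The key simplification I would exploit is that $b_i+|b|_{-i}=|b|$ for every provider, so the bracketed factor in Lemma \ref{lem:pro}(ii) is the same constant for all $i$. Writing $\pi_i(b)=b_i\bigl(|b|^{-\frac{2}{\alpha+1}}k^{\frac{1-\alpha}{\alpha+1}}-k\bigr)$ and summing, using $\sum_{i\in N}b_i=|b|$, gives
\[
\sum_{i\in N}\pi_i(b)=|b|^{\frac{\alpha-1}{\alpha+1}}k^{\frac{1-\alpha}{\alpha+1}}-k\,|b|.
\]
It then only remains to insert the equilibrium value of $|b|$ and simplify.

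To finish, I would set $A:=1-\frac{2}{(\alpha+1)n}$, so that $|b|=A^{\frac{\alpha+1}{2}}k^{-\alpha}$, and track the exponents in each of the two summands. In the first term the exponent of $k$ works out to $\frac{-\alpha(\alpha-1)+(1-\alpha)}{\alpha+1}=1-\alpha$, making it equal to $A^{\frac{\alpha-1}{2}}k^{1-\alpha}$, while the second term equals $A^{\frac{\alpha+1}{2}}k^{1-\alpha}$. Factoring out $A^{\frac{\alpha-1}{2}}k^{1-\alpha}$ leaves $1-A=\frac{2}{(\alpha+1)n}$, which reproduces exactly the claimed expression. The argument is entirely routine; the only point requiring a little care is checking that the $k$-exponents of both summands collapse to $1-\alpha$, since that coincidence is precisely what allows the clean factorization and hence the closed form.
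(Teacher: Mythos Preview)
Your proof is correct and follows essentially the same route as the paper: invoke Theorem~\ref{thm:one} to obtain $|b|$, sum the profit formula from Lemma~\ref{lem:pro}(ii) to get $\sum_{i\in N}\pi_i(b)=|b|^{\frac{\alpha-1}{\alpha+1}}k^{\frac{1-\alpha}{\alpha+1}}-k|b|$, and then substitute and simplify. The only difference is that you spell out the exponent bookkeeping in more detail; note also that the restriction to $n\geq 2$ is not actually needed, as the same substitution goes through for $n=1$ (with $\alpha>1$) using the monopoly equilibrium from Theorem~\ref{thm:one}.
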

\begin{proof}
By Theorem \ref{thm:one},
$|b|=\frac{\left(1-\frac{2}{(\alpha+1)\cdot |N|}\right)^{\frac{\alpha+1}{2}}}{k^\alpha}.$
By Lemma \ref{lem:pro},
$\sum_{i\in N}\pi_i(b)=|b|^{\frac{\alpha-1}{\alpha+1}}k^{\frac{1-\alpha}{\alpha+1}}-|b|\cdot k=\left(1-\frac{2}{(\alpha+1)\cdot |N|}\right)^{\frac{\alpha-1}{2}}k^{1-\alpha}\frac{2}{(\alpha+1)n}.$
\end{proof}
\begin{proof}[Proof of Theorem \ref{thm:ppoa2}]
We consider three different cases.
\noindent {\em Case 1}: $\alpha<1$. Since $\sup\limits_b \sum_{i\in N}\pi_i(b)\rightarrow\infty$ as $|b|\rightarrow 0$ and $\min\limits_{b\in \mathcal{E}(\mathcal{G})}\sum_{i\in N}\pi_i(b)<\infty$, we conclude the result.

\noindent {\em Case 2}: $\alpha=1$. Since $\sup\limits_b \sum_{i\in N}\pi_i(b)<1$ for all $|b|\geq 0$, we have by Lemma \ref{lem:pro2} that $PPoA(\mathcal{G})=n.$

\noindent {\em Case 3}: $\alpha>1$. By Lemma \ref{lem:pro2},
$PPoA(\mathcal{G})=\frac{\left(1-\frac{2}{(\alpha+1)}\right)^{\frac{\alpha-1}{2}}k^{1-\alpha}\frac{2}{(\alpha+1)}}{\left(1-\frac{2}{(\alpha+1)\cdot |N|}\right)^{\frac{\alpha-1}{2}}k^{1-\alpha}\frac{2}{(\alpha+1)n}}=n\left(\frac{(\alpha+1)n-2}{n(\alpha-1)}\right)^{\frac{1-\alpha}{2}}.$
\end{proof}
\end{document}